\newtheorem{theorem}{Theorem}
\newtheorem{remark}{Remark}
\newtheorem{lemma}{Lemma}
\newtheorem{assumption}{Assumption}
\newtheorem{corollary}{Corollary}
\theoremstyle{definition}
\DeclareMathOperator{\tr}{Tr}
\newcommand{\RR}{\mathbb{R}}
\newcommand{\EE}{\mathbb{E}}
\newcommand{\cP}{\mathcal{P}}
\newcommand{\cX}{\mathcal{X}}
\newcommand{\cS}{\mathcal{S}}
\newcommand{\hc}{\hat{c}}
\newcommand{\hs}{\hat{s}}
\newcommand{\bx}{\mathbf{x}}
\newcommand{\bz}{\bm{z}}
\newcommand{\bb}{\bm{b}}
\newcommand{\bc}{\bm{c}}
\newcommand{\by}{\bm{y}}
\newcommand{\hot}{\textrm{hot}}
\newcommand{\xobs}{\{\bx_i\}}
\newcommand{\bi}{\begin{enumerate}}
\newcommand{\ei}{\end{enumerate}}
\title{Correcting Convexity Bias in Function and Functional Estimate}
\author[1]{Chao Ma}
\author[1]{Lexing Ying}
\affil[1]{Department of Mathematics, Stanford University}
\date{\today}
\begin{document}

\maketitle 

\begin{abstract}
A general framework with a series of different methods is proposed to improve the estimate of convex function (or functional) values when only noisy observations of the true input are available. Technically, our methods catch the bias introduced by the convexity and remove this bias from a baseline estimate. Theoretical analysis are conducted to show that the proposed methods can strictly reduce the expected estimate error under mild conditions. When applied, the methods require no specific knowledge about the problem except the convexity and the evaluation of the function. Therefore, they can serve as off-the-shelf tools to obtain good estimate for a wide range of problems, including optimization problems with random objective functions or constraints, and functionals of probability distributions such as the entropy and the Wasserstein distance. Numerical experiments on a wide variety of problems show that our methods can significantly improve the quality of the estimate compared with the baseline method. 
\end{abstract}

\section{Introduction}
In this paper, we study the problem of estimating function/functional values when uncertainty exists on the input value. Specifically, let $F$ be a function or functional, and $\Omega$ be its input domain. We want to estimate $F(\bx^*)$ for some $\bx^*\in\Omega$, while only having access to a set of observations sampled from a probability distribution $\mu$ on $\Omega$ with $\EE_{\mu}\bx=\bx^*$. Let $\bx_1, \bx_2, ..., \bx_n$  be the observations. By the law of large numbers, the average of the observations, $\bar\bx:=(\bx_1+\cdots+\bx_n)/n$, is close to $\bx$ when $n$ is large. Therefore, a straightforward estimate of $F(\bx^*)$ is to use $F(\bar\bx)$. However, since $F$ is not necessarily linear, biases may exist when $F(\bar\bx)$ is used as an estimate. For example, when $F$ is convex, the Jensen's inequality 
\begin{equation*}
    F(\bx^*) = F(\EE\bar\bx) \leq \EE F(\bar\bx)
\end{equation*}
reveals positive bias when $F(\bar\bx)$ is used to estimate $F(\bx^*)$. Taking into consideration the special properties of $F$ (such as convexity/concavity), it is possible to find estimate methods that have less bias and outperform the straightforward estimate using the sample average.

The function/functional value estimate problem appears in many applications, such as the estimate of functionals of probability distributions (e.g. expectation, entropy, distance etc) or the estimate of optimal value of optimization problems with noisy constraints. Some of these problems have been under investigation for a long time, with many methods proposed. For instance, the James-Stein estimator of the expectation of Gaussian distributions~\cite{stein1956variate,james1992estimation}, the estimate of entropy~\cite{miller1955note,zahl1977jackknifing,paninski2003estimation,grassberger2003entropy,vu2007coverage,valiant2013estimating,han2020optimal}, mutual information~\cite{paninski2003estimation,kraskov2004estimating,gao2015efficient,marek2008estimation}, other functionals of probability distributions~\cite{wolpert1995estimating,jiao2015minimax,acharya2016estimating,han2020minimax}, etc. Besides, the estimate of the minimizers of quadratic functions has also been studied in recent works~\cite{etter2020operator,etter2021operator}. These works usually study a specific class of problems and propose methods that perform well on these problems. Besides methods, related theoretical analysis are also rich in the literature~\cite{antos2001convergence,jiao2015minimax,wu2016minimax,bu2018estimation}. The works mentioned here are not meant to be comprehensive. For more detailed discussion of the literature, readers can refer to the reviews in~\cite{verdu2019empirical,paninski2003estimation,jiao2015minimax}

In this work, instead, we propose a general framework that can improve the estimate for all convex or concave functions/functionals. Our methods produce improved estimate for $F(\bx^*)$ by estimating the bias introduced by convexity/concavity using noisy observations and removing that bias from the naive estimate $F(\bar\bx)$. Concretely, we either {\bf shift} $F(\bar\bx)$ by an appropriate amount $c$ and use $F(\bar\bx)+c$ as a new estimate, or {\bf scale} $F(\bar\bx)$ by a factor $s$ and use $sF(\bar\bx)$ as a new estimate. These ``debiasing quantities'' $c$ and $s$ are derived to minimize the square error of the estimate, i.e.
\begin{equation*}
\EE\big(F(\bar\bx)+c-F(\bx^*)\big)^2\ \ \textrm{or}\ \ \EE\big(sF(\bar\bx)-F(\bx^*)\big)^2.
\end{equation*}
The so derived debiasing quantities $c$ and $s$ are in population sense, i.e. they inevitably depend on unknown quantities such as $\bx^*$ and $\EE F(\bar\bx)$. With only the observations $\bx_1,\cdots\bx_n$ in hand, we use bootstrap to approximate these population debiasing quantities and obtain ``empirical'' debiasing quantities $\hat{c}$ and $\hat{s}$ (See Section~\ref{sec:general} for details). Theoretical analysis are conducted to show that the $\hat{c}$ and $\hat{s}$ obtained using bootstrap appropriately can indeed reduce the square error of the estimate. In the following is an informal statement of our main theorems~\ref{thm:main_shifting} and~\ref{thm:main_scaling}.

\begin{theorem}{(Informal main theorem)}
Let $F$ be a convex function on $\Omega$ and $\mu$ be a probability distribution on $\Omega$ with $\bx^*=\EE_\mu \bx$. Let $\hat{c}$ and $\hat{s}$ be the shifting and scaling debiasing quantities computed using samples $\bx_1,\cdots\bx_n$ drawn from $\mu$. Under some assumptions on $F$, $\mu$, and the bootstrap method, when $n$ is sufficiently large, we have 
\begin{equation*}
\EE\big(F(\bar\bx)+\hat{c}-F(\bx^*)\big)^2<\EE\big(F(\bar\bx)-F(\bx^*)\big)^2,
\end{equation*}
and
\begin{equation*}
\EE\big(\hat{s}F(\bar\bx)-F(\bx^*)\big)^2<\EE\big(F(\bar\bx)-F(\bx^*)\big)^2.
\end{equation*}
\end{theorem}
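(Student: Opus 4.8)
The plan is to solve the population versions of the two least-squares problems exactly and then treat $\hat c$ and $\hat s$ as small perturbations of the population optima. The map $c\mapsto\EE\big(F(\bar\bx)+c-F(\bx^*)\big)^2$ is a quadratic minimized at $c^\ast=F(\bx^*)-\EE F(\bar\bx)$; convexity of $F$ together with Jensen's inequality gives $c^\ast\le 0$, strictly so once $\mu$ is nondegenerate and $F$ is strictly convex in the relevant directions. Substituting $c^\ast$ back, the optimal MSE equals $\mathrm{Var}\big(F(\bar\bx)\big)$, undercutting the baseline MSE $=\mathrm{Var}\big(F(\bar\bx)\big)+(c^\ast)^2$ by exactly $(c^\ast)^2$. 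The scaling problem is the analogous one-parameter fit through the origin, with $s^\ast=F(\bx^*)\,\EE F(\bar\bx)/\EE F(\bar\bx)^2$ and optimal MSE $F(\bx^*)^2\big(1-(\EE F(\bar\bx))^2/\EE F(\bar\bx)^2\big)$, which again beats the baseline by a strictly positive explicit amount under the same nondegeneracy.

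Next I would pin down orders of magnitude. Taylor-expanding $F$ about $\bx^*$ and using moment bounds for $\bar\bx-\bx^*=O_P(n^{-1/2})$ gives $c^\ast=-\tfrac1{2n}\tr\!\big(\nabla^2F(\bx^*)\Sigma\big)+O(n^{-2})$ with $\Sigma=\mathrm{Cov}_\mu(\bx)$, so an assumption like $\tr(\nabla^2F(\bx^*)\Sigma)>0$ forces $(c^\ast)^2=\Theta(n^{-2})$, while $\mathrm{Var}(F(\bar\bx))=\Theta(n^{-1})$ whenever $\nabla F(\bx^*)\neq0$ (the degenerate case is separate and easier). Thus the benefit to be certified is of exact order $n^{-2}$, a second-order effect sitting on top of a first-order baseline error, which is what makes the analysis delicate; the same expansion controls $s^\ast-1$ and the scaling gap.

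For the bootstrap, write $\tilde e:=F(\bar\bx)-\EE F(\bar\bx)$ (mean zero) and $\delta:=\hat c-c^\ast$; then the exact identity $\EE\big(F(\bar\bx)+\hat c-F(\bx^*)\big)^2=\mathrm{Var}(F(\bar\bx))+\EE\delta^2+2\,\EE[\tilde e\,\delta]$ shows the shifting claim is equivalent to $\EE\delta^2+2\EE[\tilde e\,\delta]<(c^\ast)^2$. Since $\hat c$ and $c^\ast$ are the same functional (distribution $\mapsto$ its convexity bias) evaluated at the empirical measure $\hat\mu_n$ and at $\mu$, a von Mises / delta-method expansion plus the moment and smoothness assumptions should yield $\EE\delta^2=O(n^{-3})=o((c^\ast)^2)$ and $\EE\hat c=c^\ast+O(n^{-2})$, with a negligible extra Monte-Carlo term if the bootstrap expectation is itself estimated from $B\to\infty$ resamples. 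The scaling claim reduces identically to bounding $\EE(\hat s-s^\ast)^2$ and the analogous cross term.

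The hard part will be the cross term $2\EE[\tilde e\,\delta]$: since $\tilde e=\Theta_P(n^{-1/2})$ and $\delta=\Theta_P(n^{-3/2})$, a crude bound gives only $|\EE[\tilde e\,\delta]|=O(n^{-2})$ — the \emph{same} order as the gain $(c^\ast)^2$ — and its limiting constant mixes third moments of $\mu$ with third derivatives of $F$ and is a priori of indefinite sign. I would close this (this is presumably what ``appropriately'' hides in the statement) in one of two ways: (i) use the damped quantity $\alpha_n\hat c$ with $\alpha_n\in(0,1)$, $\alpha_n\to0$ slowly, so the gain is $\alpha_n(2-\alpha_n)(c^\ast)^2$ and the penalty $2\alpha_n\EE[\tilde e\,\delta]+O(\alpha_n n^{-3})$; dividing by $\alpha_n$ and sending $n\to\infty$ reduces the claim to the single inequality $\lim n^2\EE[\tilde e\,\delta]<\lim n^2(c^\ast)^2=\tfrac14\big(\tr(\nabla^2F(\bx^*)\Sigma)\big)^2$, which the hypotheses on $F$, $\mu$, and the resampling scheme are set up to guarantee; or (ii) assume directly that the bootstrap error is asymptotically uncorrelated with $\tilde e$, i.e. $\EE[\tilde e\,\delta]=o((c^\ast)^2)$ — automatic, e.g., when $\hat c$ is computed on an independent fold — whence $\EE\delta^2+2\EE[\tilde e\,\delta]=o((c^\ast)^2)<(c^\ast)^2$ and the argument, together with its scaling analogue, closes at once.
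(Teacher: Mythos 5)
Your skeleton is essentially the paper's: the exact identity reducing the claim to $\EE\delta^2+2\EE[\tilde e\,\delta]<(c^*)^2$ is algebraically equivalent to the paper's decomposition $\EE[\bar c^2+\delta^2+2(F(\bar\bx)-F(\bx^*))\bar c]<0$, and you correctly locate both the size of the gain, $(c^*)^2=\Theta(n^{-2})$ with $c^*\sim-\frac{1}{2n}\tr(\nabla^2F(\bx^*)M_2)$, and the crux, namely that the cross term is of the same order $n^{-2}$ with a priori indefinite sign. But two of your quantitative steps fail for the estimator the theorem is actually about. First, $\EE\delta^2=O(n^{-3})$ is false for the bootstrap with finitely many resamples: the Monte Carlo part of $\hat c$ contributes $\frac{1}{K}\mathrm{Var}_{\tilde\bx}(F(\tilde\bx))\approx\frac{\sigma_1}{Kn}$, which under the paper's scaling $K=C_Kn$ is $\Theta(n^{-2})$ --- the same order as the gain, not negligible; this is precisely why the term $\sigma_1/C_K$ appears in the paper's sufficient condition, and it cannot be waved off by letting the number of resamples tend to infinity unless you strengthen the hypothesis to $K\gg n$. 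Second, and more seriously, neither of your two closings of the cross term proves the stated result: damping by $\alpha_n\to0$ changes the estimator and still leaves you needing $\lim n^2\EE[\tilde e\,\delta]<\frac{1}{4}\sigma_2^2$, which you assume rather than establish, and sample splitting changes the algorithm. Invoking ``the hypotheses are set up to guarantee'' the one inequality that carries all the difficulty is assuming the conclusion of the hard step.

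What the paper does at exactly that point is compute the leading coefficient of the cross term: it expands $\bar c=F(\bar\bx)-\EE_{\tilde\bx}F(\tilde\bx)$ one order further (keeping the $\partial^3F$ correction to the Hessian term), uses independence of the $\bx_i$ to reduce the resulting sums to single-index terms, and bounds the third-moment contraction $\EE[\nabla F(\bx^*)^T\bz_1\cdot\bz_1^TH(\bx^*)\bz_1]$ by Cauchy--Schwarz, arriving at $\EE[(F(\bar\bx)-F(\bx^*))\bar c]\le-\frac{\sigma_2^2}{4n^2}+\frac{\sqrt{\sigma_1\sigma_4}-\sigma_3}{2n^2}+O(n^{-2.5})$; the explicit algebraic condition $\frac{\sigma_1}{C_K}+\sqrt{\sigma_1\sigma_4}-\sigma_3\le\frac{\sigma_2^2}{4}$ then makes the total $n^{-2}$ coefficient strictly negative. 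To complete your argument along your own lines you would have to carry out this computation of $\lim n^2\EE[\tilde e\,\delta]$ in terms of $\nabla F$, $\nabla^2F$, $\nabla^3F$ and the second through fourth moments of $\mu$, and state the resulting inequality as the hypothesis; the rest of your outline (the MSE identity, the order counting, the $O(n^{-3})$ bound on the non--Monte-Carlo part of $\EE\delta^2$, and the analogous reduction for $\hat s$) is sound and parallels the paper.
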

Besides the bootstrap method, other method can be employed to estimate the debiasing quantities when more information of $F$ is available. One ``covariance estimate'' method that makes use of $F$'s second-order curvature is also introduced in Section~\ref{sec:general}. 

Finally, extensive numerical experiments are conducted on a series of problems, from simple convex functions, to optimization problems with stochastic objective function and constrains, and then to the estimate of entropy and Wasserstein distance of probability distributions. The same framework is applied to all the problems, with the only difference being the methods used to solve the problems themselves. In the experiments, our methods can reduce both the expected bias and the expected square error by a significant amount compared with the naive estimate. This shows that the proposed framework can serve as a handy and convenient method to improve the estimate of functions/functionals without requiring specific domain knowledge except convexity/concavity.

\section{Problem settings and methods}\label{sec:general}
\subsection{Problem settings}
Without loss of generality, we introduce our method using convex functions on Euclidean spaces. 
Let $F: \Omega\rightarrow\RR$ be a convex function with $\Omega\in\RR^d$. Let $\mu$ be a probability distribution on $\RR^d$ that satisfies $\EE_{\mu}\bx=\bx^*$. Given noisy observations $\bx_1, \bx_2, ..., \bx_n$ sampled i.i.d from $\mu$, our goal is to estimate $F(\bx^*)$ using only the observations. 
As mentioned in the introduction, a naive estimate of $F(\bx^*)$ is to use the function value at the sample average of the observations, $F(\bar{\bx})$, where $\bar{\bx}:=\frac{1}{n}\sum_{i=1}^n \bx_i$. 
When the function $F$ is continuous at $\bx^*$, this estimate is consistent. However, bias exists due to convexity. By the Jensen's inequality we have $F(\bx^*)\leq \EE F(\bar\bx)$, where the expectation is taken over the sampling of $\bx_1,\cdots,\bx_n$. This bias may be large if the curvature of $F$ is big or the number of available observations $n$ is small. 
To reduce the convexity bias, we design methods to correct the naive estimate $F(\bar\bx)$. We explore two possible ways of changing $F(\bar\bx)$: {\bf the shifting method} and {\bf the scaling method}. The shifting method takes an additive approach. We estimate a debiasing quantity $c$, add it to $F(\bar\bx)$, and use $F(\bar\bx)+c$ as an improved estimate for $F(\bx^*)$. On the other hand, the scaling method takes a multiplicative approach. We estimate another debiasing quantity $s$, multiply it to $F(\bar\bx)$, and use $sF(\bar\bx)$ as an improved estimate for $F(\bx^*)$. In the following subsections we introduce the two methods in detail.

\subsection{The shifting method}
In the shifting method, we find an additive debiasing quantity $c$ such that $F(\bar\bx)+c$ is a better estimate for $F(\bx^*)$ than $F(\bar\bx)$. We measure the quality of the estimate using the squared error
\begin{equation}\label{eqn:shifting_c_def_1}
    \EE \left( F(\bar{\bx})+c - F(\bx^*) \right)^2,
\end{equation}
and find $c$ to minimize~\eqref{eqn:shifting_c_def_1}. Treating $c$ as a constant and expanding~\eqref{eqn:shifting_c_def_1}, we obtain a quadratic function of $c$,
\begin{equation*}
c^2 + 2c\EE(F(\bar\bx)-F(\bx^*)) + \EE(F(\bar\bx)-F(\bx^*))^2,
\end{equation*}
whose minimum is achieved at 
\begin{equation}\label{eqn:shifting_c_def_2}
c = F(\bx^*) - \EE F(\bar\bx). 
\end{equation}
Hence, ideally we can use the $c$ defined in~\eqref{eqn:shifting_c_def_2} as the debiasing quantity. 

In practise, however, $\bx^*$ and $\EE F(\bar\bx)$ are unknown, and can only be approximated using noisy observation $\bx_1,\cdots,\bx_n$. For general $F$ and $\mu$, we can use bootstrap~\cite{efron1992bootstrap} to estimate the $c$ in~\eqref{eqn:shifting_c_def_2}. bootstrap uses a uniform distribution on $\bx_1,\cdots,\bx_n$ to approximate $\mu$. Denote $\cX=\{\bx_1,...,\bx_n\}$ and let $\mu_{\cX}$ be the uniform distribution on $\cX$. For $k=1,2,...,K$, define
\begin{equation}\label{eqn:bootstrap_x}
    \tilde{\bx}_k = \frac{1}{n}\sum\limits_{i=1}^n \bx_{k,i},
\end{equation}
where $\{\bx_{k,i}\}_{i=1}^n$ are i.i.d. sampled from $\mu_{\cX}$. Then, the empirical distribution of $\{\tilde{\bx}_k\}$ is an estimate of the distribution of $\bar{\bx}$. Hence, the shifting debiasing quantity $c$ can be approximated by
\begin{equation*}
    c=F(\bx^*)-\EE F(\bar{\bx}) \approx F\left(\frac{1}{K}\sum\limits_{k=1}^K \tilde{\bx}_k\right) - \frac{1}{K}\sum\limits_{k=1}^K F(\tilde{\bx}_k).
\end{equation*}
Since $\EE_{\mu_{\cX}}\tilde{\bx}_k=\bar\bx$, we have $F(\bar{\bx})\approx F\big(\frac{1}{K}\sum\limits_{k=1}^K \tilde{\bx}_k\big)$ when $K$ is large. Therefore, we can use
\begin{equation}\label{eqn:shifting_c_def_3}
    \hat{c} := F(\bar{\bx}) - \frac{1}{K}\sum\limits_{k=1}^K F(\tilde{\bx}_k),
\end{equation}
as an approximation of $c$. With this definition of $\hat{c}$, the steps of the shifting method with bootstrap is listed in Algorithm~\ref{alg:shift_bs}.

\begin{algorithm}[H]
\caption{The Shifting Debiasing Method with bootstrap}\label{alg:shift_bs}
\SetKwInOut{Input}{Input}
\SetKwInOut{Output}{Output}

\Input{A convex function $F:\RR^d\rightarrow\RR$, observations $\bx_1,\cdots,\bx_n\in\RR^d$, $K$}
\Output{Debiased estimate $F(\bar\bx)+\hat{c}$}

\Indp
$\bar\bx \gets \frac{1}{n}\sum\limits_{i=1}^n \bx_i$\;

\For{$k\gets1$ \KwTo $K$}{
    Sample $\tilde{\bx}_{k,1},\cdots,\tilde{\bx}_{k,n}$ i.i.d. from the uniform distribution on $\bx_1,\cdots,\bx_n$\;
    
    $\tilde{\bx}_k \gets \frac{1}{n}\sum\limits_{i=1}^n \tilde{\bx}_{k,i}$
}
$\hat{c} \gets F(\bar\bx) - \frac{1}{K}\sum\limits_{k=1}^K F(\tilde{\bx}_k)$\;
\end{algorithm}

\paragraph{The covariance estimate method}
bootstrap is not the only approach to estimate the debiasing quantity $c$, especially when more knowledge about the problem is available.
For example, when the distribution $\mu$ is concentrated in a region where $F$ is close to a quadratic function, we can estimate $c$ by estimating the covariance of $\bx$. To see this, assume $F$ is close to its second Taylor polynomial at $\bx^*$, then for $F(\bar\bx)$ we have
\begin{equation}\label{eqn:second_expansion}
    F(\bar\bx) \approx F(\bx^*) + \nabla F(\bx^*)^T(\bar\bx-\bx^*) + \frac{1}{2}(\bar\bx-\bx^*)^T\nabla^2F(\bx^*)(\bar\bx-\bx^*).
\end{equation}
Taking expectation, note that $\EE(\bar{\bx}-\bx^*) = 0$, we have
\begin{equation*}
    \EE\big(F(\bar{\bx}) - F(\bx^*)\big) \approx \frac{1}{2}\EE(\bar{\bx}-\bx^*)^T \nabla^2 F(\bx^*) (\bar{\bx}-\bx^*) = \frac{1}{2}\tr(\bar{C}\nabla^2F(\bx^*)),
\end{equation*}
where $\bar{C}$ is the covariance matrix of $\bar{\bx}$. Let $C$ be the covariance of $\bx$, then we have $\bar{C}=C/n$ and
\begin{equation*}
    \EE\big(F(\bar{\bx}) - F(\bx^*)\big) \approx \frac{1}{2n}\tr(C\nabla^2F(\bx^*)).
\end{equation*}
Therefore, once we have some knowledge on $\nabla^2F(\bx^*)$, e.g. having access to a matrix $H\approx \nabla^2F(\bx^*)$, we can obtain an estimate of $c$ without using bootstrap:
\begin{align}
\hat{c} &= -\frac{1}{2n}\tr\left(\frac{1}{n-1}\sum\limits_{i=1}^n(\bx_i-\bar\bx)(\bx_i-\bar\bx)^TH\right) \nonumber\\
 &= -\frac{1}{2n(n-1)}\sum\limits_{i=1}^n(\bx_i-\bar\bx)^TH(\bx_i-\bar\bx).\label{eqn:cov_est}
\end{align}
An algorithm using the covariance estimate method takes similar inputs and outputs as Algorithm~\ref{alg:shift_bs}, 
replacing the bootstrap steps in line 2-6 by computing the $\hat{c}$ in~\eqref{eqn:cov_est}. We skip the step-by-step algorithm here.

\subsection{The scaling method}
The scaling method can be applied when $F$ is always positive (or negative) for any $\bx$. In this case, we find a multiplicative debiasing quantity $s$ such that $sF(\bar\bx)$ becomes a good estimate for $F(\bx^*)$. Still consider the squared error, which becomes
\begin{equation*}
    \EE \left(sF(\bar{\bx})- F(\EE\bx)\right)^2.
\end{equation*}
Treating $s$ as a constant and minimizing the squared error as a quadratic function of $s$ gives the following ideal choice of $s$:
\begin{equation}\label{eqn:scaling_s_def_1}
    s = \frac{F(\bx^*)\EE F(\bar{\bx})}{\EE F^2(\bar\bx)}.
\end{equation}

In practice, we can still use bootstrap to obtain an estimate of the $s$ in~\eqref{eqn:scaling_s_def_1}. Recall the choice of $\{\tilde{\bx}_k\}_{k=1}^K$ in~\eqref{eqn:bootstrap_x}. Using the uniform distribution on $\{\tilde{\bx}_k\}_{k=1}^K$ as an approximation of $\mu$, and $\bar\bx$ as an approximation of $\bx^*$, we can take
\begin{equation}\label{eqn:scaling_s_def_2}
    \hat{s} := \frac{F(\bar{\bx})\frac{1}{K}\sum\limits_{k=1}^K F(\tilde{\bx}_k)}{\frac{1}{K}\sum\limits_{k=1}^K F^2(\tilde{\bx}_k)} = \frac{F(\bar{\bx})\sum_{k=1}^K F(\tilde{\bx}_k)}{\sum_{k=1}^K F^2(\tilde{\bx}_k)}.
\end{equation}
as an approximation of $s$. An algorithm using the scaling method is given in Algorithm~\ref{alg:scale_bs}. \\

\begin{algorithm}[H]
\caption{The Scaling Debiasing Method with bootstrap}\label{alg:scale_bs}
\SetKwInOut{Input}{Input}
\SetKwInOut{Output}{Output}

\Input{A convex  and positive function $F:\RR^d\rightarrow\RR$, observations $\bx_1,\cdots,\bx_n\in\RR^d$, $K$}
\Output{Debiased estimate $\hat{s}F(\bar\bx)$}
\Indp
$\bar\bx \gets \frac{1}{n}\sum\limits_{i=1}^n \bx_i$\;

\For{$k\gets1$ \KwTo $K$}{
    Sample $\tilde{\bx}_{k,1},\cdots,\tilde{\bx}_{k,n}$ i.i.d. from the uniform distribution on $\bx_1,\cdots,\bx_n$\;
    
    $\tilde{\bx}_k \gets \frac{1}{n}\sum\limits_{i=1}^n \tilde{\bx}_{k,i}$
}
$\hat{s} \gets \frac{F(\bar{\bx})\sum_{k=1}^K F(\tilde{\bx}_k)}{\sum_{k=1}^K F^2(\tilde{\bx}_k)}$\;
\end{algorithm}

\subsection{Dealing with functionals}
The methods above can be easily applied to the cases where $F$ is a convex/concave functional. 
Taking a functional of probability distribution, $F[p]$, as an example. Suppose $p\in\cP(\Omega)$ is a distribution on $\Omega\in\RR^d$ and $F$ is convex with respect to $p$. Such functional can be the entropy of $p$, or the distance from $p$ to some other probability distributions (e.g. the KL divergence, the Wasserstein distance). Let $\bx_1, ..., \bx_n$ be points in $\Omega$ i.i.d. sampled from $p$. Then, the Dirac delta distributions $\delta_{\bx_1}, ..., \delta_{\bx_n}$ can be understood as noisy observations of $p$. In this case, the naive estimate of $F[p]$ is $F[\bar{p}]$, where $\bar{p}$ is the empirical distribution
$\bar{p} = \frac{1}{n}\sum_{i=1}^n \delta_{\bx_i}$. To estimate the debiasing quantities with bootstrap, for $k=1,2,...,K$ and $i=,2,...,n$, let $\bx_{k,i}$ be i.i.d. samples taken uniformly from $\{\bx_1,...,\bx_n\}$, and let $\tilde{p}_k = \frac{1}{n}\sum_{i=1}^n \delta_{\bx_{k,i}}.$
Then, for the shifting method we can take
\begin{equation}
    \hat{c} = F[\bar{p}] - \frac{1}{K}\sum\limits_{k=1}^K F[\tilde{p}_k],
\end{equation}
and for the scaling method we can take
\begin{equation}
    \hat{s} = \frac{F[\bar{p}]\sum_{k=1}^K F[\tilde{p}_k]}{\sum_{k=1}^K F^2[\tilde{p}_k]}.
\end{equation}

\section{Theoretical results}\label{sec:thm}
In this section, we provide theoretical results which show that our debiasing methods can reduce the expected squared error of the estimate. In the statement of the theorems, without loss of generality, we consider $\bx\in\RR^n$ and $F$ as a convex function of $\bx$. In the theorems and the proofs, we use Einstein notations to represent tensor contractions involving tensors with rank $\geq3$. For some examples, if $A, B\in\RR^{d\times d}$ are matrices, $A_{ab}B^{bc}$ represents the matrix product $AB$; if $A\in\RR^{d\times d\times d\times d}$ and $B\in\RR^{d\times d}$, $A_{abcd}B^{cd}$ gives a $d\times d$ matrix by contracting the last two dimensions of $A$ with the two dimensions of $B$; if $F:\RR^d\rightarrow\RR$ is a function and $\bx=(x^1,\cdots,x^d), \by=(y^1,\cdots,y^d)\in\RR^d$ are vectors, $\partial_{abc}^3F(\bx)y^ay^by^c$ means the sum
\begin{equation*}
\sum\limits_{i,j,k=1}^d \frac{\partial^3F(\bx)}{\partial x^i\partial x^j\partial x^k}y^iy^jy^k.
\end{equation*}

\paragraph{The shifting method.}
We first make some assumptions on the input distribution $\mu$ and the function $F$. In the following, the norm $\|\cdot\|$ is by default the $\ell_2$ norm for vectors and matrices. 

\begin{assumption}\label{assump:mu}
The probability distribution $\mu$ has up to 8-th finite moments. 
\end{assumption}

\begin{assumption}\label{assump:F}
{ $F$ has finite fourth-order derivatives.} 
\end{assumption}

Under the assumptions above, the following main result shows that the shifting method can strictly reduce the expected squared error, as long as $K$ is at least at the same order of $n$. 

\begin{theorem}\label{thm:main_shifting}
Suppose Assumption~\ref{assump:mu} and~\ref{assump:F} hold. Consider the shifting debiasing method using the debiasing quantity $\hat{c}$ defined in~\eqref{eqn:shifting_c_def_3}.
Denote
$M_2=\EE_{\mu} (\bx-\bx^*)(\bx-\bx^*)^T\in\RR^{d\times d}$, { $M_3=\EE_{\mu} (\bx-\bx^*)^{\otimes3}\in\RR^{d\times d\times d}$} be the second and third centered moment tensors of $\mu$. Define
\begin{align*}
\sigma_1 &=\partial_aF(\bx^*)\partial_bF(\bx^*)(M_2)^{ab} = \nabla F(\bx^*)^T M_2\nabla F(\bx^*),\\ \sigma_2&=\partial^2_{ab}F(\bx^*)(M_2)^{ab} = \tr(M_2\nabla^2 F(\bx^*)), \\ 
\sigma_3&= \partial_a F(\bx^*)\partial^2_{bc}F(\bx^*) (M_3)^{abc},\\  
\sigma_4&= \partial_a F(\bx^*)(M_2)^{ab}\partial^3_{bcd} F(\bx^*)(M_2)^{cd}.
\end{align*}
Then, if $n$ is sufficiently large, $K\geq C_K n$ for some constant $C_K$, and
\begin{equation}\label{eqn:thm_cond}
\frac{\sigma_2^2}{4}+\sigma_3+\sigma_4 -\frac{\sigma_1}{C_K} > 0,
\end{equation}
we have
\begin{equation}
    \EE (F(\bar\bx) + \hat{c} - F(\bx^*))^2 < \EE (F(\bar{\bx})-F(\bx^*))^2,
\end{equation}
where the expectation is taken on both the sampling of $\xobs_{i=1}^n$  and the bootstrap.
\end{theorem}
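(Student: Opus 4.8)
The plan is to expand both sides of the target inequality to the relevant order in $1/n$ (and in the bootstrap error) using Taylor expansions of $F$ around $\bx^*$, and then compare the leading terms. The key observation is that the naive error $\EE(F(\bar\bx)-F(\bx^*))^2$ and the debiased error $\EE(F(\bar\bx)+\hat c - F(\bx^*))^2$ differ, to leading order, by a quantity controlled by $\sigma_1,\ldots,\sigma_4$ and $C_K$, so the whole theorem reduces to showing that condition~\eqref{eqn:thm_cond} makes that difference negative.

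First I would set up the expansion of the naive estimate. Writing $\bar\bx-\bx^* = O_P(n^{-1/2})$ by Assumption~\ref{assump:mu}, a fourth-order Taylor expansion of $F$ gives $F(\bar\bx)-F(\bx^*) = \partial_a F(\bx^*)(\bar\bx-\bx^*)^a + \tfrac12\partial^2_{ab}F(\bx^*)(\bar\bx-\bx^*)^a(\bar\bx-\bx^*)^b + \text{(higher order)}$. Squaring and taking expectations, using $\EE(\bar\bx-\bx^*)=0$, $\EE(\bar\bx-\bx^*)^{\otimes 2}=M_2/n$, and the fact that the fourth moment of $\bar\bx-\bx^*$ equals $M_4/n^3 + 3(M_2\otimes M_2\text{-type terms})/n^2$ (the standard CLT-type moment bookkeeping), I would get
\[
\EE(F(\bar\bx)-F(\bx^*))^2 = \frac{\sigma_1}{n} + \frac{1}{n^2}\Bigl(\frac{\sigma_4}{?}+\text{cross terms}+\sigma_3\text{-type terms}\Bigr) + o(n^{-2}),
\]
where the $\sigma_2$ appears through $\EE(F(\bar\bx)-F(\bx^*)) = \tfrac{\sigma_2}{2n}+O(n^{-2})$. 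Next I would do the analogous expansion for $\hat c$ as defined in~\eqref{eqn:shifting_c_def_3}: conditionally on the sample $\cX$, $\hat c = F(\bar\bx) - \EE_{\mu_\cX}F(\tilde\bx_k) - (\text{Monte Carlo error from finite }K)$, and $\EE_{\mu_\cX}F(\tilde\bx_k) \approx F(\bar\bx) + \tfrac{1}{2n}\tr(\hat C_\cX \nabla^2 F(\bar\bx))$ where $\hat C_\cX$ is the empirical covariance. So $\hat c \approx -\tfrac{1}{2n}\tr(\hat C_\cX\nabla^2 F(\bx^*)) + (\text{perturbations})$, whose expectation over $\cX$ is $-\tfrac{\sigma_2}{2n}+O(n^{-2})$ — i.e. $\hat c$ correctly cancels the leading bias. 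The residual square error then has a new leading term of order $1/n^2$ coming from the fluctuations of $\hat c$ around its mean, which splits into a sampling-fluctuation piece (involving $\sigma_4$ through $\mathrm{Var}(\tr(\hat C_\cX\nabla^2F))$, and a cross term with $\sigma_3$) and a bootstrap Monte-Carlo piece of order $1/(nK) \le 1/(C_K n^2)$ (this is where $C_K$ enters as $\sigma_1/C_K$, since the $F(\bar\bx)$ prefactor in the cross term contributes $\sigma_1$).

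The comparison then amounts to: the debiased error minus the naive error equals, to leading order $n^{-2}$,
\[
\frac{1}{n^2}\Bigl(\underbrace{-\frac{\sigma_2^2}{4}}_{\text{removed bias}^2} + \underbrace{\frac{\sigma_1}{C_K}}_{\text{bootstrap MC variance}} + \underbrace{\sqrt{\sigma_1\sigma_4}}_{\text{Cauchy--Schwarz bound on cross term}} - \underbrace{\sigma_3}_{\text{covariance of }F(\bar\bx)\text{ and }\hat c}\Bigr) + o(n^{-2}),
\]
so negativity of this bracket is exactly~\eqref{eqn:thm_cond}, and since the $\sigma_1/n$ terms agree on both sides, the strict inequality follows for $n$ large. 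The $\sqrt{\sigma_1\sigma_4}$ in particular should come from bounding a term like $\EE[(F(\bar\bx)-F(\bx^*))(\hat c - \EE\hat c)]$ by Cauchy–Schwarz, using $\EE(F(\bar\bx)-F(\bx^*))^2 \sim \sigma_1/n$ and $\EE(\hat c-\EE\hat c)^2 \sim \sigma_4/(4n^2) + \ldots$ — I would need to be careful about whether this bound is tight or whether a signed version helps, but for an upper bound on the debiased error Cauchy–Schwarz suffices.

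The main obstacle I expect is the careful control of the bootstrap approximation error — i.e. making rigorous that $\hat c$, which is a ratio/difference of empirical averages over both the resampling index $k$ and the data index $i$, has the claimed mean $-\sigma_2/(2n)+O(n^{-2})$ and variance $O(n^{-2})$ \emph{uniformly enough} that all the $o(n^{-2})$ remainders are genuinely negligible. This requires (i) controlling the Monte-Carlo error from finite $K$, which needs $K \gtrsim n$ precisely so that the $1/(nK)$ term does not dominate $1/n^2$; (ii) controlling the deviation of the empirical covariance $\hat C_\cX$ and empirical higher moments from their population values, which is where the 8-th moment hypothesis in Assumption~\ref{assump:mu} is consumed (we need finite variance of quantities quadratic in $\bx-\bx^*$, hence fourth moments, times another factor of two from squaring error terms); and (iii) controlling the Taylor remainders uniformly, using Assumption~\ref{assump:F} on the fourth derivatives together with a truncation argument on the rare event that $\bar\bx$ or $\tilde\bx_k$ strays far from $\bx^*$. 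Organizing these three error sources so that the clean $n^{-2}$ identity above survives is the technical heart of the proof.
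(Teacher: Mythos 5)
Your proposal is correct and follows essentially the same route as the paper: decompose the MSE difference into $\EE\hat c^{\,2}+2\EE(F(\bar\bx)-F(\bx^*))\hat c$, split $\hat c$ into its conditional (over the resampling) mean $\bar c\approx-\tfrac{1}{2n}\tr(\hat C_{\cX}\nabla^2F(\bx^*))$ plus a mean-zero Monte-Carlo fluctuation whose cross terms vanish, and track the four $\tfrac{1}{n^2}$ contributions $-\sigma_2^2/4$, $\sigma_1/C_K$, $\sqrt{\sigma_1\sigma_4}$, $-\sigma_3$, exactly as in the paper's Section 6. One caution on the step you flagged yourself: applying Cauchy--Schwarz directly to $\EE[(F(\bar\bx)-F(\bx^*))(\hat c-\EE\hat c)]$ with $\EE(F(\bar\bx)-F(\bx^*))^2\sim\sigma_1/n$ and a variance of $\hat c$ of order $n^{-2}$ would give an $O(n^{-1.5})$ bound, which would swamp the $n^{-2}$ bracket; the correct order comes from first using independence of the samples to reduce this covariance to the single-sample third moment $\tfrac{1}{2n^2}\EE[(\nabla F(\bx^*)^T(\bx-\bx^*))\,(\bx-\bx^*)^TH(\bx^*)(\bx-\bx^*)]$ (also, $\mathrm{Var}(\bar c)$ is actually $O(n^{-3})$, not $O(n^{-2})$), and only then applying Cauchy--Schwarz to the single-sample factors to get $\sqrt{\sigma_1\sigma_4}/(2n^2)$, which is what the paper does in its estimate of $\EE\,\nabla F(\bx^*)^T(\bar\bx-\bx^*)\bar c$.
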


{ Actually, the assumptions for the theorem above do not require $F$ to be convex or concave. The debiasing method is effective as long as the quantity $\sigma_2^2$ is larger than the other three terms in the condition~\ref{eqn:thm_cond}. Roughly speaking, this requires the second derivative of $F$ is large compared with its first and third derivatives. Though, the condition is easier to hold when $F$ is convex or concave, otherwise $\sigma_2$ might be very small.}

{ The condition~\eqref{eqn:thm_cond} can be simplified if we problems with certain properties. for example, if $\mu$ is symmetric with respect to $\bx^*$, then we have $M_3=0$, and hence $\sigma_3=0$. If we further assume that $F$ is quadratic, we have the following corollary derived from Theorem~\ref{thm:main_shifting}:
\begin{corollary}\label{coro:quad}
Let $F(\bx)=\frac{1}{2}\bx^TA\bx+b$ be a quadratic function with $A\in\RR^{d\times d}$ being a positive definite matrix and $b\in\RR$ being a scalar. Let $\mu$ be a probability distribution on $\RR^d$ that satisfies $\EE_{\mu}\bx=\bx^*$. Suppose $\mu$ is symmetric with respect to $\bx^*$. Then, the shifting debias method can reduce the expected squared error as long as 
\begin{equation*}
    \frac{\tr(AM_2)^2}{4} > \frac{(\bx^*)^TA^TM_2A\bx^*}{C_K}.
\end{equation*}
\end{corollary}
By this corollary, the shifting method works everywhere for quadratic functions as long as $C_K$ is large enough. This is possible because $C_K$ is a hyperparameter that we can choose freely as long as the computational resource is sufficient. The following corollary states the condition for the shifting method to work in the limit case in which $C_K$ is pushed to infinity:
\begin{corollary}\label{coro:limit}
Let $F$ satisfy the assumptions in Theorem~\ref{thm:main_shifting}, and $\mu$ be a probability distribution on $\RR^d$ that satisfies $\EE_{\mu}\bx=\bx^*$ and is symmetric with respect to $\bx^*$. Suppose $C_K$ is sufficiently large. Then, the shifting debias method can reduce the expected squared error as long as
\begin{equation}\label{eqn:cond_coro_limit}
    \sigma_2^2 > -4\sigma_4,
\end{equation}
where $\sigma_2$ and $\sigma_4$ are defined the same way as in Theorem~\ref{thm:main_shifting}.
\end{corollary}
This corollary shows more clearly that the condition on $F$ generally requires the second derivative to be large compared with the first and third derivatives. Quadratic function under the conditions in Corollary~\ref{coro:limit} always satisfy~\eqref{eqn:cond_coro_limit} because $\sigma_4$ vanishes.
}

A sketch of the proof of Theorem~\ref{thm:main_shifting} is shown in the next subsection. The full proof is provided in Section~\ref{sec:pf}.

\paragraph{The scaling method.}
Next, we study the scaling method, and show a similar results as that for the shifting method. For the ease of analysis, we make some additional assumptions on $\mu$ and $F$:
\begin{assumption}\label{assump:mu_2}
The probability distribution $\mu$ has finite moment of all orders.
\end{assumption}
\begin{assumption}\label{assump:pos}
There exists a constant $B>0$, such that $F(\bx)\geq B$ for any $\bx$.
\end{assumption}

\begin{remark}
Assumption~\ref{assump:mu_2} is made for the convenience of analysis. The ``finite moment of all orders'' can be relaxed to finite moments up to a certain finite order. The orders we need can be obtained by tracing the higher-order-terms in the proof of Theorem~\ref{thm:main_scaling} (given in Section~\ref{sec:pf2}).
\end{remark}

Under the new assumptions, we have the following theorem whose proof is given in Section~\ref{sec:pf2}.
\begin{theorem}\label{thm:main_scaling}
Suppose Assumption~\ref{assump:F}, \ref{assump:mu_2} and~\ref{assump:pos} hold. Consider the scaling debiasing method using the debiasing quantity $\hat{s}$ defined in~\eqref{eqn:scaling_s_def_2}.
Let $M_2$, $M_3$, $\sigma_1$, $\sigma_2$, $\sigma_3$, $\sigma_4$ be defined in the same as those in Theorem~\ref{thm:main_shifting}. Define
\begin{equation*}
\sigma'_3=\partial_a F(\bx^*)\partial_b F(\bx^*)\partial_b F(\bx^*)(M_3)^{abc}.
\end{equation*}
Then, if $n$ is sufficiently large, $K\geq C_K n$ for some constant $C_K$, and
\begin{equation}\label{eqn:thm_cond_scaling}
\frac{\sigma_2^2}{4}+\sigma_3+\sigma_4+\frac{2\sigma'_3}{F(\bx^*)}+\frac{4\sigma_1\sigma_2}{F(\bx^*)}-\frac{3\sigma_1^2}{F(\bx^*)^2}-\frac{\sigma_1}{C_K}>0,
\end{equation}
we have
\begin{equation}
    \EE (\hat{s}F(\bar\bx) - F(\bx^*))^2 < \EE (F(\bar{\bx})-F(\bx^*))^2,
\end{equation}
where the expectation is taken on both the sampling of $\xobs$ and the bootstrap.
\end{theorem}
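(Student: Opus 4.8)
The plan is to mirror the proof of Theorem~\ref{thm:main_shifting}, tracking the extra terms that arise because the scaling correction $\hat s F(\bar\bx)$ is a nonlinear (ratio) functional of the bootstrap quantities rather than an affine one. First I would set up the same Taylor expansion machinery: write $\bar\bx = \bx^* + \bm{\delta}$ with $\bm{\delta} = O_p(n^{-1/2})$, and expand $F(\bar\bx)$ to third or fourth order around $\bx^*$ in powers of $\bm{\delta}$, so that $F(\bar\bx) = F(\bx^*) + \partial_a F\,\delta^a + \tfrac12\partial^2_{ab}F\,\delta^a\delta^b + \tfrac16\partial^3_{abc}F\,\delta^a\delta^b\delta^c + (\text{4th order})$. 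The bootstrap averages $\tilde\bx_k = \bar\bx + \tilde{\bm{\delta}}_k$ get the analogous expansion, where conditionally on $\cX$ the vectors $\tilde{\bm{\delta}}_k$ are mean zero with conditional covariance $\hat C/n$, $\hat C$ being the empirical covariance. The key consequence, already used in the shifting proof, is that $\tfrac1K\sum_k F(\tilde\bx_k)$ concentrates (over the bootstrap randomness, up to $O(1/\sqrt K)$ fluctuations) around $F(\bar\bx) + \tfrac{1}{2n}\tr(\hat C\,\nabla^2 F(\bar\bx)) + \dots$, and $\tr(\hat C\,\nabla^2 F)$ in turn approximates $\tr(M_2\nabla^2F(\bx^*))=\sigma_2$ up to lower-order corrections.

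Next I would linearize the ratio defining $\hat s$. Writing $a = F(\bar\bx)$, $b = \tfrac1K\sum_k F(\tilde\bx_k)$, $q = \tfrac1K\sum_k F^2(\tilde\bx_k)$, we have $\hat s = ab/q$, and since $b = a + \eta$ and $q = a^2 + 2a\eta' + \zeta$ with $\eta,\eta',\zeta$ all $O_p(1/n)$-ish small corrections (plus $O(1/\sqrt K)$ bootstrap noise), a first-order expansion gives $\hat s F(\bar\bx) = a\cdot ab/q \approx a + (\eta - \text{correction terms})$. Crucially, Assumption~\ref{assump:pos} guarantees $a \ge B > 0$ and $q \ge B^2 > 0$, so the denominator is bounded away from zero and the ratio expansion is valid with controllable remainder — this is exactly the role of the positivity hypothesis. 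Carrying the expansion to the order needed, one finds that $\hat s F(\bar\bx) - F(\bar\bx)$ equals, to leading order, $-\tfrac{1}{2n}\tr(M_2 H^*) + (\text{noise})$, where $H^* = \nabla^2F(\bx^*) + \tfrac{2\nabla F(\bx^*)\nabla F(\bx^*)^T}{F(\bx^*)}$ is precisely the effective Hessian appearing in the definition of $\sigma'_4$; the extra rank-one term is the signature of the multiplicative (ratio) correction and is what replaces $\nabla^2F$ by the modified operator throughout.

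Then I would assemble the squared-error comparison exactly as in Theorem~\ref{thm:main_shifting}. Expanding $\EE(\hat s F(\bar\bx) - F(\bx^*))^2 - \EE(F(\bar\bx)-F(\bx^*))^2$ and collecting terms by order in $1/n$, the leading contribution is a negative term of order $\sigma_2^2/n^2$ (the bias we remove, now $\tfrac14\sigma_2^2$ after optimization, coming from the $\tr(M_2\nabla^2F)$ piece), competing against positive error terms: a variance term from bootstrap noise scaling like $\sigma_1/(C_K n^2)$ (hence the $\sigma_1/C_K$ on the left of~\eqref{eqn:thm_cond_scaling} and the requirement $K \ge C_K n$), a cross term $\sqrt{\sigma_1\sigma'_4}$ from the fourth-moment fluctuations of the effective-Hessian quadratic form, a term $-\sigma_3$ from the third-derivative coupling, and two genuinely new terms $-\tfrac{4\sigma_1\sigma_2}{F(\bx^*)}$ and $+\tfrac{3\sigma_1^2}{F(\bx^*)^2}$ that come from the cross-products between the gradient fluctuation $\partial_aF\,\delta^a$ and the $1/F(\bx^*)$ correction introduced by the ratio. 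Condition~\eqref{eqn:thm_cond_scaling} is exactly the statement that the sum of these positive terms does not overwhelm $\sigma_2^2/4$, so the difference is strictly negative for $n$ large.

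The main obstacle I anticipate is bookkeeping the two coupled sources of randomness cleanly: one must show that the bootstrap-conditional fluctuations of $b$, $q$ (size $O(1/\sqrt K)$) and the sampling fluctuations of $\bar\bx$ and $\hat C$ (size $O(1/\sqrt n)$) interact only at orders that are either negligible or precisely accounted for by the $\sigma_1/C_K$ term — in particular, verifying that $\hat C \to M_2$ and $\nabla^2F(\bar\bx)\to\nabla^2F(\bx^*)$ fast enough that the leading $\sigma_2^2/n^2$ term survives with the stated constant, and that the ratio remainder (controlled via $B$ and Assumption~\ref{assump:mu_2}'s high moments through a uniform-integrability / truncation argument) contributes only $o(1/n^2)$. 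Assumption~\ref{assump:mu_2} (finite moments of all orders) is what makes these remainder estimates routine; a careful proof would replace it by finitely many moments by inspecting which powers of $\bm{\delta}$ actually appear.
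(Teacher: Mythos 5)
Your proposal is correct and follows essentially the same route as the paper: reduce to the shifting-style error decomposition via $\hat{c}=(\hat{s}-1)F(\bar\bx)$, linearize the ratio using the lower bound $B$ from Assumption~\ref{assump:pos} to control the denominator, identify the effective matrix $\nabla^2F(\bx^*)+\frac{2\nabla F(\bx^*)\nabla F(\bx^*)^T}{F(\bx^*)}$ as the operator governing the correction, and match each term of condition~\eqref{eqn:thm_cond_scaling} to a leading $O(1/n^2)$ contribution. The one place where your sketch is slightly loose is the claim that the net $-\sigma_2^2/4$ comes directly from the $\tr(M_2\nabla^2F)$ piece: in the paper the correction's leading term is $-\frac{1}{2n}\tr(M_2 A(\bx^*))$ with the modified operator, and the $-\sigma_2^2/4$ together with the $-4\sigma_1\sigma_2/F(\bx^*)$ and $+3\sigma_1^2/F(\bx^*)^2$ terms emerge only after combining $\EE\bar{c}^2$ with the cross term $2\EE\bar{c}(F(\bar\bx)-F(\bx^*))$ — exactly the bookkeeping you flag as the remaining work.
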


\subsection{Proof sketch of Theorem~\ref{thm:main_shifting}}
In this section, we show the idea of the proof for Theorem~\ref{thm:main_shifting}. The proof for Theorem~\ref{thm:main_scaling} takes a similar approach, with more involved analysis. As an illustration, some arguments in this section might not be rigorous. Strict proofs for both Theorem~\ref{thm:main_shifting} and~\ref{thm:main_scaling} are given in later sections.

\paragraph{Notations} 
In this proof sketch, we write $h=O(g)$ if there exists a constant $C$ independent with $n$ such that $|h|< C|g|$ always holds. We use $\hot(k)$ to denote higher-order-terms containing $\bar\bx-\bx^*$ or (and) $\tilde{\bx}-\bar\bx$ with total orders $\geq k$. Given observations $\bx_1,...,\bx_n$, we use $\tilde{\bx}$ to denote the random variable given by the average of $n$ uniformly drawn samples from $\bx_1,...,\bx_n$. 
When taking expectation, we use $\EE$ to denote the expectation over both the sampling of $\xobs$ and the choice of $\tilde\bx$ during bootstrap, and use $\EE_{\tilde{\bx}}$ to denote the expectation over $\{\tilde{\bx}\}$ based on a fixed set of $\bx_1, ..., \bx_n$. Let $H(\bx):=\nabla^2 F(\bx)$. For any matrix $A\in\RR^{d\times d}$ and vector $\bx\in\RR^d$, denote $\|\bx\|^2_A = \bx^TA\bx$. {We note that this is not a norm when $A$ is not positive definite.} \\


To prove Theorem~\ref{thm:main_shifting}, first notice that 
\begin{equation*}
E(F(\bar\bx)+\hc-F(\bx^*))^2 = \EE (F(\bar\bx) - F(\bx^*))^2 + \EE \hc^2 + 2\EE (F(\bar\bx) - F(\bx^*))\hc.
\end{equation*}
Hence, we only need to show $\EE \big(\hc^2 + 2(F(\bar\bx) - F(\bx^*))\hc\big) < 0$. Let 
\begin{equation*}
\bar{c} = \EE_{\tilde{\bx}}\hc = F(\bar\bx) - \EE_{\tilde{\bx}} F(\tilde{\bx}),\ \ \textrm{and}\ \delta=\EE_{\tilde{\bx}} F(\tilde{\bx}) - \frac{1}{K}\sum_{k=1}^K F(\tilde{\bx}_k). 
\end{equation*}
Then, we have $\hc = \bar{c} + \delta$ and $\EE_{\tilde{\bx}}\delta = 0$, and 
\begin{equation*}
\EE \big(\hc^2 + 2(F(\bar\bx) - F(\bx^*))\hc\big) = \EE\big(\bar{c}^2+2\bar{c}\delta+\delta^2 + 2(F(\bar\bx) - F(\bx^*))\bar{c}+2(F(\bar\bx) - F(\bx^*))\delta\big).
\end{equation*}
Since $\bar{c}$ and $F(\bar{\bx})-F(\bx^*)$ do not depend on the sampling of $\tilde{\bx}_k$, by the law of total expectation, we have 
\begin{equation*}
    \EE \bar{c}\delta = \EE\left(\EE_{\tilde{\bx}}\bar{c}\delta\right) = \EE\left(\bar{c}\EE_{\tilde{\bx}}\delta\right) = 0,
\end{equation*}
and similarly $\EE(F(\bar{\bx})-F(\bx^*))\delta=0$. Therefore,
\begin{equation}\label{eqn:pf_sketch_1}
\EE \big(\hc^2 + 2(F(\bar\bx) - F(\bx^*))\hc\big) = \EE\big(\bar{c}^2+\delta^2 + 2(F(\bar\bx) - F(\bx^*))\bar{c}\big).
\end{equation}
Among the three terms on the right hand side, $\EE\bar{c}^2$ and $\EE\delta^2$ are positive. As an expected debiasing quantity, $\bar{c}$ has a negative correlation with $F(\bar\bx)-F(\bx^*)$, hence $2\EE(F(\bar\bx) - F(\bx^*))\bar{c}$ is negative. 
Next, we will estimate the three terms, and show that under the condition~\ref{eqn:thm_cond} the negative term $\EE(F(\bar\bx) - F(\bx^*))\bar{c}$ has larger absolute value than the first two terms. Hence, \eqref{eqn:pf_sketch_1} is negative in total.

\subsubsection*{Estimate of $\EE\bar{c}^2$}
Taking a Taylor expansion for $F(\tilde{\bx})$ at $\bar\bx$, by Assumption~\ref{assump:F}, we have
{\small
\begin{equation*}
F(\tilde\bx)-F(\bar\bx) = \nabla F(\bar\bx)^T(\tilde\bx-\bar\bx) + \frac{1}{2}\|\tilde\bx-\bar\bx\|^2_{H(\bar\bx)} + \frac{1}{6}\partial^3_{abc} F(\bar\bx)(\tilde{x}^a-\bar{x}^a)(\tilde{x}^b-\bar{x}^b)(\tilde{x}^c-\bar{x}^c) + O(\|\tilde\bx-\bar\bx\|^4). 
\end{equation*}}
Taking expectation over $\tilde\bx$, noting that $\EE_{\tilde{\bx}}(\tilde\bx-\bar\bx)=0$, we have 
\begin{equation}\label{eqn:pf_sketch_c2_0}
-\bar{c} = \frac{1}{2}\EE_{\tilde\bx}\|\tilde\bx-\bar\bx\|^2_{H(\bar\bx)} + \hot(3) = \frac{1}{2n^2} \sum\limits_{i=1}^n \|\bx_i-\bar\bx\|^2_{H(\bar\bx)} + \hot(3).
\end{equation}
The second equality above follows the Lemma~\ref{lm:quadratic_form}. Then, a Taylor expansion of $H(\bar\bx)$ at $\bx^*$ further gives
{\small
\begin{align}
-\bar{c} &= \frac{1}{2n^2} \sum\limits_{i=1}^n \|\bx_i-\bar\bx\|^2_{H(\bx^*)} + \frac{1}{2n^2} \sum\limits_{i=1}^n \|\bx_i-\bar\bx\|^2_{H(\bar\bx)-H(\bx^*)} + \hot(3) \label{eqn:pf_skecth_c2_1} \\
&= \frac{1}{2n^2} \sum\limits_{i=1}^n \|\bx_i-\bar\bx\|^2_{H(\bx^*)} + \frac{\hot(1)}{n} + \hot(3). \label{eqn:pf_sketch_c2_2}
\end{align}}
In Lemma~\ref{lm:moments2}, we show that generally we have $\EE\hot(k)=O(n^{-k/2})$ (strictly speaking, we need to take the expectation of its square). Also, $\sum_{i=1}^n \|\bx_i-\bar\bx\|^2_{H(\bx^*)}\sim O(n)$. Therefore, from~\ref{eqn:pf_sketch_c2_2} we can obtain
\begin{equation*}
    \EE \bar{c}^2 = \frac{1}{4n^4}\EE\left(\sum\limits_{i=1}^n \|\bx_i-\bar\bx\|^2_{H(\bx^*)}\right)^2 + O(\frac{1}{n^{2.5}}).
\end{equation*}
Finally, by showing
\begin{equation*}
    \frac{1}{4n^4}\EE\left(\sum\limits_{i=1}^n \|\bx_i-\bar\bx\|^2_{H(\bx^*)}\right)^2 = \frac{\tr(M_2H(\bx^*))^2}{4n^2} + O(\frac{1}{n^{3}}) = \frac{\sigma_2^2}{4n^2} + O(\frac{1}{n^{3}}),
\end{equation*}
we have the following estimate for $\EE\bar{c}^2$:
\begin{equation}\label{eqn:pf_sketch_c2}
    \EE \bar{c}^2 = \frac{\sigma_2^2}{4n^2} + O(\frac{1}{n^{2.5}}).
\end{equation}
Note that the leading term in the estimate has order $O(\frac{1}{n^2})$. This is also the leading term in all the following estimates. Finally we compare the coefficients of the leading terms and complete the proof by showing that the total coefficient is negative. The higher-order-terms $O(\frac{1}{n^{2.5}})$ can be made sufficiently small when $n$ is sufficiently large.

\subsubsection*{Estimate of $\EE\delta^2$} 
First, we have
\begin{align*}
\EE \delta^2 & = \EE \left(\EE_{\tilde{\bx}} F(\tilde{\bx}) - \frac{1}{K}\sum\limits_{k=1}^K F(\tilde{\bx}_k)\right)^2 = \EE \textrm{Var}_{\tilde\bx}\left(\frac{1}{K}\sum\limits_{k=1}^K F(\tilde{\bx}_k)\right) = \frac{1}{K}\EE \textrm{Var}_{\tilde\bx}(F(\tilde\bx)) \\
& = \frac{1}{K}\EE\EE_{\tilde\bx} \big(F(\tilde{\bx})-\EE_{\tilde{\bx}}F(\tilde{\bx})\big)^2 \leq \frac{1}{K}\EE\EE_{\tilde\bx} \big(F(\tilde{\bx})-F(\bar\bx)\big)^2 = \frac{1}{K}\EE (F(\tilde{\bx})-F(\bar\bx))^2.
\end{align*}
Still using a Taylor expansion of $F(\tilde\bx)$ at $\bar\bx$, substituting $K=C_Kn$, we have 
\begin{equation*}
\EE \delta^2 \leq \frac{1}{C_Kn}\EE\left(\nabla F(\bar\bx)^T(\tilde{\bx}-\bar\bx) + \hot(2) \right)^2.
\end{equation*}
Then, expanding $\nabla F(\bar\bx)$ at $\bx^*$ gives
\begin{align*}
\EE \delta^2 & \leq \frac{1}{C_Kn}\EE\left(\nabla F(\bx^*)^T(\tilde{\bx}-\bar\bx) + \hot(2) \right)^2 \\
& = \frac{1}{C_Kn}\left(\EE \big(\nabla F(\bx^*)^T(\tilde\bx-\bar\bx)\big)^2 + \hot(3)\right) \\
&= \frac{1}{C_Kn}\EE \big(\nabla F(\bx^*)^T(\tilde\bx-\bar\bx)\big)^2 + O(\frac{1}{n^{2.5}}).
\end{align*}
By Lemma~\ref{lm:quadratic_form}, we have 
\begin{align*}
\EE \big(\nabla F(\bx^*)^T(\tilde\bx-\bar\bx)\big)^2 & =\frac{1}{n^2}\EE \sum\limits_{i=1}^n\|\bx_i-\bar\bx\|_{\nabla F(\bx^*)\nabla F(\bx^*)^T}^2 = \frac{1}{n} \EE \|\bx_1-\bar\bx\|_{\nabla F(\bx^*)\nabla F(\bx^*)^T}^2 \nonumber\\
&= \frac{1}{n}\EE \|\bx_1-\bx^*\|_{\nabla F(\bx^*)\nabla F(\bx^*)^T}^2 + O(\frac{1}{n^{1.5}}).
\end{align*}
Therefore, 
\begin{equation}\label{eqn:pf_sketch_delta2}
\EE\delta^2 \leq \frac{\EE \|\bx_1-\bx^*\|_{\nabla F(\bx^*)\nabla F(\bx^*)^T}^2}{C_Kn^2} + O(\frac{1}{n^{2.5}}) = \frac{\sigma_1}{C_Kn^2} + O(\frac{1}{n^{2.5}}).
\end{equation}

\subsubsection*{Estimate of $\EE (F(\bar\bx) - F(\bx^*))\bar{c}$}
A Taylor expansion of $F(\bar\bx)$ at $\bx^*$ gives 
\begin{align}
\EE(F(\bar\bx)-F(\bx^*))\bar{c} = \EE\nabla F(\bx^*)^T(\bar\bx-\bx^*)\bar{c} + \EE \frac{1}{2}\|\bar\bx-\bx^*\|^2_{H(\bx^*)}\bar{c} + \EE\hot(3)\bar{c}. \label{eqn:pf_sketch_Fc_1}
\end{align}
For the third term on the right hand side of~\ref{eqn:pf_sketch_Fc_1}, recall that the estimate for $\bar{c}$ gives $\EE\bar{c}^2=O(\frac{1}{n^2})$, which implies $\EE\ \hot(3)\bar{c}\leq O(\frac{1}{n^{2.5}})$ via the Cauchy-Schwarz inequality. 

For the second term, note that the leading term of $\bar{c}$ has a negative sign, this second term can be shown to be negative. This term represents the goodness of the debiasing quantity, and is used to offset all other positive terms, including $\EE\bar{c}^2$ and $\EE\delta^2$. Specifically, by~\eqref{eqn:pf_sketch_c2_2}, we have
\begin{align}
\EE \frac{1}{2}\|\bar\bx-\bx^*\|^2_{H(\bx^*)}\bar{c} & = -\frac{1}{4n^2}\EE \|\bar\bx-\bx^*\|^2_{H(\bx^*)}\sum_{i=1}^n \|\bx_i-\bar\bx\|^2_{H(\bx^*)} + O(\frac{1}{n^{2.5}}).\nonumber
\end{align}
Then, we show that 
{\small
\begin{equation*}
\EE \|\bar\bx-\bx^*\|^2_{H(\bx^*)}\sum_{i=1}^n \|\bx_i-\bar\bx\|^2_{H(\bx^*)} = \frac{1}{n^2}\left(\sum\limits_{i,j=1}^n \EE \|\bx_i-\bx^*\|_{H(\bx^*)}\right)^2 +  O(\frac{1}{\sqrt{n}}) = \sigma_2^2 + O(\frac{1}{\sqrt{n}}),
\end{equation*}}
(details in Section~\ref{sec:pf}). Therefore, we have 
\begin{equation}\label{eqn:pf_sketch_Fc_term2}
\EE \frac{1}{2}\|\bar\bx-\bx^*\|^2_{H(\bx^*)}\bar{c} = -\frac{\sigma_2^2}{4n^2} + O(\frac{1}{n^{2.5}}).
\end{equation}
The leading term of the estimate above has the same magnitude as the estimate for $\EE\bar{c}^2$. However, this negative part will take over $\EE\bar{c}^2$ considering the factor $2$ before $\EE(F(\bar\bx)-F(\bx^*))\bar{c}$.

Next, we consider the first term on the right hand side of~\ref{eqn:pf_sketch_Fc_1}. We need a finer representation for $\bar{c}$. By Lemma~\ref{lm:3rd_form}, we can improve~\eqref{eqn:pf_sketch_c2_0} into 
\begin{align}
-\bar{c} &= \frac{1}{2n^2} \sum\limits_{i=1}^n \|\bx_i-\bar\bx\|^2_{H(\bar\bx)} + \frac{1}{6n^3}\sum\limits_{i=1}^n \partial^3_{abc} F(\bar\bx)(x_i^a-\bar{x}^a)(x_i^b-\bar{x}^b)(x_i^c-\bar{x}^c) + \hot(3) \nonumber\\
& = \frac{1}{2n^2} \sum\limits_{i=1}^n \|\bx_i-\bar\bx\|^2_{H(\bar\bx)} + \frac{\hot(0)}{n^2} + \hot(3). \label{eqn:pf_sketch_Fc_2}
\end{align}
Taylor expansions of $H(\bar\bx)$ and $\nabla^3F(\bar\bx)$ at $\bx^*$ further give
\begin{equation*}
\bar{c} = -\frac{1}{2n^2} \sum\limits_{i=1}^n \|\bx_i-\bar\bx\|^2_{H(\bx^*)} - \frac{1}{2n^2}\sum\limits_{i=1}^n \partial^3_{abc} F(\bx^*)(x_i^a-\bar{x}^a)(x_i^b-\bar{x}^b)(\bar{x}^c-(x^*)^c) + \frac{\hot(0)}{n^2} + \hot(3)
\end{equation*}
Therefore, 
\begin{align}
\EE (F(\bar\bx) - F(\bx^*))\bar{c} & = -\EE\nabla F(\bx^*)^T(\bar\bx-\bx^*)\frac{1}{2n^2} \sum\limits_{i=1}^n \|\bx_i-\bar\bx\|^2_{H(\bx^*)} \nonumber\\
&\ \ \ - \EE\nabla F(\bx^*)^T(\bar\bx-\bx^*)\frac{1}{2n^2}\sum\limits_{i=1}^n \partial^3_{abc} F(\bx^*)(x_i^a-\bar{x}^a)(x_i^b-\bar{x}^b)(\bar{x}^c-(x^*)^c) \nonumber\\
&\ \ \ + \EE\nabla F(\bx^*)^T(\bar\bx-\bx^*)\left(\frac{\hot(0)}{n^2} + \hot(3)\right). \label{eqn:pf_sketch_Fc_3}
\end{align}
The last term in~\eqref{eqn:pf_sketch_Fc_3} is obviously $O(\frac{1}{n^{2.5}})$. For the other two terms, we have the following estimate (see the details in Section~\ref{sec:pf}):
\begin{equation*}
-\EE\nabla F(\bx^*)^T(\bar\bx-\bx^*)\frac{1}{2n^2} \sum\limits_{i=1}^n \|\bx_i-\bar\bx\|^2_{H(\bx^*)} = -\frac{\sigma_3}{2n^2} + O(\frac{1}{n^{2.5}}),
\end{equation*}
\begin{equation*}
- \EE\nabla F(\bx^*)^T(\bar\bx-\bx^*)\frac{1}{2n^2}\sum\limits_{i=1}^n \partial^3_{abc} F(\bx^*)(x_i^a-\bar{x}^a)(x_i^b-\bar{x}^b)(\bar{x}^c-(x^*)^c) =-\frac{\sigma_4}{2n^2} + O(\frac{1}{n^3}).
\end{equation*}
Altogether, we have 
\begin{equation*}
    \EE\nabla F(\bx^*)^T(\bar\bx-\bx^*)\bar{c} \leq -\frac{\sigma_3+\sigma_4}{2n^2} + O(\frac{1}{n^{2.5}}).
\end{equation*}
And for $\EE(F(\bar\bx)-F(\bx^*)\bar{c})$ we have 
\begin{equation}\label{eqn:pf_sketch_Fc}
\EE(F(\bar\bx)-F(\bx^*)\bar{c}) \leq -\frac{\sigma_2^2}{4n^2} - \frac{\sigma_3+\sigma_4}{2n^2} + O(\frac{1}{n^{2.5}}).
\end{equation}

\subsubsection{Putting together}
Finally, combining the estimates~\eqref{eqn:pf_sketch_c2}, \eqref{eqn:pf_sketch_delta2}, and~\eqref{eqn:pf_sketch_Fc}, we have 
\begin{equation}
\EE \left(\bar{c}^2+\delta^2+2(F(\bar\bx)-F(\bx^*)\bar{c})\right) \leq -\frac{\sigma_2^2}{4n^2} + \frac{\sigma_1}{C_Kn^2} - \frac{\sigma_3+\sigma_4}{n^2} + O(\frac{1}{n^{2.5}}).
\end{equation}
The coefficient of $\frac{1}{n^2}$ is negative when condition~\ref{eqn:thm_cond} is satisfied. This completes the proof.

\section{Numerical experiments}\label{sec:exp}

\newcommand{\rmser}{\textrm{RMSE}_r}
\newcommand{\biasr}{\textrm{Bias}_r}

In this section we show numerical results of the debiasing methods proposed in Section~\ref{sec:general}. As a general framework of debiasing convex/concave functions, we test our methods on a wide variety
of problems ranging from simple convex functions, optimization problems, to functionals of probability distributions. 
Specifically, we test the following seven problems:
(P1) Quadratic functions,
(P2) Fourth-order polynomials,
(P3) Rational functions,
(P4) Unconstrained optimization problems with random objective functions,
(P5) Constrained optimization problems with random constraints,
(P6) Entropy of discrete probability distribution,
(P7) Wasserstein distance between two probability distributions.
For all the problems, we apply both the shifting and scaling methods using bootstrap. The covariance estimate method is tested on a subset of problems for which it is easy to obtain an estimate of the Hessian at $\bx^*$.

\subsection{P1: Quadratic functions}
We consider simple multivariate quadratic functions 
\begin{equation*}
F(\bx) = \bx^TA\bx\tag{P1},
\end{equation*}
with $\bx\in\RR^d$ and $A\in\RR^{d\times d}$ being a positive definite matrix. Obviously, $F$ is a convex function with respect to $\bx$. As described in previous sections, given $A$ and a probability distribution $\mu$ on $\RR^d$ which satisfies $\EE_\mu \bx=\bx^*$, we estimate $F(\bx^*)$ using samples from $\mu$. The debiasing methods are tested for many different $A$'s and $\mu$'s. For each $A$ and $\mu$, the experiment is repeated by $R=1000$ times, and in each experiment  we first sample a new set of noisy observations and then run the debiasing methods. 

To measure and compare the performance of our debiasing methods, we compute the root mean squared error (RMSE) and the average bias of the estimates across $1000$ experiments, and compare the values with that of the naive estimate. Concretely, let $F_{\textrm{debias}}^i$ and $F_{\textrm{naive}}^i$ be the estimate given by the debiasing method and the naive method in the $i$-th experiment, respectively, we compute and compare the following two relative quantities:
\begin{equation}\label{eqn:error_def}
\textrm{RMSE}_r = \frac{\sqrt{\sum_{i=1}^{1000} (F_{\textrm{debias}}^i-F(\bx^*))^2}}{\sqrt{\sum_{i=1}^{1000}(F_{\textrm{naive}}^i-F(\bx^*))^2}},\qquad \textrm{Bias}_r = \frac{\sum_{i=1}^{1000}(F_{\textrm{debias}}^i-F(\bx^*))}{\sum_{i=1}^{1000}(F_{\textrm{naive}}^i-F(\bx^*))}.
\end{equation}
By the definitions, when the debiasing method is effective, we expect $\rmser\in[0,1)$ and $\biasr\in(-1,1)$. The closer these quantities are to zero, the more effective the debiasing method. These two errors will also be studied in the experiments for other problems. 

Before the debiasing methods are applied, a generic positive definite matrix $A$ is generated by its eigenvalue decomposition. Concretely, we generate a diagonal matrix $\Lambda$ with positive entries, and an orthogonal matrix $Q$, and then let $A=Q\Lambda Q^T$. We take $\mu$ as an isotropic Gaussian distribution centered at $\bx^*$, i.e. $\mu=N(\bx^*,\sigma^2I)$. 
The experiment results for (P1) are shown in Figure~\ref{fig:exp_P1_1} and~\ref{fig:exp_P1_2}. In the experiments, we study the performance of all the three debiasing methods (shifting, scaling, covariance estimate) for problems with different dimensions (Figure~\ref{fig:exp_P1_1} left), different condition numbers of $A$ (Figure~\ref{fig:exp_P1_1} middle), different noisy strength $\sigma$ (Figure~\ref{fig:exp_P1_1} right), and different norm of $\bx^*$ (Figure~\ref{fig:exp_P1_2} left). Throughout these experiments, we take $n=10$ noisy observations of $\bx^*$, and do $K=10$ rounds of resampling when bootstrap is applied. The figures show that the shifting debiasing method and the covariance estimate method can usually significantly reduce the estimate error and bias, while the scaling method does not perform as well when the dimension is large or the noise is strong. The debiasing methods become less effective when the noise level is low or $\|\bx^*\|$ is large, in which case the noise observations are (effectively) close to $\bx^*$ and the naive estimate is already good. The experiments on the condition number of $A$ show that our methods are not sensitive to the spectrum of $A$. Here we remark that the covariance estimate method performs especially well because the function $F$ is quadratic, hence the Hessian matrix has all information about the function.

\begin{figure}[ht]
    \centering
    \includegraphics[width=0.3\textwidth]{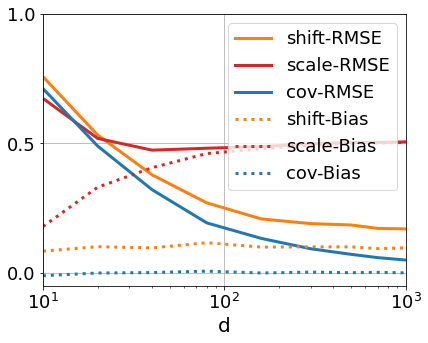}
    \includegraphics[width=0.3\textwidth]{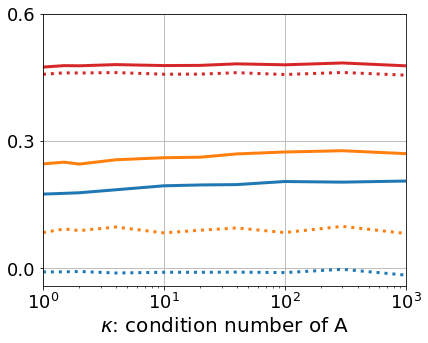}
    \includegraphics[width=0.3\textwidth]{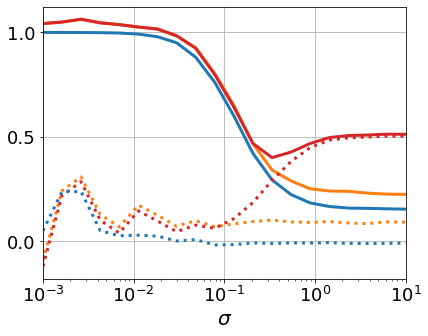}
    \caption{Experiment results for the quadratic function (P1). For all the experiments shown in this figure, we take $n=K=10$ and $\|\bx^*\|^2=2$. {\bf (left)} the relative RMSEs and biases for problems with different dimension $d$, while we fix $\sigma=1$ and $\kappa=2$ ($\kappa$ is the condition number of $A$). {\bf (middle)} results for problems with different $\kappa$, while we fix $d=100$ and $\sigma=1$. {\bf (right)} results for problems with different $\sigma$, while we fix $d=100$ and $\kappa=2$.}
    \label{fig:exp_P1_1}
\end{figure}

For the bootstrap methods, we also study the number of resampling rounds $K$ required for different $n$. Results in Figure~\ref{fig:exp_P1_2} show that a constant $K$ is usually sufficient to approach best-achievable performance for any $n$. The RMSE almost stops decreasing after $K\approx 50$ even when $n$ is much bigger. This shows that our theories in Section~\ref{sec:thm} can potentially be improved.

\begin{figure}[ht]
    \centering
    \includegraphics[width=0.3\textwidth]{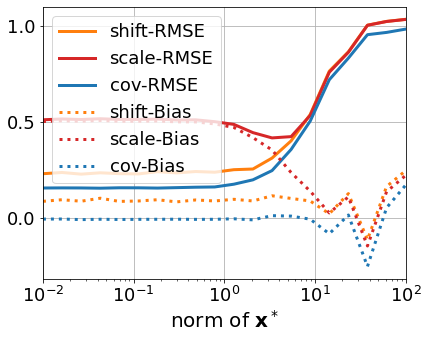}
    \includegraphics[width=0.3\textwidth]{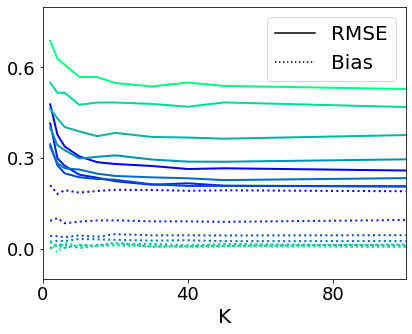}
    \includegraphics[width=0.38\textwidth]{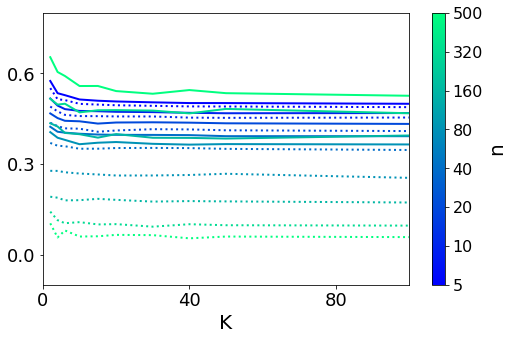}
    \caption{{\bf (left)} results for P1 with different norms of $\bx^*$, while we fix $d=100$, $\kappa=2$, and $\sigma=1$. {\bf (middle and right)} Bootstrap methods (shifting and scaling) for different $n$ and $K$. The middle panel shows the results for the shifting method, and the right panel shows the results for the scaling method. Results for different $n$ are shown in lines with different colors. During the experiments we fix $d=100$, $\kappa=2$, and $\sigma=1$.}
    \label{fig:exp_P1_2}
\end{figure}

\subsection{P2: Fourth-order polynomials}
We then consider a higher-order polynomial function given by
\begin{equation*}
F(\bx) = (\bx^TA\bx)^2\tag{P2},
\end{equation*}
where $\bx\in\RR^d$ and $A\in\RR^{d\times d}$ is a positive definite matrix. Compared with P1, this function has non-uniform curvature. It is flat when $\bx$ is small and grows fast when $\bx$ is large. In P2, we consider similar experiment settings as those for $P1$. We test the methods against problems with different dimensions, noise levels, and norms of $\bx^*$. The results are shown in Figures~\ref{fig:exp_P2_1}. We see that for this problem only the scaling method performs well in most cases. The covariance estimate method works when $\sigma$ and $\|\bx^*\|$ lie in specific ranges. The shifting method sometimes even performs worse than the naive estimate. Experiments on the condition number of $A$ still show that the methods are insensitive with $\kappa$. We ignore the figure here.

\begin{figure}[ht]
    \centering
    \includegraphics[width=0.3\textwidth]{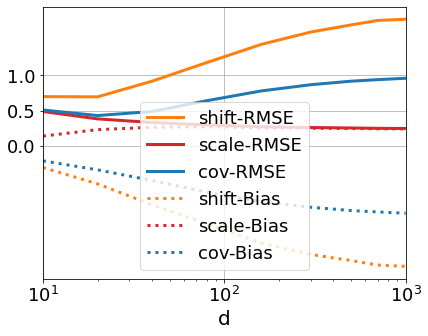}
    \includegraphics[width=0.3\textwidth]{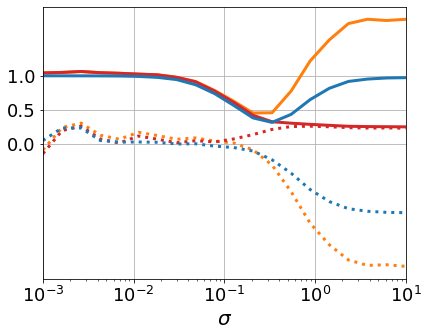}
    \includegraphics[width=0.3\textwidth]{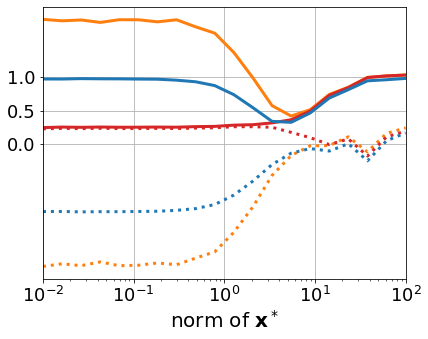}
    \caption{Experiment results for the fourth-order function (P2). For all the experiments, we take $n=K=10$ and $\kappa=2$. {\bf (left)} the relative RMSEs and biases for problems with different dimension $d$, while we fix $\sigma=1$ and $\|\bx^*\|^2=2$. {\bf (middle)} results for problems with different $\sigma$, while we fix $d=100$ and $\|\bx^*\|^2=2$. {\bf (right)} results for problems with different $\|\bx^*\|$, while we fix $d=100$ and $\sigma=1$.}
    \label{fig:exp_P2_1}
\end{figure}

\subsection{P3: Rational functions}
Next, we consider a rational function
\begin{equation*}
F(\bx) = \sum\limits_{i=1}^d \big(b_ix_i + \frac{c_i}{x_i}\big)\tag{P3},
\end{equation*}
where $\bx=(x_1,...,x_d)^T\in\RR^d$ is the input and $b_i, c_i$, $i=1,2,...,d$ are positive numbers. This function is convex for $x_i>0$, $i=1,...,d$. Hence, in the experiments, we take a coordinate-wise exponential distribution as $\mu$. Fixing $n=K=10$, we test our methods on problems with different dimensions, magnitudes of $c_i$, and $\|\bx^*\|$. Note that changing the magnitude of $b_i$ is equivalent with changing $c_i$ in the opposite direction. Results in Figure~\ref{fig:exp_P3} show that the three debiasing methods perform similarly. They become more effective when the dimension of the problem increases. When $c$ is small or $\bx^*$ is large, the debiasing methods do not provide much improvement compared with the naive estimate. This is because in these cases the function is almost linear around $\bx^*$ and the convexity bias is small.

\begin{figure}[ht]
    \centering
    \includegraphics[width=0.3\textwidth]{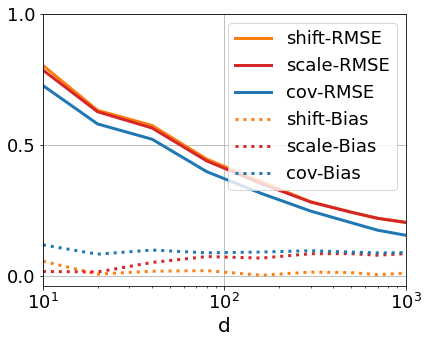}
    \includegraphics[width=0.3\textwidth]{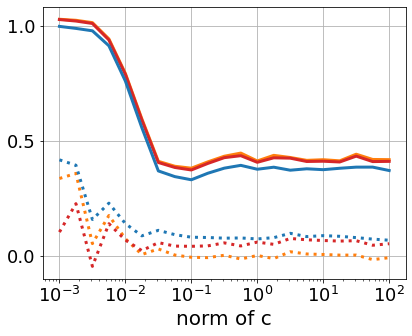}
    \includegraphics[width=0.3\textwidth]{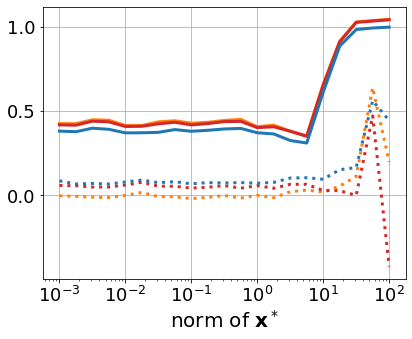}
    \caption{Experiment results for the rational function (P3). For all the experiments, we keep $n=K=10$ and $\|\bb\|=1$. {\bf (left)} the relative RMSEs and biases for problems with different dimension $d$, while we fix $\|\bc\|=1$ and $\|\bx^*\|=2$. {\bf (middle)} results for problems with different $\|\bc\|$, while we fix $d=100$ and $\|\bx^*\|^2=2$. {\bf (right)} results for problems with different $\|\bx^*\|$, while we fix $d=100$ and $\|\bc\|=1$.}
    \label{fig:exp_P3}
\end{figure}

\subsection{P4: Unconstrained optimization problems}
When the objective function is parameterized by a parameter vector, the optimal value of some optimization problems is a convex or concave function of the parameters. For instance, let $F(\alpha)$ be the optimal value of the following minimization problem given some $\alpha\in\RR^d$,
\begin{equation}\label{eqn:opt_uncons}
    \min_\bx \alpha^Tf(\bx) + g(\bx),
\end{equation}
where $f:\RR^p\rightarrow\RR^d$, $g:\RR^p\rightarrow\RR$ are functions of $\bx$. Then, $F$ is a concave function of $\alpha$ because it is the minimum of a family of affine functions~\cite{boyd2004convex}. Therefore, our debiasing methods can be applied when we are interested in the minimum value of~\eqref{eqn:opt_uncons} at $\alpha^*$ but can only get access to its noisy observations. 

In the experiments, we consider specifically the minimization of a quadratic function when the Hessian of the objective function has some randomness, i.e. the concave function we debias is 
\begin{equation*}
F(A) = \min_{\bx\in\RR^d} \frac{1}{2}\bx^TA\bx + \bb^T\bx,\tag{P4}
\end{equation*}
where $A\in\RR^{d\times d}$ is a positive definite matrix with randomness, and $\bb$ is a fixed vector. We easily have $F(A) = -\frac{1}{2} \bb^T A^{-1} \bb$. Given a generic groundtruth Hessian matrix $A^*=U\Lambda U^T$, where $U\Lambda U^T$ is the eigenvalue decomposition of $A^*$ and $\Lambda=\textrm{diag}(\lambda_1, ..., \lambda_d)$, we generate noisy observations of $A^*$ by sampling 
\begin{equation*}
    A \sim U\cdot\textrm{diag}(\xi_1\lambda_1, ..., \xi_d\lambda_d)\cdot U^T,
\end{equation*}
where $\xi_1, ..., \xi_d$ are independent random variables that follows the gamma distribution with shape $k$ and scale $1/k$, whose density function is $f(x)=\frac{k^k}{\Gamma(k)}x^{k-1}e^{-kx}$ and expectation is $1$. This makes sure that all noisy observations are positive definite. 

Fixing $n=10$ and $K=100$, Figure~\ref{fig:exp_P4} shows the RMSEs and biases of the shifting and scaling methods for different dimensions, condition numbers of $A^*$, and the shape parameter $k$. Note that the actual dimension of the problem is $d^2$. The results show that our methods are generally effective. The methods perform better when the dimension is higher, or the condition number of $A$ is not too large. When $k$ is small, i.e. the variance of the data distribution is large, the shifting method does not work well, while the scaling method keeps working well.

\begin{figure}[ht]
    \centering
    \includegraphics[width=0.3\textwidth]{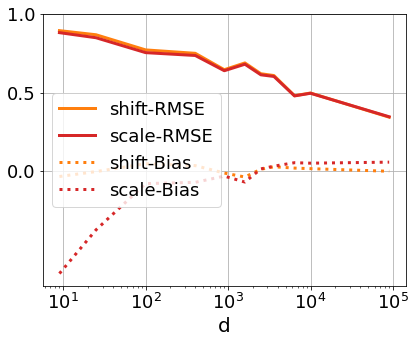}
    \includegraphics[width=0.31\textwidth]{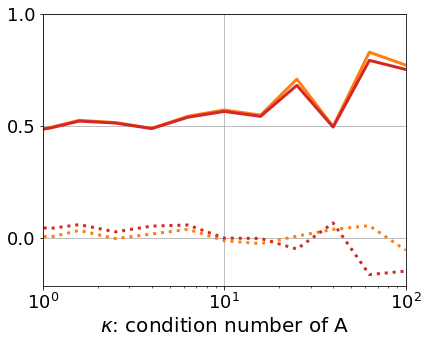}
    \includegraphics[width=0.3\textwidth]{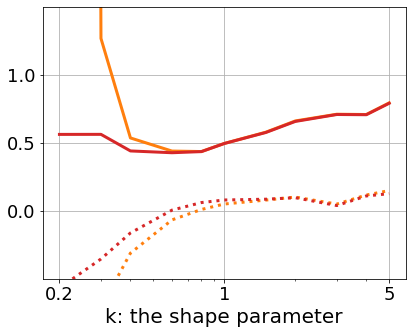}
    \caption{Experiment results for the rational function (P4). For all the experiments, we keep $n=10$ and $K=100$. {\bf (left)} the relative RMSEs and biases for problems with different dimension $d^2$, while we fix $\kappa=2$ and $k=1$. {\bf (middle)} results for problems with different $\kappa$, while we fix $d=100$ and $k=1$. {\bf (right)} results for problems with different $k$, while we fix $d=100$ and $\kappa=2$.}
    \label{fig:exp_P4}
\end{figure}

\subsection{P5: Constrained optimization problem}
Convex debiasing methods can also be applied to constrained optimization problems with randomness. The randomness can appear on the constraints. Consider
\begin{align}
    \min_\bx\ f(\bx)\ \ \ \  s.t.\ A\bx=\bb. \label{eqn:opt_cons}
\end{align}
Fixing $f$ and $A$, let $F(\bb)$ be the optimal value of~\eqref{eqn:opt_cons} given $\bb$. Suppose strong duality holds, then 
\begin{equation*}
F(\bb) = \min_\bx \max_\mu f(\bx) + \mu^T(A\bx-\bb) = \max_\mu \big(\mu^T\bb + \max_\bx (f(\bx)-\mu^TA\bx)\big),
\end{equation*}
which shows that $F(\bb)$ is the maximum of affine functions of $\bb$. Therefore, $F$ is convex. The debiasing methods applies when we are interested in the optimal value at some $\bb^*$ while only its noisy observations can be obtained. 

In the experiments, we estimate
\begin{equation*}
F(\bb^*) = \arg\min_{\bx}\ \bx^TB\bx, \quad s.t.\ A\bx=\bb\tag{P5},
\end{equation*}
where $\bx\in\RR^p$, $B\in\RR^{p\times p}$ is a positive definite matrix, $A\in\RR^{d\times p}$, and $\bb\in\RR^d$. Given $\bb^*$, we take $\mu$ as an isotropic Gaussian distribution centered at $\bb^*$, i.e. $\mu=N(\bb^*, \sigma^2I)$. Throughout the experiments, we fix $n=10$ and $K=100$, and test the methods for different dimensions, ratio of $d/p$, and noise level $\sigma$. Figure~\ref{fig:exp_P5} shows the results. We see that both methods can significantly reduce the bias and RMSE, while the shifting method usually outperforms the scaling method. Better performance is achieved when $d$ is large, $d$ is not too close to $p$, or $\sigma$ is large. 

\begin{figure}[ht]
    \centering
    \includegraphics[width=0.3\textwidth]{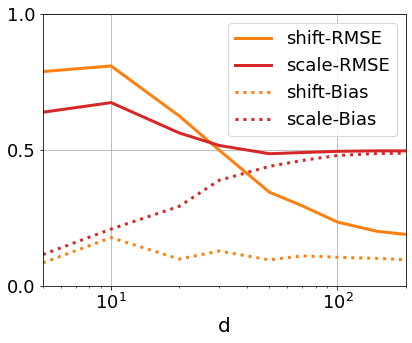}
    \includegraphics[width=0.3\textwidth]{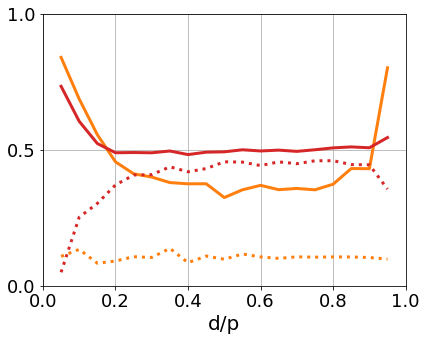}
    \includegraphics[width=0.3\textwidth]{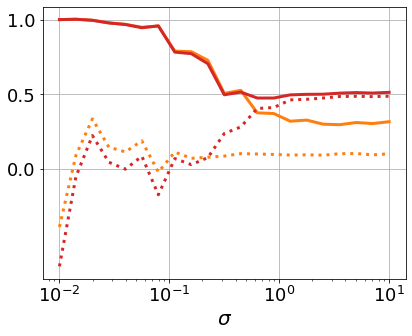}
    \caption{Experiment results for the constrained optimization problem (P5). For all the experiments, we keep $n=10$, $K=100$, and $\|\bb\|=1$. {\bf (left)} the relative RMSEs and biases for problems with different dimension $d$, while we take $p=2d$. {\bf (middle)} results for problems with different $d$ fixing $p=100$. The horizontal axis is the ratio of $d$ and $p$. {\bf (right)} results for problems with different $\sigma$, while we fix $d=100$ and $p=200$.}
    \label{fig:exp_P5}
\end{figure}

\subsection{P6: Entropy}
Next, we test the estimate of entropy for discrete probability distributions, i.e.
\begin{equation*}
    H(p) = -\sum\limits_{i=1}^n p_i \ln p_i, \tag{P6}
\end{equation*}
where $p=(p_1,...,p_d)$ satisfies $p_i\geq0$ and $\sum_{i=1}^d p_i=1$.
We generate the groundtruth distribution $p^*$ from symmetric Dirichlet distribution with parameter $\alpha$. Noisy observations of $p^*$ are the empirical distributions of single samples from $p^*$. 
In the experiments, we fix $K=100$, and study the performance of the debiasing methods for problems with different dimension $d$, $\alpha$, and the ratio of $n$ and $d$.

For this problem, the covariance estimate method takes simple form. Let $\bar{p}=(p_1,p_2,...,p_d)$ be the average of noisy observations of $p^*$. Then, the covariance matrix is $\textrm{diag}(\bar{p}) - \bar{p}\bar{p}^T$, and the Hessian of the entropy functional (viewed as a function of vector $p$) is $\textrm{diag}(-1/\bar{p})$. Hence, we have $\tr(CH)=-(d-1)$, and the debiased estimate is $H(\bar{p})+\frac{d-1}{2n}$. It is the same as the classical debiasing scheme in~\cite{miller1955note}.

The results are shown in Figure~\ref{fig:exp_P6}. For well-behaved problems, all three methods perform similarly. They can all significantly reduce the bias and error. When $\alpha$ is small, i.e. the distribution is sparse, or when $n$ is small compared with $d$, the bootstrap methods perform better than the covariance estimate method. We remark here that the experiments are part of the effort to show the wide applicability of our methods. Many delicate methods have been developed specifically for the entropy estimate problem, and we do not expect our methods to outperform all of them on this specific problem. Hence, we do not compare our methods with state-of-the-art methods here. 

\begin{figure}[ht]
    \centering
    \includegraphics[width=0.3\textwidth]{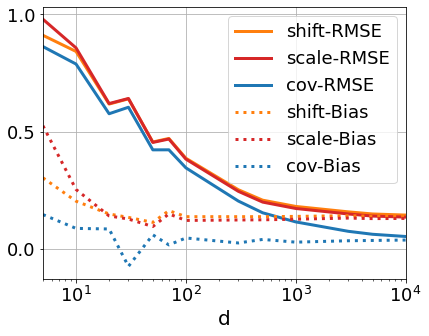}
    \includegraphics[width=0.3\textwidth]{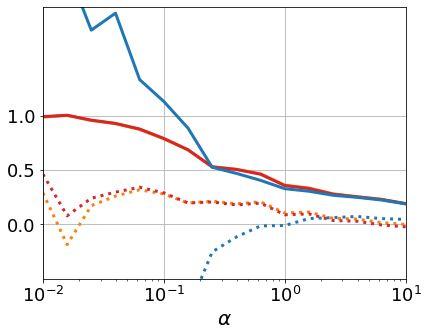}
    \includegraphics[width=0.3\textwidth]{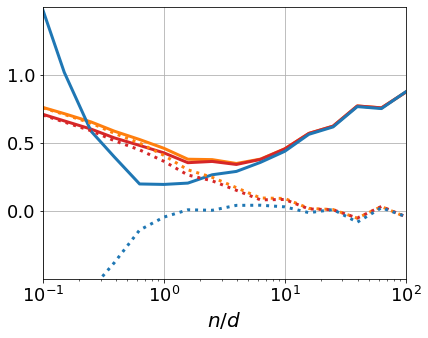}
    \caption{Experiment results for the entropy estimate problem (P6). For all the experiments, we keep $K=100$. {\bf (left)} the relative RMSEs and biases for problems with different dimension $d$, while we take $n=5d$ and $\alpha=1$. {\bf (middle)} results for problems with different $\alpha$ fixing $d=100$ and $n=500$. {\bf (right)} results for problems with different $n/d$, while we fix $d=100$ and $\alpha=1$.}
    \label{fig:exp_P6}
\end{figure}

\subsection{P7: Wasserstein distance}
Finally, we estimate the 2-Wasserstein distance between a pair of distributions ($p$, $q$) on $\RR^d$, defined by
\begin{equation*}
    W_2(p,q)^2 = \int_{\RR^d \times \RR^d} \|\bx-\by\|_2^2 d\gamma(\bx,\by),\tag{P7}
\end{equation*}
where $\gamma$ is a coupling of $p$ and $q$ whose two marginals are $p$ and $q$, respectively. In the experiments, $p^*$ and $q^*$ are taken as Gaussian distributions $N(\mu_1, \sigma_1^2I)$ and $N(\mu_2, \sigma_2^2I)$, where $\mu_1, \mu_2\in\RR^d$ are two vectors fixed for all trials. Noisy observations of $p^*$ and $q^*$ are empirical distributions of i.i.d. samples of the distributions. The Wasserstein distance between empirical distributions are computed by a linear programming problem. Specifically, let $\hat{p}$ be an empirical distribution on $\bx_1,...,\bx_n$, $\hat{q}$ be an empirical distribution on $\by_1,...,\by_m$, $W(\hat{p},\hat{q})$ is given by the following problem
\begin{equation}\label{eqn:wass}
    \min_{\gamma\in\RR^{m\times n}} \sum\limits_{i,j} \gamma_{ij}\|\bx_i-\by_j\|^2,\quad s.t. \gamma\mathbf{1}=\frac{1}{m}\mathbf{1},\ \ \gamma^T\mathbf{1}=\frac{1}{n}\mathbf{1},
\end{equation}
where $\mathbf{1}$ denotes the all-one vector with proper size. 

In the experiments, without loss of generality, we take $\mu_1=0$. We consider $\sigma_1=\sigma_2=\sigma$, and fix $\sigma=1$, $K=50$. Figure~\ref{fig:exp_P7} shows the results for problems with different dimension $d$, $\|\mu_2\|$, and $n$. Results show that our methods can improve the naive estimate. The improvement gets smaller when the dimension is higher. This might be caused by the essential difficulty of estimating the Wasserstein distance in high dimensional spaces~\cite{weed2019sharp}.

\begin{figure}[ht]
    \centering
    \includegraphics[width=0.3\textwidth]{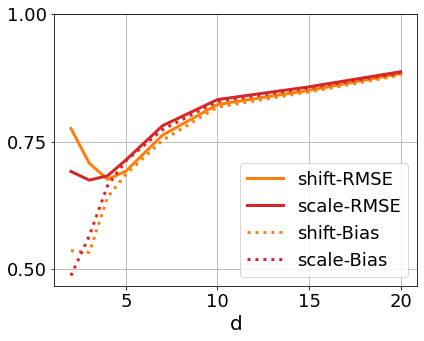}
    \includegraphics[width=0.3\textwidth]{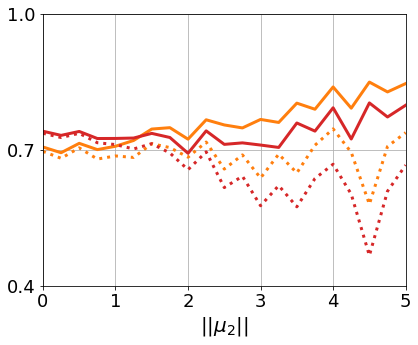}
    \includegraphics[width=0.3\textwidth]{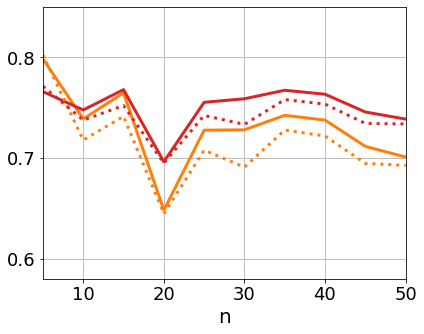}
    \caption{Experiment results for the Wasserstein distance problem (P7). For all the experiments, we keep $K=50$. {\bf (left)} the relative RMSEs and biases for problems with different dimension $d$, while we take $n=10$ and $\|\mu_2\|=1$. {\bf (middle)} results for problems with different $\|\mu_2\|$ fixing $d=5$ and $n=10$. {\bf (right)} results for problems with different $n$, while we fix $d=5$ and $\|\mu_2\|=1$.}
    \label{fig:exp_P7}
\end{figure}

\section{Summary}
In this work, we propose a general framework to correct the bias introduced by convexity/concavity when estimating the value of functions or functionals. Our methods do not require domain knowledge of the objective functions, and can be applied as long as noisy observations of the groundtruth input are available. Numerical experiments on a wide range of problems show the effectiveness of our methods. 

While our methods have general applicability, their performance may not be optimal on specific problems. For example, extensive researches are conducted on the estimate of entropy, and some proposed methods might have better empirical or theoretical properties than our methods. We emphasize that the value of our approach lies mostly on its generality---it can serve as an off-the-shelf tool to debias the estimate of convex functions and obtain improvement upon the naive estimate using the sample mean. 

Nevertheless, the gain is not free---it comes with a price of additional computational cost. Compared with the naive estimate, our methods with bootstrap requires evaluating the function for multiple times. For some applications, such as optimization problems, evaluating the function can be expensive. Though, what to blame is not the methods, but the limited observations and knowledge on the problem. This is the cost of generality. If we have more domain knowledge, the covariance estimate method can be applied without high computational cost.

\section{Proof of Theorem~\ref{thm:main_shifting}}\label{sec:pf}
\subsection{Notations.} 
In the proof, we write $h=O(g)$ if there exists a constant $C$ independent with $n$ such that $|h|< C|g|$ always holds. When using these $O$ notations, the dimension $d$ is treated as a constant. We use $\hot(\cdot;\cdot)$ to denote higher-order-terms appearing in the Taylor expansion. Specifically, for quantities $f_1,\cdots,f_k$ and integer $r$, $\hot(f_1,\cdots,f_k;r)$ denotes (sum of) terms with form $f_1^{r_1}\cdots f_k^{r_k}$ with $r_1+\cdots+r_k\geq r$.
When taking expectation, we use $\EE$ to denote the expectation over both the sampling of $\xobs$ and the sampling of $\{\tilde{\bx}\}$ during bootstrap, and use $\EE_{\tilde{\bx}}$ to denote the expectation over the sampling of $\{\tilde{\bx}\}$ based on a fixed set of $\bx_1, ..., \bx_n$. The norm $\|\cdot\|$ always means 2-norm. We use $H(\bx)$ to denote the Hessian matrix of $F$ at $\bx$, i.e. $H(\bx)=\nabla^2 F(\bx)$. For any matrix $A\in\RR^{d\times d}$ and vector $\bx\in\RR^d$, denote $\|\bx\|^2_A = \bx^TA\bx$.

We use Einstein notations mostly when tensors with dimension $\geq3$ are involved. When only vectors and matrices are involved, we still use the conventional matrix product notations. For vectors, we use superscript as the index of entries, and subscript as the index of the vector in a set of vectors. Entries are denoted by lower case letters. For example, the $i$-th noisy observation vector is given by $\bx_i=(x^1_i, x^2_i, \cdots, x^d_i)\in\RR^d$.

\subsection{Main proof}
To prove the theorem, first notice that 
\begin{equation*}
E(F(\bar\bx)+\hc-F(\bx^*))^2 = \EE (F(\bar\bx) - F(\bx^*))^2 + \EE \hc^2 + 2\EE (F(\bar\bx) - F(\bx^*))\hc.
\end{equation*}
Hence, we only need to show $\EE \big(\hc^2 + 2(F(\bar\bx) - F(\bx^*))\hc\big) < 0$. Let 
\begin{equation*}
\bar{c} = \EE_{\tilde{\bx}}\hc = F(\bar\bx) - \EE_{\tilde{\bx}} F(\tilde{\bx}),\ \ \textrm{and}\ \delta=\EE_{\tilde{\bx}} F(\tilde{\bx}) - \frac{1}{K}\sum_{k=1}^K F(\tilde{\bx}_k). 
\end{equation*}
Then, we have $\hc = \bar{c} + \delta$ and $\EE_{\tilde{\bx}}\delta = 0$, and 
\begin{equation*}
\EE \big(\hc^2 + 2(F(\bar\bx) - F(\bx^*))\hc\big) = \EE\big(\bar{c}^2+2\bar{c}\delta+\delta^2 + 2(F(\bar\bx) - F(\bx^*))\bar{c}+2(F(\bar\bx) - F(\bx^*))\delta\big).
\end{equation*}
Note that $\bar{c}$ and $F(\bar{\bx})-F(\bx^*)$ do not depend on the sampling of $\tilde{\bx}_k$, by the law of total expectation, we have 
\begin{equation*}
    \EE \bar{c}\delta = \EE\left(\EE_{\tilde{\bx}}\bar{c}\delta\right) = \EE\left(\bar{c}\EE_{\tilde{\bx}}\delta\right) = 0,
\end{equation*}
and similarly $\EE(F(\bar{\bx})-F(\bx^*))\delta=0$. Therefore,
\begin{equation}\label{eqn:thm_shifting:pf:1}
\EE \big(\hc^2 + 2(F(\bar\bx) - F(\bx^*))\hc\big) = \EE\big(\bar{c}^2+\delta^2 + 2(F(\bar\bx) - F(\bx^*))\bar{c}\big).
\end{equation}

Next, we estimate the three terms on the right hand side of~\eqref{eqn:thm_shifting:pf:1} separately. In the estimates below, we take $O(\frac{1}{n^2})$ terms as leading terms, and show that there is no term with lower order. 

\subsubsection{Estimate of $\EE\bar{c}^2$.}\label{sssec:cbar2_shift}
In the proof, we will extensively use the Taylor expansion of $F$ and its derivatives. The results are given in Lemma~\ref{lm:taylor}.
By the the Taylor expansion~\eqref{eqn:taylor} in Lemma~\ref{lm:taylor}, we have 
\begin{equation*}
\small
F(\tilde\bx)-F(\bar\bx) = \nabla F(\bar\bx)^T(\tilde\bx-\bar\bx) + \frac{1}{2}\|\tilde\bx-\bar\bx\|^2_{H(\bar\bx)} + \frac{1}{6}\partial^3_{abc} F(\bar\bx)(\tilde{x}^a-\bar{x}^a)(\tilde{x}^b-\bar{x}^b)(\tilde{x}^c-\bar{x}^c) + O(\|\tilde\bx-\bar\bx\|^4). 
\end{equation*}
Taking expectation over $\tilde\bx$, note that $\EE_{\tilde{\bx}}(\tilde\bx-\bar\bx)=0$, we have 
\begin{equation*}
-\bar{c} = \frac{1}{2}\EE_{\tilde\bx}\|\tilde\bx-\bar\bx\|^2_{H(\bar\bx)} + \frac{1}{6}\EE_{\tilde\bx}\partial^3_{abc} F(\bar\bx)(\tilde{x}^a-\bar{x}^a)(\tilde{x}^b-\bar{x}^b)(\tilde{x}^c-\bar{x}^c) + \EE_{\tilde\bx}O(\|\tilde\bx-\bar\bx\|^4).
\end{equation*}
According to Lemma~\ref{lm:quadratic_form} and~\ref{lm:3rd_form}, we have
\begin{equation*}
-\bar{c} = \frac{1}{2n^2} \sum\limits_{i=1}^n \|\bx_i-\bar\bx\|^2_{H(\bar\bx)} + \frac{1}{6n^3} \sum\limits_{i=1}^n \partial^3_{abc} F(\bar\bx)(x_i^a-\bar{x}^a)(x_i^b-\bar{x}^b)(x_i^c-\bar{x}^c) + \EE_{\tilde\bx}O(\|\tilde\bx-\bar\bx\|^4). 
\end{equation*}
By the Taylor expansion~\eqref{eqn:taylor2} for $H$, for any vector $\by\in\RR^d$, we have
\begin{equation*}
\|\by\|_{H(\bar\bx)}^2 = \|\by\|_{H(\bx^*)}^2 + \partial^3_{abc}F(\bx^*)y^ay^b(\bar{x}^c-(x^*)^c) + O(\|\by\|^2\|\bar{\bx}-\bx^*\|^2).
\end{equation*}
Therefore, for $\bar{c}$ we have
\begin{align}
-\bar{c} &= \frac{1}{2n^2} \sum\limits_{i=1}^n \|\bx_i-\bar\bx\|^2_{H(\bx^*)} + \frac{1}{2n^2}\sum\limits_{i=1}^n \partial^3_{abc} F(\bx^*)(x_i^a-\bar{x}^a)(x_i^b-\bar{x}^b)(\bar{x}^c-(x^*)^c) \nonumber\\
 &\ \ \ + O\left(\frac{1}{2n^2}\sum\limits_{i=1}^n \|\bx_i-\bar\bx\|^2_2 \|\bar\bx-\bx^*\|^2\right) + \frac{1}{6n^3} \sum\limits_{i=1}^n \partial^3_{abc} F(\bar\bx)(x_i^a-\bar{x}^a)(x_i^b-\bar{x}^b)(x_i^c-\bar{x}^c) \nonumber\\
 &\ \ \ + \EE_{\tilde\bx}O(\|\tilde\bx-\bar\bx\|^4). \nonumber
\end{align}
Let 
\begin{align*}
c_1 &= -\frac{1}{2n^2}\sum\limits_{i=1}^n \partial^3_{abc} F(\bx^*)(x_i^a-\bar{x}^a)(x_i^b-\bar{x}^b)(\bar{x}^c-(x^*)^c),\quad  c_2=\frac{1}{2n^2}\sum\limits_{i=1}^n \|\bx_i-\bar\bx\|_2^2 \|\bar\bx-\bx^*\|^2, \\
c_3 &=-\frac{1}{6n^3} \sum\limits_{i=1}^n \partial^3_{abc} F(\bar\bx)(x_i^a-\bar{x}^a)(x_i^b-\bar{x}^b)(x_i^c-\bar{x}^c),\quad\quad\ \  c_4=\EE_{\tilde\bx}\|\tilde\bx-\bar\bx\|^4.
\end{align*}
Then, we have 
\begin{equation}\label{eqn:c_decomp}
\bar{c} = -\frac{1}{2n^2} \sum\limits_{i=1}^n \|\bx_i-\bar\bx\|^2_{H(\bx^*)} + c_1 + O(c_2) + c_3 + O(c_4),
\end{equation}
Taking square and expectation for~\eqref{eqn:c_decomp}, there exists a constant $C>0$ such that
\begin{align}
\EE\bar{c}^2 & \leq \EE\frac{1}{4n^4}\left(\sum\limits_{i=1}^n \|\bx_i-\bar\bx\|^2_{H(\bx^*)}\right)^2 + C^2\EE(|c_1|+|c_2|+|c_3|+|c_4|)^2 \nonumber\\
&\ \ \ \ + \EE \frac{C}{2n^2}\left|\sum\limits_{i=1}^n \|\bx_i-\bar\bx\|^2_{H(\bx^*)}\right|(|c_1|+|c_2|+|c_3|+|c_4|) \nonumber\\
& \leq \EE\frac{1}{4n^4}\left(\sum\limits_{i=1}^n \|\bx_i-\bar\bx\|^2_{H(\bx^*)}\right)^2 + C^2\EE(|c_1|+|c_2|+|c_3|+|c_4|)^2 \nonumber\\
&\ \ \ \ + \frac{1}{2\sqrt{n}}\EE\left(\frac{1}{2n^2}\sum\limits_{i=1}^n \|\bx_i-\bar\bx\|^2_{H(\bx^*)}\right)^2 + \frac{\sqrt{n}}{2}C^2\EE(|c_1|+|c_2|+|c_3|+|c_4|)^2. \label{eqn:c2_bound_1}
\end{align}

Next, we give estimates to the terms in~\eqref{eqn:c2_bound_1}. We will use results in Lemma~\ref{lm:moments2}. First, Letting $\lambda_{\max}$ be {the eigenvalue of $H(\bx^*)$ with largest absolute value}, we have 
\begin{align}
\EE \left(\frac{1}{2n^2}\sum\limits_{i=1}^n \|\bx_i-\bar\bx\|^2_{ H(\bx^*)}\right)^2 & \leq \frac{\lambda_{\max}^2}{4n^4} \EE \left(\sum\limits_{i=1}^n \|\bx_i-\bar\bx\|^2\right)^2 = O(\frac{1}{n^2}).\nonumber
\end{align}
Therefore, 
\begin{equation}\label{eqn:c2_bound_est_1}
\frac{1}{2\sqrt{n}}\EE \left(\frac{1}{2n^2}\sum\limits_{i=1}^n \|\bx_i-\bar\bx\|^2_{ H(\bx^*)}\right)^2 = O(\frac{1}{n^{2.5}}).
\end{equation}
Next, for $\EE c_1^2$, we have 
\begin{align}
\EE c_1^2 & \leq \frac{1}{4n^4} \EE \left(\sum\limits_{i=1}^n \partial^3_{abc}F(\bx^*)(x_i^a-\bar{x}^a)(x_i^b-\bar{x}^b)(\bar{x}^c-(x^*)^c)\right)^2 \nonumber\\
 & \leq \frac{d^3\max\limits_{a,b,c} \partial_{abc}^3F(\bx^*)^2}{4n^4}\EE \left(\sum\limits_{i=1}^n \|\bx_i-\bar\bx\|^2\|\bar\bx-\bx^*\|\right)^2 \nonumber\\
& \leq \frac{d^3\max\limits_{a,b,c} \partial_{abc}^3F(\bx^*)^2}{4n^4}\sum\limits_{i,j=1}^n \EE\|\bx_i-\bar\bx\|^2\|\bx_j-\bar\bx\|^2\|\bar\bx-\bx^*\|^2 \nonumber\\
& \leq \frac{d^3\max\limits_{a,b,c} \partial^3_{abc}F(\bx^*)^2}{4n^4} \sum\limits_{i,j=1}^n \left(\EE\|\bx_i-\bar\bx\|^6\right)^{\frac{1}{3}}\left(\EE\|\bx_j-\bar\bx\|^6\right)^{\frac{1}{3}}\left(\EE\|\bar\bx-\bx^*\|^6\right)^{\frac{1}{3}} \nonumber\\
& = \frac{d^3\max\limits_{i,j,k} \partial_{ijk}F(\bx^*)^2}{4n^4}\cdot n^2 O(\frac{1}{n})\nonumber\\
& = O(\frac{1}{n^3}),\label{eqn:c2_bound_est_2}
\end{align}
where the fourth to the fifth line is given by Lemma~\ref{lm:moments2}.
Similarly, we can obtain $\EE c_2^2=O(\frac{1}{n^4})$. And by Lemma~\ref{lm:moments3} and similar derivations, we can obtain $\EE c_4^2=O(\frac{1}{n^4})$. For $c_3$, the Taylor expansion~\ref{eqn:taylor3} gives
\begin{align*}
|c_3| &= \left|\frac{1}{6n^3} \sum\limits_{i=1}^n \partial^3_{abc} F(\bx^*)(x_i^a-\bar{x}^a)(x_i^b-\bar{x}^b)(x_i^c-\bar{x}^c) + \frac{1}{6n^3}O(\sum\limits_{i=1}^n\|\bar\bx-\bx^*\|\|\bx_i-\bar\bx\|^3)\right| \\
& \leq \frac{d^3\max\limits_{a,b,c}\partial^3_{abc}|F(\bx^*)|}{6n^3}\sum\limits_{i=1}^n \|\bx_i-\bar\bx\|^3 + \frac{1}{6n^3}O(\sum\limits_{i=1}^n\|\bar\bx-\bx^*\|\|\bx_i-\bar\bx\|^3).
\end{align*}
Taking square and by Lemma~\ref{lm:moments2} we have $\EE c_3^2=O(\frac{1}{n^4})$. 
Therefore, considering $(|c_1|+|c_2|+|c_3|+|c_4|)^2\leq 4(c_1^2+c_2^2+c_3^2+c_4^2)$, back to~\eqref{eqn:c2_bound_1} we have 
\begin{equation}\label{eqn:c2_bound_2}
\EE \bar{c}^2 \leq \frac{1}{4n^4}\EE\left(\sum\limits_{i=1}^n \|\bx_i-\bar\bx\|^2_{H(\bx^*)}\right)^2 + O(\frac{1}{n^{2.5}}).
\end{equation}

Next, by $\bx_i-\bar\bx=(\bx_i-\bx^*)+(\bx^*-\bar\bx)$ we obtain 
\begin{align*}
\|\bx_i-\bar\bx\|^2_{H(\bx^*)} = \|\bx_i-\bx^*\|^2_{H(\bx^*)} + 2 (\bx_i-\bx^*)^TH(\bx^*)(\bar\bx-\bx^*) +\|\bar\bx-\bx^*\|^2_{H(\bx^*)}.
\end{align*}
Note that the terms with powers of $\|\bar\bx-\bx^*\|$ become small quantities (no bigger than $O(\frac{1}{\sqrt{n}})$) after taking expectation, we have 
\begin{equation*}
\frac{1}{4n^4}\EE\left(\sum\limits_{i=1}^n \|\bx_i-\bar\bx\|^2_{H(\bx^*)}\right)^2 = \frac{1}{4n^4}\EE\left(\sum\limits_{i=1}^n \|\bx_i-\bx^*\|^2_{H(\bx^*)}\right)^2 + O(\frac{1}{n^{2.5}}).
\end{equation*}
Thus, 
\begin{equation*}
\EE \bar{c}^2 \leq \frac{1}{4n^4}\EE\left(\sum\limits_{i=1}^n \|\bx_i-\bx^*\|^2_{H(\bx^*)}\right)^2 + O(\frac{1}{n^{2.5}}).
\end{equation*}
Further notice that
\begin{align}
\frac{1}{4n^4}\EE\left(\sum\limits_{i=1}^n \|\bx_i-\bx^*\|^2_{H(\bx^*)}\right)^2 & = \frac{1}{4n^4}\sum\limits_{i,j=1}^n \EE \|\bx_i-\bx^*\|^2_{H(\bx^*)}\|\bx_j-\bx^*\|^2_{H(\bx^*)} \nonumber\\
& = \frac{1}{4n^4}\sum\limits_{i,j=1}^n \EE \|\bx_i-\bx^*\|^2_{H(\bx^*)}\EE\|\bx_j-\bx^*\|^2_{H(\bx^*)} \nonumber\\
& \ \ \ + \frac{1}{4n^4}\sum\limits_{i=1}^n\left(\EE \|\bx_i-\bx^*\|^4_{H(\bx^*)} - \big(\EE \|\bx_i-\bx^*\|^2_{H(\bx^*)}\big)^2\right)\nonumber\\
& = \frac{1}{4n^2}\big(\EE\|\bx-\bx^*\|^2_{H(\bx^*)}\big)^2 + O(\frac{1}{n^{3}}). \nonumber
\end{align}
Finally, since $\EE\|\bx-\bx^*\|^2_{H(\bx^*)}=\tr(M_2H(\bx^*))$, we have the following estimate for $\EE \bar{c}^2$:
\begin{equation}\label{eqn:I_est}
    \EE \bar{c}^2 \leq \frac{\tr(M_2H(\bx^*))^2}{4n^2} + O(\frac{1}{n^{2.5}}) = \frac{\sigma_2^2}{4n^2} + O(\frac{1}{n^{2.5}}).
\end{equation}

\subsubsection{Estimate of $\EE\delta^2$.}\label{sssec:delta_square}
For $\EE\delta^2$, first notice that
\begin{align*}
\EE \delta^2 & = \EE \left(\EE_{\tilde{\bx}} F(\tilde{\bx}) - \frac{1}{K}\sum\limits_{k=1}^K F(\tilde{\bx}_k)\right)^2 = \EE \textrm{Var}_{\tilde\bx}\left(\frac{1}{K}\sum\limits_{k=1}^K F(\tilde{\bx}_k)\right) = \frac{1}{K}\EE \textrm{Var}_{\tilde\bx}(F(\tilde\bx)) \\
& = \frac{1}{K}\EE\EE_{\tilde\bx} \big(F(\tilde{\bx})-\EE_{\tilde{\bx}}F(\tilde{\bx})\big)^2 \leq \frac{1}{K}\EE\EE_{\tilde\bx} \big(F(\tilde{\bx})-F(\bar\bx)\big)^2 = \frac{1}{K}\EE (F(\tilde{\bx})-F(\bar\bx))^2.
\end{align*}
Hence, taking the Taylor expansion for $F(\tilde\bx)$ at $\bar\bx$, the last term above gives
\begin{align}
\EE \delta^2 & \leq \frac{1}{C_Kn}\EE\left(\nabla F(\bar\bx)^T(\tilde{\bx}-\bar\bx) + \frac{1}{2}\|\tilde{\bx}-\bar{\bx}\|^2_{H(\bar\bx)}\right. \nonumber\\
  &\ \ \ + \left.\frac{1}{6}\partial^3_{abc}F(\bar\bx)(\tilde{x}^a-\bar{x}^a)(\tilde{x}^b-\bar{x}^b)(\tilde{x}^c-\bar{x}^c) + O(\|\tilde\bx-\bar{\bx}\|^4) \right)^2 \nonumber\\
 & = \frac{1}{C_K n} \EE\left( \nabla F(\bx^*)^T(\tilde{\bx}-\bar\bx) + O(\|\bar\bx-\bx^*\|\|\tilde\bx-\bar\bx\|) + O(\|\bar\bx-\bx^*\|^2\|\tilde\bx-\bar\bx\|) \right. \nonumber\\ 
 &\ \ \ + O(\|\bar\bx-\bx^*\|^3\|\tilde\bx-\bar\bx\|) 
  + \frac{1}{2}\|\tilde{\bx}-\bar{\bx}\|^2_{H(\bx^*)} + O(\|\bar\bx-\bx^*\|\|\tilde\bx-\bar\bx\|^2) \nonumber\\ 
 &\ \ \  \left. +O(\|\bar\bx-\bx^*\|^2\|\tilde\bx-\bar\bx\|^2) +\frac{1}{6}\nabla^3F(\bx^*)[\tilde\bx-\bar\bx] + O(\|\bar\bx-\bx^*\|\|\tilde\bx-\bar\bx\|^3) + O(\|\tilde\bx-\bar\bx\|^4) \right)^2 \nonumber\\
 & = \frac{1}{C_Kn}\left(\EE \big(\nabla F(\bx^*)^T(\tilde\bx-\bar\bx)\big)^2 + \hot(\|\tilde\bx-\bar\bx\|,\|\bar\bx-\bx^*\|;3)\right).
\label{eqn:delta_est_1}
\end{align}
where the second equality follows from the Taylor expansions of $\nabla F(\bar\bx)$, $H(\bar\bx)$, and $\nabla^3F(\bar\bx)$ at $\bx^*$. 
The highest-order terms in $\hot(\|\tilde\bx-\bar\bx\|,\|\bar\bx-\bx^*\|;3)$ have order $8$. Therefore, by Lemma~\ref{lm:moments2} and~\ref{lm:moments3}, we have $\EE \hot(\|\tilde\bx-\bar\bx\|,\|\bar\bx-\bx^*\|;3) = O(\frac{1}{n^{1.5}})$, and thus
\begin{align*}
\EE\delta^2 \leq \frac{1}{C_Kn}\EE \big(\nabla F(\bx^*)^T(\tilde\bx-\bar\bx)\big)^2 + O(\frac{1}{n^{2.5}}) = \frac{1}{C_Kn}\EE \|\tilde\bx-\bar\bx\|^2_{\nabla F(\bx^*)\nabla F(\bx^*)^T} + O(\frac{1}{n^{2.5}}).
\end{align*}

Next, by Lemma~\ref{lm:quadratic_form}, we have 
\begin{align}
\EE \|\tilde\bx-\bar\bx\|_{\nabla F(\bx^*)\nabla F(\bx^*)^T}^2 & =\frac{1}{n^2}\EE \sum\limits_{i=1}^n\|\bx_i-\bar\bx\|_{\nabla F(\bx^*)\nabla F(\bx^*)^T}^2 \nonumber\\
& = \frac{1}{n} \EE \|\bx_1-\bar\bx\|_{\nabla F(\bx^*)\nabla F(\bx^*)^T}^2 \nonumber\\
&= \frac{1}{n}\EE \|\bx_1-\bx^*\|_{\nabla F(\bx^*)\nabla F(\bx^*)^T}^2 + O(\frac{1}{n^{1.5}}). \label{eqn:used_in_next_thm}
\end{align}
Hence, $\EE\delta^2$ can be bounded by:
\begin{equation}\label{eqn:delta_est}
\EE\delta^2 \leq \frac{1}{C_Kn^2}\EE \big(\nabla F(\bx^*)^T(\bx-\bx^*)\big)^2 + O(\frac{1}{n^{2.5}}) = \frac{\sigma_1}{C_Kn^2} + O(\frac{1}{n^{2.5}}).
\end{equation}

\subsubsection{Estimate of $\EE (F(\bar\bx) - F(\bx^*))\bar{c}$}\label{sssec:cbar_Fdiff_shift}
Taking Taylor expansion for $F$, we have 
\begin{align}
\EE(F(\bar\bx)-F(\bx^*))\bar{c} &= \EE \big(\nabla F(\bx^*)^T(\bar\bx-\bx^*) + \frac{1}{2}\|\bar\bx-\bx^*\|^2_{H(\bx^*)} \nonumber\\
&\ \ \ + \frac{1}{6}\partial^3_{abc} F(\bx^*)(\bar{x}^a-(x^*)^a)(\bar{x}^b-(x^*)^b)(\bar{x}^c-(x^*)^c) + O(\|\bar\bx-\bx^*\|^4)\big)\bar{c} \nonumber\\
& = \EE\nabla F(\bx^*)^T(\bar\bx-\bx^*)\bar{c} + \EE \frac{1}{2}\|\bar\bx-\bx^*\|^2_{H(\bx^*)}\bar{c} + \EE O(\|\bar\bx-\bx^*\|^3)\bar{c} \label{eqn:J_est_1}
\end{align}
We will estimate the three terms above using the estimate~\eqref{eqn:c_decomp} for $\bar{c}$. 

For the first term on the right hand side of~\eqref{eqn:J_est_1}, there exists a constant $C$, such that 
{\small
\begin{align}
\EE\nabla F(\bx^*)^T(\bar\bx-\bx^*)\bar{c} & = \EE\nabla F(\bx^*)^T(\bar\bx-\bx^*)\left(-\frac{1}{2n^2}\sum_{i=1}^n \|\bx_i-\bar\bx\|^2_{H(\bx^*)}+c_1+O(c_2)+c_3+O(c_4)\right)\nonumber\\
& \leq -\frac{1}{2n^2}\EE\nabla F(\bx^*)^T(\bar\bx-\bx^*)\sum_{i=1}^n \|\bx_i-\bar\bx\|^2_{H(\bx^*)} + \EE\nabla F(\bx^*)^T(\bar\bx-\bx^*)c_1 \nonumber\\
&\ \ + C\|\nabla F(\bx^*)\| \EE \|\bar\bx-\bx^*\|(|c_2|+|c_3|+|c_4|). \label{eqn:J_est_J1_1}
\end{align}}

We first consider the first term on the right hand side of~\eqref{eqn:J_est_J1_1}. For convenience, let $\bz_i=\bx_i-\bx^*$, and $\bar\bz=\bar\bx-\bx^*$. Then, we have $\EE\bz_i=0$, and
\begin{equation*}
\EE\nabla F(\bx^*)^T(\bar\bx-\bx^*)\sum\limits_{i=1}^n (\bx_i-\bar\bx)^T H(\bx^*)(\bx_i-\bar\bx) = \EE\nabla F(\bx^*)^T\bar\bz\sum\limits_{i=1}^n (\bz_i-\bar\bz)^T H(\bx^*)(\bz_i-\bar\bz).
\end{equation*}
For the right hand side, we have
\begin{align}
& \EE\nabla F(\bx^*)^T\bar\bz\sum\limits_{i=1}^n (\bz_i-\bar\bz)^T H(\bx^*)(\bz_i-\bar\bz) \nonumber\\
= & \EE\nabla F(\bx^*)^T\bar\bz\left(\sum\limits_{i=1}^n\bz_i^TH(\bx^*)\bz_i-n\bar{\bz}^TH(\bx^*)\bar{\bz}\right) \nonumber\\
= &\EE\nabla F(\bx^*)^T\frac{1}{n}\sum\limits_{i=1}^n\bz_i\sum\limits_{i=1}^n\bz_i^TH(\bx^*)\bz_i - n\EE\nabla F(\bx^*)^T\bar\bz\cdot\bar{\bz}^TH(\bx^*)\bar{\bz} \nonumber\\
= & \frac{1}{n}\sum\limits_{i,j=1}^n \EE \nabla F(\bx^*)^T\bz_i\bz_j^TH(\bx^*)\bz_j - n\EE\nabla F(\bx^*)^T\bar\bz\cdot\bar{\bz}^TH(\bx^*)\bar{\bz}. \label{eqn:J1_leading_pf1}
\end{align}
For the first term above, because of the independence of $\bz_1, ..., \bz_n$ and $\EE\bz_i=0$, we have
\begin{align}
\frac{1}{n}\sum\limits_{i,j=1}^n \EE \nabla F(\bx^*)^T\bz_i\bz_j^TH(\bx^*)\bz_j & = \frac{1}{n}\sum\limits_{i=1}^n \EE \nabla F(\bx^*)^T\bz_i\bz_i^TH(\bx^*)\bz_i \nonumber\\
& = \EE \nabla F(\bx^*)^T\bz_1\bz_1^TH(\bx^*)\bz_1 \nonumber \\
& = \sigma_3\label{eqn:J_est_J1_2}
\end{align}
For the second term in~\eqref{eqn:J1_leading_pf1}, by Lemma~\ref{lm:moments2} we have
\begin{align}
\big|n\EE\nabla F(\bx^*)^T\bar\bz\cdot\bar{\bz}^TH(\bx^*)\bar{\bz}\big| & \leq n\|F(\bx^*)\|\lambda_{\max} \EE \|\bar\bx-\bx^*\|^3 = O(\frac{1}{\sqrt{n}}).\label{eqn:J_est_J1_3}
\end{align}
Combining~\eqref{eqn:J_est_J1_2} and~\eqref{eqn:J_est_J1_3} gives 
\begin{align}
-\frac{1}{2n^2}\EE\nabla F(\bx^*)^T(\bar\bx-\bx^*)\sum_{i=1}^n \|\bx_i-\bar\bx\|^2_{H(\bx^*)} 
&\leq -\frac{\sigma_3}{2n^2} + O(\frac{1}{n^{2.5}}).\label{eqn:J_est_J1_4}
\end{align}

Next, we consider the second term on the right hand side of~\eqref{eqn:J_est_J1_1}. Using the same $\bz$ notations as above, we have 
\begin{align}
\EE\nabla F(\bx^*)^T(\bar\bx-\bx^*)c_1 & = -\frac{1}{2n^2}\EE\nabla F(\bx^*)^T\bar\bz\sum\limits_{i=1}^n \partial_{abc}^3 F(\bx^*)(z_i^a-\bar{z}^a)(z_i^b-\bar{z}^b)\bar{z}^c. \nonumber
\end{align}
For the right hand side, we have 
\begin{align}
&-\EE\nabla F(\bx^*)^T\bar\bz\sum\limits_{i=1}^n \partial_{abc}^3 F(\bx^*)(z_i^a-\bar{z}^a)(z_i^b-\bar{z}^b)\bar{z}^c \nonumber\\
= & -\EE \nabla F(\bx^*)^T\bar\bz\sum\limits_{i=1}^n \partial_{abc}^3 F(\bx^*)z_i^a z_i^b\bar{z}^c + n\EE\nabla F(\bx^*)^T\bar\bz\partial_{abc}^3 F(\bx^*)\bar{z}^a\bar{z}^b\bar{z}^c \nonumber\\
= & -\frac{1}{n^2}\sum\limits_{i,j,k=1}^n \EE\nabla F(\bx^*)^T\bz_i \big(\partial^3_{abc} F(\bx^*)z_j^a z_j^b z_k^c\big) + n\EE\hot(\bar\bz;4) \nonumber\\
= & -\frac{1}{n^2}\sum\limits_{i,j,k=1}^n \EE\nabla F(\bx^*)^T\bz_i \big(\partial^3_{abc} F(\bx^*)z_j^a z_j^b z_k^c\big) + O(\frac{1}{n}).
\end{align}
By the independence of $\bz_i$ and $\EE\bz_i=0$, we obtain
\begin{align}
&\frac{1}{n^2}\sum\limits_{i,j,k=1}^n \EE \nabla F(\bx^*)^T\bz_i \big(\partial^3_{abc} F(\bx^*)z_j^a z_j^b z_k^c\big) \nonumber\\
= & \frac{1}{n^2}\sum\limits_{i,j=1}^n \EE \nabla F(\bx^*)^T\bz_i \big(\partial^3_{abc} F(\bx^*)z_j^a z_j^b z_i^c\big) \nonumber\\
= &\frac{1}{n^2}\sum\limits_{i,j=1}^n(\EE \bz_i^T\nabla F(\bx^*)\bz_i)^T (\EE \partial^3_{abc}F(\bx^*) z_j^b z_j^c) \nonumber\\
&+ \frac{1}{n^2}\sum\limits_{i=1}^n \left(\EE \nabla F(\bx^*)^T\bz_i \partial_{abc}^3 F(\bx^*)z_i^a z_i^b z_i^c - (\EE \bz_i^T\nabla F(\bx^*)\bz_i)^T (\EE \partial_{abc}^3 F(\bx^*)z_i^bz_i^c\right) \nonumber\\
= & (\EE \bz^T\nabla F(\bx^*)\bz)^T (\EE\partial_{abc}^3F(\bx^*)z^bz^c) + O(\frac{1}{n}) \nonumber\\
= & (\nabla F(\bx^*)^TM_2)(\partial_{abc}^3F(\bx^*)(M_2)^{bc}) + O(\frac{1}{n}) \nonumber\\
= & \sigma_4 + O(\frac{1}{n}).
\end{align}
Therefore, we have 
\begin{equation}\label{eqn:J_est_J1_c1}
\EE \nabla F(\bx^*)^T(\bar\bx-\bx^*)c_1 = -\frac{\sigma_4}{2n^2} + O(\frac{1}{n^3}).
\end{equation}

For the third term in~\eqref{eqn:J_est_J1_1}, recall that $\EE c_2^2=O(\frac{1}{n^4})$, $\EE c_3^2=O(\frac{1}{n^4})$, and $\EE c_4^2=O(\frac{1}{n^4})$, by H\"{o}lder's inequality we have 
\begin{equation}\label{eqn:J_est_J1_c23}
\EE \|\bar\bx-\bx^*\|(|c_2|+|c_3|+|c_4|) = O(\frac{1}{n^{2.5}}).
\end{equation}

Combining~\eqref{eqn:J_est_J1_4}, \eqref{eqn:J_est_J1_c1}, and~\eqref{eqn:J_est_J1_c23}, we have 
\begin{align}
\EE\nabla F(\bx^*)^T(\bar\bx-\bx^*)\bar{c} &\leq -\frac{\sigma_3+\sigma_4}{2n^2} + O(\frac{1}{n^{2.5}}). \label{eqn:J_est_J1}
\end{align}

For the second term on the right hand side of~\eqref{eqn:J_est_1}, there exists a constant $C$, such that
\begin{align}
\EE \frac{1}{2}\|\bar\bx-\bx^*\|^2_{H(\bx^*)}\bar{c} & \leq -\frac{1}{4n^2}\EE \|\bar\bx-\bx^*\|^2_{H(\bx^*)}\sum_{i=1}^n \|\bx_i-\bar\bx\|^2_{H(\bx^*)} + \EE \left|\frac{1}{2}\|\bar\bx-\bx^*\|^2_{H(\bx^*)}\right|C\sum\limits_{i=1}^4 |c_i| \nonumber\\
 & = -\frac{1}{4n^2}\EE \|\bar\bx-\bx^*\|^2_{H(\bx^*)}\sum_{i=1}^n \|\bx_i-\bar\bx\|^2_{H(\bx^*)} + O(\frac{1}{n^{2.5}}),\label{eqn:J_est_J2_0}
\end{align}
where the estimate of $\EE\|\bar\bx-\bx^*\|_{H(\bx^*)}(|c_1|+|c_2|+|c_3|+|c_4|)$ follows similarly the estimate of\\ $\frac{1}{2n^2}\EE \sum\limits_{i=1}^n\|\bx_i-\bar\bx\|^2_{ H(\bx^*)}(|c_1|+|c_2|+|c_3|+|c_4|)$ when we deal with $\EE \bar{c}^2$. 
For the first term on the right hand side of~\eqref{eqn:J_est_J2_0}, we have
\begin{align*}
&\EE(\bar\bx-\bx^*)^TH(\bx^*)(\bar\bx-\bx^*)\sum_{i=1}^n (\bx_i-\bar\bx)^T H(\bx^*)(\bx_i-\bar\bx) \nonumber\\
= & \EE(\bar\bx-\bx^*)^TH(\bx^*)(\bar\bx-\bx^*)\sum_{i=1}^n (\bx_i-\bx^*)^T H(\bx^*)(\bx_i-\bx^*) + O(\frac{1}{\sqrt{n}}) \nonumber\\
= & \frac{1}{n^2}\sum\limits_{i,j,k=1}^n \EE (\bx_i-\bx^*)^TH(\bx^*)(\bx_j-\bx^*)(\bx_k-\bx^*)^T H(\bx^*)(\bx_k-\bx^*)+ O(\frac{1}{\sqrt{n}}) \nonumber\\
= & \frac{1}{n^2}\sum\limits_{i,k=1}^n \EE (\bx_i-\bx^*)^TH(\bx^*)(\bx_i-\bx^*)(\bx_k-\bx^*)^T H(\bx^*)(\bx_k-\bx^*)+ O(\frac{1}{\sqrt{n}}) \nonumber\\
= & \frac{1}{n^2}\sum\limits_{i,k=1}^n \EE (\bx_i-\bx^*)^TH(\bx^*)(\bx_i-\bx^*)\EE(\bx_k-\bx^*)^T H(\bx^*)(\bx_k-\bx^*)+ O(\frac{1}{\sqrt{n}}) \nonumber\\
= & \sigma_2^2 + O(\frac{1}{\sqrt{n}}).
\end{align*}
Note that from the fourth to the fifth line we are moving terms with $i=k$ into the $O(\frac{1}{\sqrt{n}})$ term, and adding $\frac{1}{n^2}\big(\EE(\bx_i-\bx^*)^TH(\bx^*)(\bx_i-\bx^*)\big)^2$ terms, which are also no bigger than $O(\frac{1}{\sqrt{n}})$. Hence, back to~\eqref{eqn:J_est_J2_0} we have 
\begin{equation}\label{eqn:J_est_J2}
\EE \frac{1}{2}\|\bar\bx-\bx^*\|^2_{H(\bx^*)}\bar{c} \leq -\frac{\sigma_2^2}{4n^2} + O(\frac{1}{n^{2.5}}).
\end{equation}

For the third term on the right hand side of~\eqref{eqn:J_est_1}, recall that $\EE \bar{c}^2 = O(\frac{1}{n^2})$, we have 
\begin{equation}\label{eqn:J_est_J3}
\EE \|\bar\bx-\bx^*\|^3\bar{c} \leq \sqrt{\EE \|\bar\bx-\bx^*\|^6 \EE \bar{c}^2} = O(\frac{1}{n^{2.5}}).
\end{equation}

Finally, combining~\eqref{eqn:J_est_J1}, \eqref{eqn:J_est_J2} and~\eqref{eqn:J_est_J3}, we have 
\begin{align}
\EE(F(\bar\bx)-F(\bx^*))\bar{c} &\leq -\frac{\sigma_2^2}{4n^2} - \frac{\sigma_3+\sigma_4}{2n^2} + O(\frac{1}{n^{2.5}}). \label{eqn:J_est}
\end{align}

\subsubsection{Putting together.}
Finally, putting the estimates~\eqref{eqn:I_est}, \eqref{eqn:delta_est}, \eqref{eqn:J_est} together, we have 
\begin{align}
\EE(\hc^2+2(F(\bar\bx)-F(\bx^*))\hc) &\leq \frac{\sigma_2^2}{4n^2} + \frac{\sigma_1}{C_Kn^2}-\frac{\sigma_2^2}{2n^2} - \frac{\sigma_3}{n^2} - \frac{\sigma_4}{n^2} + O(\frac{1}{n^{2.5}}) \nonumber\\
& = \frac{1}{n^2}\left(-\frac{\sigma_2^2}{4} + \frac{\sigma_1}{C_K}-\sigma_3-\sigma_4\right) + O(\frac{1}{n^{2.5}}). \label{eqn:together}
\end{align}
By the condition~\eqref{eqn:thm_cond}, the coefficient for $\frac{1}{n^2}$ is negative. Hence, when $n$ is sufficiently large, we have 
\begin{equation}
    \EE(\hc^2+2(F(\bar\bx)-F(\bx^*))\hc) < 0.
\end{equation}
This completes the proof.

\subsection{Lemmas}
In this section we provide and prove several lemmas used in the proof above. These lemmas may also be used in the proof of Theorem~\ref{thm:main_scaling}.

The first lemma gives Taylor expansions for $F$ and its derivatives based on Assumption~\ref{assump:F}. The results are standard in calculus, hence we ignore the proof. 
\begin{lemma}\label{lm:taylor}
Let $F:\RR^d\rightarrow\RR$ be a function satisfying Assumption~\ref{assump:F}. Then, for any $\bx,\by\in\RR^d$, we have the following Taylor expansions for $F$ and its derivatives:
\begin{align}
F(\by) &= F(\bx) + \nabla F(\bx)^T(\by-\bx) + \frac{1}{2}(\by-\bx)^T \nabla^2F(\bx)(\by-\bx) \nonumber\\
 &\ \ \ + \frac{1}{6}\partial^3_{abc}F(\bx)(y^a-x^a)(y^b-x^b)(y^c-x^c) + O(\|\by-\bx\|^4),\label{eqn:taylor} \\
\nabla F(\by) &= \nabla F(\bx) + \nabla^2 F(\bx)(\by-\bx) + \frac{1}{2}\partial^3_{abc} F(\bx)(y^b-x^b)(y^c-x^c) + O(\|\by-\bx\|^3),\label{eqn:taylor1} \\
\nabla^2 F(\by) &= \nabla^2 F(\bx) + \partial^3_{abc} F(\bx)(y^c-x^c) + O(\|\by-\bx\|^2), \label{eqn:taylor2} \\
\nabla^3 F(\by) &= \nabla^3 F(\bx) + O(\|\by-\bx\|). \label{eqn:taylor3}
\end{align}
\end{lemma}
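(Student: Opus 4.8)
The statement is a routine consequence of the multivariate Taylor theorem, so the plan is simply to apply Taylor's theorem with remainder to $F$ and, coordinatewise, to each of its partial derivatives, and then to match terms. The one point that deserves attention is the uniformity of the implied constants: since the lemma asserts the expansions for \emph{every} $\bx,\by\in\RR^d$, and since downstream (Sections~\ref{sec:pf} and~\ref{sec:pf2}) they are applied with random arguments such as $\bx=\bx^*$ and $\by=\bar\bx$ or $\by=\tilde\bx$, the $O(\cdot)$ constants must not depend on $\bx$ or $\by$. I therefore read Assumption~\ref{assump:F} as supplying a finite $M$ with $\sup_{\bx\in\RR^d}\max_{a,b,c,d}\big|\partial^4_{abcd}F(\bx)\big|\le M$; convexity plays no role here.

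First I would establish~\eqref{eqn:taylor}. Fix $\bx,\by$ and set $\phi(t)=F\big(\bx+t(\by-\bx)\big)$, which is $C^4$ on $[0,1]$ since $F\in C^4$. The one-dimensional Taylor theorem with integral remainder gives
\begin{equation*}
\phi(1)=\phi(0)+\phi'(0)+\tfrac{1}{2}\phi''(0)+\tfrac{1}{6}\phi'''(0)+\tfrac{1}{6}\int_0^1(1-t)^3\phi^{(4)}(t)\,dt ,
\end{equation*}
and the chain rule identifies $\tfrac{1}{k!}\phi^{(k)}(0)$ with $\tfrac{1}{k!}\partial^k_{a_1\cdots a_k}F(\bx)\,(y^{a_1}-x^{a_1})\cdots(y^{a_k}-x^{a_k})$, i.e.\ with the successive terms of the polynomial part of~\eqref{eqn:taylor}. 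For the remainder, $|\phi^{(4)}(t)|\le C_d\,M\,\|\by-\bx\|^4$ for a dimensional constant $C_d$, uniformly in $t$, so the integral term is $O(\|\by-\bx\|^4)$ with a constant depending only on $d$ and $M$ — exactly the required uniformity.

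The remaining three identities follow from the same argument applied to lower-order derivatives, which inherit from $F$ one fewer order of smoothness and one lower bounded derivative. For~\eqref{eqn:taylor1} I would expand each scalar function $\partial_a F\in C^3$ (whose third derivatives $\partial^4_{abcd}F$ are bounded by $M$) to degree $2$ with $O(\|\by-\bx\|^3)$ remainder, and then stack the $d$ coordinate expansions into the vector identity. For~\eqref{eqn:taylor2} I would expand each $\partial^2_{ab}F\in C^2$ to degree $1$, obtaining $\partial^2_{ab}F(\by)=\partial^2_{ab}F(\bx)+\partial^3_{abc}F(\bx)(y^c-x^c)+O(\|\by-\bx\|^2)$ entrywise. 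For~\eqref{eqn:taylor3} each $\partial^3_{abc}F\in C^1$ has gradient bounded by $\sqrt{d}\,M$, so the mean-value inequality yields $|\partial^3_{abc}F(\by)-\partial^3_{abc}F(\bx)|\le\sqrt{d}\,M\,\|\by-\bx\|=O(\|\by-\bx\|)$.

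I do not expect any genuine obstacle — the whole argument is bookkeeping, which is presumably why the paper omits it. If there is a subtlety at all it is conceptual rather than technical: one must commit to the uniform-bound reading of Assumption~\ref{assump:F}, since merely assuming the fourth derivatives exist and are finite pointwise would make the $O(\cdot)$ constants depend on the argument and thereby break their later use at random points. Once $M$ is fixed, all four remainders are controlled by $M$ and $d$ alone, and the lemma follows.
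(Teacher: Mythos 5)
Your proof is correct and is exactly the standard argument the paper has in mind — it explicitly omits the proof as "standard in calculus," and your Taylor-with-integral-remainder derivation plus coordinatewise expansion of the derivatives is the canonical way to fill it in. Your remark that Assumption~\ref{assump:F} must be read as a \emph{uniform} bound on the fourth derivatives (so the $O(\cdot)$ constants do not depend on $\bx,\by$ when the lemma is later applied at random points) is a fair and worthwhile clarification, not a deviation from the paper's approach.
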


The next two lemmas deal with the expectation of quadratic forms and third order monomials over the choice of $\tilde{\bx}$.
\begin{lemma}\label{lm:quadratic_form}
Let $\nu$ be the uniform distribution on $n$ points $\bx_1, \bx_2, ..., \bx_n\in\RR^d$, and $\bar\bx=\frac{1}{n}\sum_{i=1}^n \bx_i = \EE_\nu \bx$. Let $\tilde{\bx}$ be the empirical mean of $m$ samples i.i.d. sampled from $\nu$, i.e. $\tilde{\bx} = \frac{1}{m}\sum_{i=1}^m \tilde{\bx}_i$, $\tilde{\bx}_i\stackrel{iid}{\sim}\nu$. Let $A\in\RR^{d\times d}$ be an arbitrary matrix. Then, 
\begin{equation*}
    \EE_\nu (\tilde{\bx}-\bar\bx)^TA(\tilde{\bx}-\bar\bx) = \frac{1}{nm} \sum\limits_{i=1}^n (\bx_i-\bar\bx)^TA(\bx_i-\bar\bx). 
\end{equation*}
\end{lemma}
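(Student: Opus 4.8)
The plan is to expand the quadratic form over the $m$ bootstrap draws and use independence to eliminate the cross terms. Since $\nu$ is the uniform distribution on $\bx_1,\dots,\bx_n$ and $\bar\bx=\EE_\nu\bx$, the recentered draws are mean-zero, which is all that is needed.

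First I would set $\bu_i=\tilde\bx_i-\bar\bx$ for $i=1,\dots,m$, so that $\tilde\bx-\bar\bx=\frac1m\sum_{i=1}^m\bu_i$, the $\bu_i$ are i.i.d.\ (each equal in law to $\bx-\bar\bx$ with $\bx\sim\nu$), and $\EE_\nu\bu_i=0$. Then I would expand
\[
(\tilde\bx-\bar\bx)^TA(\tilde\bx-\bar\bx)=\frac1{m^2}\sum_{i,j=1}^m \bu_i^TA\bu_j
\]
and take $\EE_\nu$ term by term. For $i\ne j$, independence gives $\EE_\nu[\bu_i^TA\bu_j]=(\EE_\nu\bu_i)^TA(\EE_\nu\bu_j)=0$; note this uses only the mean-zero property and holds for arbitrary (not necessarily symmetric) $A$. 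The $m$ diagonal terms each equal $\EE_\nu[(\bx-\bar\bx)^TA(\bx-\bar\bx)]=\frac1n\sum_{k=1}^n(\bx_k-\bar\bx)^TA(\bx_k-\bar\bx)$, since $\nu$ is uniform on $\{\bx_1,\dots,\bx_n\}$. Collecting, $\EE_\nu(\tilde\bx-\bar\bx)^TA(\tilde\bx-\bar\bx)=\frac{m}{m^2}\cdot\frac1n\sum_{k=1}^n(\bx_k-\bar\bx)^TA(\bx_k-\bar\bx)=\frac1{nm}\sum_{k=1}^n(\bx_k-\bar\bx)^TA(\bx_k-\bar\bx)$, which is the claim.

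There is essentially no real obstacle here; the statement is a one-line second-moment computation for an empirical mean. The only points that deserve care are keeping track of which expectation is in play (everything is over the bootstrap resampling, with $\bx_1,\dots,\bx_n$ held fixed), and recording that the vanishing of the off-diagonal terms requires only $\EE_\nu\bu_i=0$, not symmetry of $A$, so the lemma applies verbatim when later invoked with $A=H(\bx^*)$, $A=\nabla F(\bx^*)\nabla F(\bx^*)^T$, or general matrices and with $m=n$.
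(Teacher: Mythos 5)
Your proof is correct and is essentially identical to the paper's own argument: expand the quadratic form over the $m$ resampled draws, kill the off-diagonal terms using independence and $\EE_\nu(\tilde\bx_i-\bar\bx)=0$, and evaluate the $m$ diagonal terms against the uniform distribution on $\{\bx_1,\dots,\bx_n\}$. Your added remark that symmetry of $A$ is not needed is a correct and harmless observation.
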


\begin{proof}
We ignore the subscript $\nu$ in the expectation. Obviously, we have $\EE\tilde{\bx}=\bar{\bx}$ and $\EE\tilde{\bx}_i = \bar\bx$. Hence,
\begin{align*}
\EE (\tilde{\bx}-\bar\bx)^TA(\tilde{\bx}-\bar\bx) & = \frac{1}{m^2}\sum\limits_{i,j=1}^m \EE(\tilde{\bx}_i-\bar\bx)^TA(\tilde{\bx}_j-\bar\bx) \\
& = \frac{1}{m^2} \left(\sum\limits_{i=1}^m \EE(\tilde{\bx}_i-\bar\bx)^TA(\tilde{\bx}_i-\bar\bx) + \sum\limits_{i\neq j} \EE(\tilde{\bx}_i-\bar\bx)^TA(\tilde{\bx}_j-\bar\bx)\right) \\
& =\frac{1}{m^2} \sum\limits_{i=1}^m \EE(\tilde{\bx}_i-\bar\bx)^TA(\tilde{\bx}_i-\bar\bx) \\
& = \frac{1}{m} \frac{1}{n}\sum\limits_{i=1}^n (\bx_i-\bar\bx)^T A(\bx_i-\bar\bx).
\end{align*}
\end{proof}

\begin{lemma}\label{lm:3rd_form}
Let $\nu$ be the uniform distribution on $n$ points $\bx_1, \bx_2, ..., \bx_n\in\RR^d$, and $\bar\bx=\frac{1}{n}\sum_{i=1}^n \bx_i = \EE_\nu \bx$. Let $\tilde{\bx}$ be the empirical mean of $m$ samples i.i.d. sampled by $\nu$, i.e. $\tilde{\bx} = \frac{1}{m}\sum_{i=1}^m \tilde{\bx}_i$, $\tilde{\bx}_i\stackrel{iid}{\sim}\nu$. Let $T\in\RR^{d\times d\times d}$ be an arbitrary $d\times d\times d$ tensor. Then, 
\begin{equation*}
    \EE_\nu T_{abc}(\tilde{x}^a-\bar{x}^a)(\tilde{x}^b-\bar{x}^b) (\tilde{x}^c-\bar{x}^c)= \frac{1}{m^2n} \sum\limits_{i=1}^n T_{abc}(x_i^a-\bar{x}^a)(x_i^b-\bar{x}^b)(x_i^c-\bar{x}^c). 
\end{equation*}
\end{lemma}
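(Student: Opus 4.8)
The plan is to mimic the proof of Lemma~\ref{lm:quadratic_form}, pushing the same centering-and-independence argument up by one order. First I would center the bootstrap samples: set $\by_j := \tilde{\bx}_j - \bar\bx$ for $j=1,\dots,m$, so that the $\by_j$ are i.i.d.\ under $\nu$ with $\EE_\nu \by_j = 0$ (since $\EE_\nu \tilde{\bx}_j = \bar\bx$), and $\tilde{\bx} - \bar\bx = \frac1m\sum_{j=1}^m \by_j$. Substituting this into the trilinear form and using multilinearity of $T$ in its three slots gives
\[
\EE_\nu T_{abc}(\tilde{x}^a-\bar{x}^a)(\tilde{x}^b-\bar{x}^b)(\tilde{x}^c-\bar{x}^c) = \frac{1}{m^3}\sum_{j,k,l=1}^m \EE_\nu\, T_{abc}\, y_j^a\, y_k^b\, y_l^c.
\]

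Next I would show that only the fully diagonal terms $j=k=l$ survive. If one of the three indices, say $l$, differs from the other two, then $\by_l$ is independent of $(\by_j,\by_k)$, so $\EE_\nu T_{abc} y_j^a y_k^b y_l^c = T_{abc}\,\EE_\nu[y_j^a y_k^b]\,\EE_\nu[y_l^c] = 0$ because $\EE_\nu \by_l = 0$; by symmetry the same holds when $j$ or $k$ is the odd one out, and when all of $j,k,l$ are distinct the expectation factors into three mean-zero factors and again vanishes. Hence only the $m$ terms with $j=k=l$ remain, each equal to $\EE_\nu T_{abc}\, y_1^a y_1^b y_1^c$, which yields
\[
\EE_\nu T_{abc}(\tilde{x}^a-\bar{x}^a)(\tilde{x}^b-\bar{x}^b)(\tilde{x}^c-\bar{x}^c) = \frac{1}{m^2}\,\EE_\nu\, T_{abc}\, y_1^a\, y_1^b\, y_1^c.
\]

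Finally I would evaluate $\EE_\nu T_{abc}\, y_1^a y_1^b y_1^c$ directly: since $\tilde{\bx}_1\sim\nu$ is uniform on $\{\bx_1,\dots,\bx_n\}$, the vector $\by_1 = \tilde{\bx}_1 - \bar\bx$ equals $\bx_i-\bar\bx$ with probability $1/n$, so this expectation is exactly $\frac1n\sum_{i=1}^n T_{abc}(x_i^a-\bar{x}^a)(x_i^b-\bar{x}^b)(x_i^c-\bar{x}^c)$. Combining the last two displays gives the stated identity $\frac{1}{m^2 n}\sum_{i=1}^n T_{abc}(x_i^a-\bar{x}^a)(x_i^b-\bar{x}^b)(x_i^c-\bar{x}^c)$.

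There is no substantial obstacle here; the only point that needs a little care is the bookkeeping in the second step, namely verifying that every index pattern other than $j=k=l$ is annihilated by a mean-zero factor, which relies on both the mutual independence of distinct $\by_j$'s and $\EE_\nu\by_j=0$. Everything else is routine multilinear algebra, strictly parallel to the quadratic-form case already established in Lemma~\ref{lm:quadratic_form}.
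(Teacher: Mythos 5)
Your proposal is correct and matches the paper's proof: the paper likewise expands the trilinear form into a triple sum over the $m$ bootstrap samples, drops every off-diagonal term via independence and $\EE_\nu(\tilde\bx_i-\bar\bx)=0$, and evaluates the surviving diagonal term as the empirical average over the $n$ points. Your write-up is simply more explicit about why the non-diagonal index patterns vanish.
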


\begin{proof}
The proof is similar to that of Lemma~\ref{lm:quadratic_form}. Still ignore the subscript $\nu$ in the expectation. We have 
{\small
\begin{align*}
\EE T_{abc}(\tilde{x}^a-\bar{x}^a)(\tilde{x}^b-\bar{x}^b) (\tilde{x}^c-\bar{x}^c) & = \EE T_{abc}\left(\frac{1}{m}\sum\limits_{i=1}^m \tilde{x}^a_i-\bar{x}^a\right)\left(\frac{1}{m}\sum\limits_{i=1}^m \tilde{x}^b_i-\bar{x}^b\right)\left(\frac{1}{m}\sum\limits_{i=1}^m \tilde{x}^c_i-\bar{x}^c\right) \\
& = \frac{1}{m^3} \sum\limits_{i,j,k=1}^m \EE T_{abc}(\tilde{x}^a_i-\bar{x}^a)(\tilde{x}^b_j-\bar{x}^b)(\tilde{x}^c_k-\bar{x}^c) \\
& = \frac{1}{m^3} \sum\limits_{i=1}^m \EE T_{abc}(\tilde{x}^a_i-\bar{x}^a)(\tilde{x}^b_i-\bar{x}^b)(\tilde{x}^c_i-\bar{x}^c) \\
& = \frac{1}{m^2}\frac{1}{n}\sum\limits_{i=1}^n T_{abc}(x_i^a-\bar{x}^a)(x_i^b-\bar{x}^b)(x_i^c-\bar{x}^c).
\end{align*}}
\end{proof}

The next lemma deals with the residual term $\EE_{\tilde\bx}\|\tilde\bx-\bar\bx\|^3$ of $\bar{c}$.
\begin{lemma}\label{lm:3rd_order_term}
Let $\nu$ be the uniform distribution on $n$ points $\bx_1, \bx_2, ..., \bx_n\in\RR^d$, and $\bar\bx=\frac{1}{n}\sum_{i=1}^n \bx_i = \EE_\nu \bx$. Let $\tilde{\bx}$ be the empirical mean of $n$ samples i.i.d. sampled by $\nu$, i.e. $\tilde{\bx} = \frac{1}{n}\sum_{i=1}^n \tilde{\bx}_i$, $\tilde{\bx}_i\stackrel{iid}{\sim}\nu$. Then, 
\begin{equation*}
    \EE_\nu \|\tilde\bx-\bar\bx\|^3 \leq \frac{1}{n^3}\left(\sum\limits_{i=1}^n\|\bx_i-\bar\bx\|^2\sum\limits_{i=1}^n\|\bx_i-\bar\bx\|^4 \right)^{\frac{1}{2}} + \frac{\sqrt{3}}{n^3}\left(\sum\limits_{i=1}^n\|\bx_i-\bar\bx\|^2\right)^{\frac{3}{2}}.
\end{equation*}
\end{lemma}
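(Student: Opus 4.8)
The plan is to reduce everything to moments of a centered sum of i.i.d.\ bounded random vectors and then apply Cauchy--Schwarz. Write $\bz_i := \tilde{\bx}_i - \bar\bx$ for $i=1,\dots,n$; since $\EE_\nu \tilde{\bx}_i = \bar\bx$, these are i.i.d.\ with mean zero, and $\tilde\bx - \bar\bx = \frac1n\sum_{i=1}^n \bz_i =: \frac1n S$. Hence $\EE_\nu\|\tilde\bx-\bar\bx\|^3 = n^{-3}\,\EE\|S\|^3$, and applying the Cauchy--Schwarz inequality to the pair $\|S\|^2$ and $\|S\|$,
$$\EE\|S\|^3 = \EE\big(\|S\|^2\cdot\|S\|\big) \le \sqrt{\EE\|S\|^4}\,\sqrt{\EE\|S\|^2}.$$
So it suffices to bound $\EE\|S\|^2$ and $\EE\|S\|^4$ in terms of $A := \sum_{i=1}^n\|\bx_i-\bar\bx\|^2$ and $B := \sum_{i=1}^n\|\bx_i-\bar\bx\|^4$; note that under $\nu$ we have $\EE\|\bz_1\|^2 = A/n$ and $\EE\|\bz_1\|^4 = B/n$.

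The second moment is immediate: by independence and $\EE\bz_i=0$, $\EE\|S\|^2 = \sum_{i,j}\EE\langle\bz_i,\bz_j\rangle = \sum_{i=1}^n \EE\|\bz_i\|^2 = A$. For the fourth moment I would expand $\|S\|^4 = \sum_{i,j,k,l}\langle\bz_i,\bz_j\rangle\langle\bz_k,\bz_l\rangle$ and take expectations. Because the $\bz_i$ are independent and centered, a term survives only if every index value that appears does so at least twice; cataloguing the surviving patterns gives the full diagonal $i=j=k=l$, contributing $n\,\EE\|\bz_1\|^4$, together with three ``paired'' patterns with the two values distinct, namely $\{i=j,\,k=l\}$, $\{i=k,\,j=l\}$, and $\{i=l,\,j=k\}$, each ranging over $n(n-1)$ index choices. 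This yields
$$\EE\|S\|^4 = n\,\EE\|\bz_1\|^4 + n(n-1)(\EE\|\bz_1\|^2)^2 + 2n(n-1)\,\EE\langle\bz_1,\bz_2\rangle^2,$$
where $\bz_1,\bz_2$ denote independent copies.

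To finish, I would bound the cross term pointwise by $\langle\bz_1,\bz_2\rangle^2\le\|\bz_1\|^2\|\bz_2\|^2$ and take expectations, giving $\EE\langle\bz_1,\bz_2\rangle^2 \le (\EE\|\bz_1\|^2)^2$, hence $\EE\|S\|^4 \le n\,\EE\|\bz_1\|^4 + 3n^2(\EE\|\bz_1\|^2)^2 = B + 3A^2$. Substituting $\EE\|S\|^2 = A$ and this bound into the Cauchy--Schwarz estimate gives $\EE\|S\|^3 \le \sqrt{A(B+3A^2)} = \sqrt{AB + 3A^3}$, and the elementary inequality $\sqrt{x+y}\le\sqrt{x}+\sqrt{y}$ then yields $\sqrt{AB+3A^3}\le \sqrt{AB} + \sqrt{3}\,A^{3/2}$; dividing by $n^3$ recovers exactly the claimed bound. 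The only mildly delicate step is the fourth-moment expansion: one must correctly verify that all ``uneven'' index patterns vanish by independence and mean zero, and that there are precisely three distinct pairing patterns (producing the factor $3$, and ultimately the $\sqrt 3$). Everything else is routine.
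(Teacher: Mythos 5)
Your proposal is correct and follows essentially the same route as the paper: Cauchy--Schwarz to split the third moment into second and fourth moments, a direct second-moment computation, a fourth-moment expansion into the diagonal plus three pairing patterns (bounding the cross term $\EE\langle\bz_1,\bz_2\rangle^2\le(\EE\|\bz_1\|^2)^2$), and finally $\sqrt{x+y}\le\sqrt{x}+\sqrt{y}$. The only cosmetic difference is that you work with the unnormalized sum $S=n(\tilde\bx-\bar\bx)$ while the paper keeps the $1/n$ factors throughout.
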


\begin{proof}
We ignore the subscript $\nu$ in the expectation. By Cauchy-Schwartz inequality, we have 
\begin{equation}\label{eqn:lm:3rd:pf:1}
\EE \|\tilde\bx-\bar\bx\|^3 \leq \sqrt{\EE \|\tilde\bx-\bar\bx\|^2\EE \|\tilde\bx-\bar\bx\|^4}.
\end{equation}
For $\EE_\nu \|\tilde\bx-\bar\bx\|^2$, by Lemma~\ref{lm:quadratic_form}, we have
\begin{equation}\label{eqn:lm:3rd:pf:2}
    \EE \|\tilde\bx-\bar\bx\|^2 = \frac{1}{n^2}\sum\limits_{i=1}^n \|\bx_i-\bar\bx\|^2.
\end{equation}
For $\EE_\nu \|\tilde\bx-\bar\bx\|^4$, noting the independency of $\{\tilde{\bx}_i\}$, we expand the expression and drop the zero terms, and obtain
{\small
\begin{align}
\EE_\nu \|\tilde\bx-\bar\bx\|^4 & = \frac{1}{n^4}\sum\limits_{i,j,k,l=1}^n \EE (\tilde{\bx}_i-\bar\bx)^T(\tilde{\bx}_j-\bar\bx)(\tilde{\bx}_k-\bar\bx)^T(\tilde{\bx}_l-\bar\bx) \nonumber\\
& = \frac{1}{n^4}\left(\sum\limits_{i=1}^n \EE\|\tilde{\bx}_i-\bar\bx\|^4 + \sum\limits_{i\neq j}\EE \|\tilde{\bx}_i-\bar\bx\|^2\|\tilde{\bx}_j-\bar\bx\|^2 + 2\sum\limits_{i\neq j}\EE \big((\tilde{\bx}_i-\bar\bx)^T(\tilde{\bx}_j-\bar\bx)\big)^2 \right) \nonumber\\
& \leq \frac{1}{n^3} \EE\|\tilde{\bx}_1-\bar\bx\|^4 + \frac{3}{n^2}\big(\EE \|\tilde{\bx}_1-\bar\bx\|^2\big)^2 \nonumber\\
& = \frac{1}{n^4}\sum\limits_{i=1}^n \|\bx_i-\bar\bx\|^4 + \frac{3}{n^4}\left(\sum\limits_{i=1}^n \|\bx_i-\bar\bx\|^2\right)^2. \label{eqn:lm:3rd:pf:3}
\end{align}}
Substituting~\eqref{eqn:lm:3rd:pf:2} and~\eqref{eqn:lm:3rd:pf:3} into~\eqref{eqn:lm:3rd:pf:1}, and use the result $\sqrt{a+b}\leq\sqrt{a}+\sqrt{b}$ completes the proof. 
\end{proof}

The next two lemmas estimates the moments of $\bx-\bx^*$, $\bx-\bar\bx$ and $\bar\bx-\bx^*$. The first one is a general result for moments of the average of i.i.d random variables.
\begin{lemma}\label{lm:moments1}
Let $X$ be a random variable taking values on $\RR$, and $X_1, X_2, ..., X_n$ are i.i.d copies of $X$. Let $\bar{X}=\frac{1}{n}(X_1+X_2+\cdots+X_n)$. Let $k$ be a positive integer. Assume $\EE X=0$, and $X$ has up to $2k$-th order finite moments. Then, for any $l\leq 2k$, we have $\EE |\bar{X}|^l = O(\frac{1}{n^{l/2}})$.
\end{lemma}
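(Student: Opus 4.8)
The plan is to reduce to the case of even exponents and then run a moment-counting (Wick-type) argument. First, by Lyapunov's inequality --- equivalently, the monotonicity of $L^p$-norms with respect to a probability measure --- for any $l\le 2k$ we have $\EE|\bar{X}|^l \le \big(\EE|\bar{X}|^{l'}\big)^{l/l'}$, where $l'$ is the smallest even integer with $l\le l'\le 2k$. Hence it suffices to prove the claim when $l=2m$ is even with $2m\le 2k$: once we know $\EE\bar{X}^{2m}=O(n^{-m})$, we get $\EE|\bar{X}|^l\le\big(O(n^{-m})\big)^{l/(2m)}=O(n^{-l/2})$.

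Next, I would expand
\begin{equation*}
\EE\bar{X}^{2m}=\frac{1}{n^{2m}}\sum_{i_1,\dots,i_{2m}=1}^n \EE\big[X_{i_1}X_{i_2}\cdots X_{i_{2m}}\big].
\end{equation*}
Because the $X_i$ are independent and $\EE X=0$, any term in which some index \emph{value} occurs exactly once factors out a mean-zero factor and therefore vanishes. So the only surviving $2m$-tuples $(i_1,\dots,i_{2m})$ are those in which every index value that appears does so at least twice; such a tuple uses at most $m$ distinct values. The number of such tuples is $O(n^m)$ with the implied constant depending only on $m$, and each surviving expectation factors as a product of terms $\EE[X^a]$ with $2\le a\le 2m\le 2k$, hence is bounded in absolute value by a constant $C_k:=\max_{2\le a\le 2k}\EE|X|^a<\infty$ coming from the finite-moment hypothesis. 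Therefore $|\EE\bar{X}^{2m}|\le n^{-2m}\cdot O(n^m)\cdot C_k=O(n^{-m})$, as required.

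The only mildly delicate point is the combinatorial bookkeeping: one must make sure the vanishing argument is applied to a singleton index \emph{value} rather than a singleton position, and that the count of ``all-multiplicities-$\ge 2$'' tuples is genuinely $O(n^m)$ rather than $O(n^{m+1})$. I would make this precise by summing over set partitions of $\{1,\dots,2m\}$ into blocks of size $\ge 2$ --- there are only finitely many, a number depending solely on $m$ --- and, for each such partition, choosing an injection of its blocks into $\{1,\dots,n\}$, which contributes at most $n^{(\#\text{blocks})}\le n^{m}$ choices. Everything else is a routine application of Lyapunov's inequality together with the assumption that $X$ has finite moments up to order $2k$; there is no real obstacle beyond this counting.
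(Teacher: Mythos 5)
Your proposal is correct and follows essentially the same route as the paper: a multinomial/moment-counting expansion for even exponents in which every term containing a singleton index value vanishes by $\EE X=0$ and independence, leaving $O(n^{m})$ surviving tuples each bounded by a product of finite moments, followed by a reduction of the odd (or general) case to the even case via a norm inequality. The only cosmetic difference is that you invoke Lyapunov's inequality where the paper uses Cauchy--Schwarz between the exponents $l-1$ and $l+1$; these are interchangeable here.
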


\begin{proof}
We first show the results for even $l$. Consider the set
\begin{equation*}
    \mathcal{S}_l = \left\{(r_1,r_2,...,r_p):\ r_1\geq r_2\geq...\geq r_p>0,\ \sum\limits_{i=1}^p r_i=l\right\}.
\end{equation*}
Let $|\mathcal{S}_l|$ be the number of elements in $\mathcal{S}_l$. Then, $|\mathcal{S}_l|$ is a number that depends only on $l$. When $l$ is an even number, we have
\begin{equation}\label{eqn:lm:moments1:pf:1}
\EE |\bar{X}|^l = \frac{1}{n^l}\EE\big(X_1+...+X_n\big)^l = \frac{1}{n^l}\EE\left(\sum\limits_{(r_1,...,r_p)\in\cS}C_{(r_1,...,r_p)}\sum\limits_{\substack{i_1,...,i_p=1 \\ i_1\neq...\neq i_p}}^n X_{i_1}^{r_1}X_{i_2}^{r_2}\cdots X_{i_p}^{r_p}\right).
\end{equation}
The right hand side of~\eqref{eqn:lm:moments1:pf:1} is the multinomial expansion of $(X_1+...+X_n)^l$, organized according to the powers of different $X_i$'s in each term, and $C_{(r_1,...,r_p)}$ are positive integers related with multinomial coefficients. $C_{(r_1,...,r_p)}$ depends on $l$ and $r_1,...,r_p$, but is independent with $n$. Since $\EE X=0$, the expectation of $X_{i_1}^{r_1}X_{i_2}^{r_2}\cdots X_{i_p}^{r_p}$ is nonzero only when there is no $r_i$ that equals to $1$. Hence, \eqref{eqn:lm:moments1:pf:1} equals to
\begin{equation}\label{eqn:lm:moments1:pf:2}
\frac{1}{n^l}\left(\sum\limits_{\substack{(r_1,...,r_p)\in\cS \\ r_p\geq2}}C_{(r_1,...,r_p)}\sum\limits_{\substack{i_1,...,i_p=1 \\ i_1\neq...\neq i_p}}^n \EE \big(X_{i_1}^{r_1}X_{i_2}^{r_2}\cdots X_{i_p}^{r_p}\big)\right).
\end{equation}
Because $X_{i_1}, ..., X_{i_p}$ are independent, we have $\EE \big(X_{i_1}^{r_1}X_{i_2}^{r_2}\cdots X_{i_p}^{r_p}\big) = \prod_{j=1}^p \EE X_{i_j}^{r_j}$. By the finite moment assumption of $X$, we can find a constant $C$ such that $|\EE X^j|\leq C$ holds for any $1\leq j\leq 2k$. Then, we have 
{\small
\begin{align*}
\frac{1}{n^l}\left(\sum\limits_{\substack{(r_1,...,r_p)\in\cS \\ r_p\geq2}}C_{(r_1,...,r_p)}\sum\limits_{\substack{i_1,...,i_p=1 \\ i_1\neq...\neq i_p}}^n \EE \big(X_{i_1}^{r_1}X_{i_2}^{r_2}\cdots X_{i_p}^{r_p}\big)\right) &\leq \frac{1}{n^l}\left(\sum\limits_{\substack{(r_1,...,r_p)\in\cS \\ r_p\geq2}}C_{(r_1,...,r_p)}\sum\limits_{\substack{i_1,...,i_p=1 \\ i_1\neq...\neq i_p}}^n C^p\right).
\end{align*}}
Moreover, since $r_1\geq r_2\geq...\geq r_p\geq2$, we have $p\leq l/2$, and hence
\begin{equation*}
\frac{1}{n^l}\sum\limits_{\substack{(r_1,...,r_p)\in\cS \\ r_p\geq2}}C_{(r_1,...,r_p)}\sum\limits_{\substack{i_1,...,i_p=1 \\ i_1\neq...\neq i_p}}^n C^l \leq \frac{1}{n^l}\sum\limits_{\substack{(r_1,...,r_p)\in\cS \\ r_p\geq2}}C_{(r_1,...,r_p)}n^{\frac{l}{2}}C^l
\leq \frac{|S_l|\max C_{(r_1,...,r_p)}C^l}{n^{l/2}}.
\end{equation*}
This completes the proof for even $l$. 

For odd $l$, note that $2k\geq l+1$. The result follows by a Cauchy-Schwartz inequality:
\begin{equation*}
    \EE|\bar{X}|^l \leq \sqrt{\EE|\bar{X}|^{l+1}\EE|\bar{X}|^{l-1}}.
\end{equation*}
\end{proof}

\begin{lemma}\label{lm:moments2}
Let $\mu$ be a probability distribution on $\RR^d$ that satisfies Assumption~\ref{assump:mu}, and $\bx, \bx_1, \bx_2, ..., \bx_n$ be i.i.d samples from $\mu$. Let $\bx^*=\EE_\mu \bx$, and $\bar\bx=\frac{1}{n}\sum_{i=1}^n \bx_i$. Then, for any positive integer $k\leq8$, we have
\begin{equation*}
\EE\|\bx-\bx^*\|^k=O(1),\quad \EE \|\bx_i-\bar\bx\|^k=O(1),\quad \EE\|\bar\bx-\bx^*\|^k=O(\frac{1}{n^{k/2}}).
\end{equation*}
\end{lemma}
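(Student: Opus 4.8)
The plan is to establish the three bounds in turn, reducing each one to the scalar moment estimate already proved in Lemma~\ref{lm:moments1}. For the first bound, write $\|\bx-\bx^*\|^k = \big(\sum_{j=1}^d (x^j-(x^*)^j)^2\big)^{k/2}$. Since $d$ is treated as a constant, it suffices to control the expectation of each degree-$k$ monomial in the centered coordinates $x^j-(x^*)^j$; by Assumption~\ref{assump:mu} every mixed moment of order $\le 8$ is finite, and none of these quantities involves $n$, so $\EE\|\bx-\bx^*\|^k = O(1)$. (Equivalently, one may use $t^k \le 1 + t^8$ for $t\ge 0$ and $k\le 8$, so $\EE\|\bx-\bx^*\|^k \le 1 + \EE\|\bx-\bx^*\|^8 < \infty$.)

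For the bound on $\EE\|\bar\bx-\bx^*\|^k$, I would apply Lemma~\ref{lm:moments1} coordinatewise. For each $j$ the variable $X^j := x^j-(x^*)^j$ has mean zero and finite moments up to order $8$, and $\bar x^j-(x^*)^j$ is precisely the empirical mean of $n$ i.i.d.\ copies of $X^j$; hence $\EE|\bar x^j-(x^*)^j|^l = O(n^{-l/2})$ for every $l\le 8$. When $k=2m$ is even, expand $\|\bar\bx-\bx^*\|^k = \big(\sum_j (\bar x^j-(x^*)^j)^2\big)^m$ into finitely many monomials $\prod_{i=1}^m (\bar x^{j_i}-(x^*)^{j_i})^2$ and bound each by Hölder's inequality with all exponents equal to $m$:
\begin{equation*}
\EE\prod_{i=1}^m (\bar x^{j_i}-(x^*)^{j_i})^2 \;\le\; \prod_{i=1}^m \big(\EE|\bar x^{j_i}-(x^*)^{j_i}|^{2m}\big)^{1/m} \;=\; O(n^{-m}) \;=\; O(n^{-k/2}).
\end{equation*}
Summing the $O(1)$-many such terms gives $\EE\|\bar\bx-\bx^*\|^k = O(n^{-k/2})$ for even $k$. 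For odd $k\le 7$ I would interpolate via Cauchy--Schwarz, $\EE\|\bar\bx-\bx^*\|^k \le \sqrt{\EE\|\bar\bx-\bx^*\|^{k-1}\,\EE\|\bar\bx-\bx^*\|^{k+1}} = \sqrt{O(n^{-(k-1)/2})\,O(n^{-(k+1)/2})} = O(n^{-k/2})$, which is legitimate since then $k+1\le 8$.

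The bound on $\EE\|\bx_i-\bar\bx\|^k$ then follows immediately: by the triangle inequality and $(a+b)^k \le 2^{k-1}(a^k+b^k)$,
\begin{equation*}
\EE\|\bx_i-\bar\bx\|^k \;\le\; 2^{k-1}\big(\EE\|\bx_i-\bx^*\|^k + \EE\|\bar\bx-\bx^*\|^k\big),
\end{equation*}
and the first term is $O(1)$ by the first estimate (as $\bx_i$ has the law of $\bx$) while the second is $O(n^{-k/2}) = O(1)$ by the second estimate.

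This argument is entirely routine; there is no genuine obstacle. The only places deserving a little care are the reduction from the Euclidean norm to coordinate monomials (valid precisely because $d$ is a fixed constant, absorbed into the implied $O(\cdot)$ constant) and the treatment of odd exponents, which is why I interpolate with Cauchy--Schwarz rather than attempting to expand a half-integer power of $\|\bar\bx-\bx^*\|^2$ directly.
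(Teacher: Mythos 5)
Your proposal is correct and follows essentially the same route as the paper: the first bound from Assumption~\ref{assump:mu}, the third from Lemma~\ref{lm:moments1}, and the second via the triangle inequality. The only difference is that you carefully spell out the reduction from the Euclidean norm of $\bar\bx-\bx^*$ to scalar coordinate moments (via monomial expansion, H\"older, and Cauchy--Schwarz interpolation for odd $k$), a step the paper compresses into calling the third estimate a ``direct corollary'' of Lemma~\ref{lm:moments1}.
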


\begin{proof}
The first result is given by Assumption~\ref{assump:mu}. The third result is a direct corollary of Lemma~\ref{lm:moments1}. The second results follows the triangle inequality:
\begin{equation*}
\|\bx_i-\bar\bx\|^k \leq \big(\|\bx-\bx^*\|+\|\bar\bx-\bx^*\|\big)^k \leq 2^k\big(\|\bx-\bx^*\|^k+\|\bar\bx-\bx^*\|^K\big).
\end{equation*}
\end{proof}

Finally, we study $\EE\|\tilde\bx-\bar{\bx}\|$ using similar approaches.
\begin{lemma}\label{lm:moments3}
Let $\mu$ be a probability distribution on $\RR^d$ that satisfies Assumption~\ref{assump:mu}, and $\nu$ is the uniform ditribution on $[n]$. Consider $\bx, \bx_1, \bx_2, ..., \bx_n$ i.i.d sampled from $\mu$, and $k_1,...,k_n$ i.i.d sampled from $\nu$. Let $\bar\bx=\frac{1}{n}\sum_{i=1}^n \bx_i$, and $\tilde\bx=\frac{1}{n}\sum_{i=1}^n \bx_{k_i}$. Then, for any positive integer $k\leq8$, we have
\begin{equation*}
    \EE \|\tilde\bx-\bar{\bx}\|^k = O(\frac{1}{n^{k/2}}),
\end{equation*}
where the expectation is taken on both $\mu$ and $\nu$.
\end{lemma}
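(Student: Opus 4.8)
\emph{Proof plan.} The plan is to condition on the draw of $\bx_1,\dots,\bx_n$ and exploit that, given $\cX=\{\bx_1,\dots,\bx_n\}$, the bootstrap mean $\tilde\bx$ is the empirical mean of $n$ i.i.d.\ draws from the uniform measure $\nu$ on $\cX$, whose mean is exactly $\bar\bx$. Thus $\tilde\bx-\bar\bx=\frac1n\sum_{i=1}^n(\bx_{k_i}-\bar\bx)$ is the centered empirical mean of conditionally i.i.d.\ random vectors, so the bound should follow from the very combinatorial argument already carried out in Lemma~\ref{lm:moments1}, applied conditionally and coordinate-by-coordinate, and then averaged over $\mu$ using Lemma~\ref{lm:moments2}. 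As in those lemmas the hypothesis $k\le 8$ is exactly what Assumption~\ref{assump:mu} supports.

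I would first handle even $k$. Fix $\cX$ and a coordinate $a$, and set $X=x_{k_1}^a-\bar{x}^a$, so that $\EE_\nu X=0$ and $\EE_\nu X^r=\frac1n\sum_{i=1}^n(x_i^a-\bar{x}^a)^r$ for every $r$. Expanding $\EE_\nu(\tilde{x}^a-\bar{x}^a)^k$ multinomially exactly as in the proof of Lemma~\ref{lm:moments1}, every surviving monomial involves at most $k/2$ distinct indices, hence contributes a factor $n^{-k/2}$ together with a product $\prod_j \EE_\nu X^{r_j}$ with all $r_j\ge 2$ and $\sum_j r_j=k$; by the power-mean inequality $\bigl|\tfrac1n\sum_i(x_i^a-\bar{x}^a)^{r_j}\bigr|\le\bigl(\tfrac1n\sum_i|x_i^a-\bar{x}^a|^k\bigr)^{r_j/k}$, so each such product is bounded in absolute value by $\frac1n\sum_i|x_i^a-\bar{x}^a|^k$. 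This gives $\EE_\nu|\tilde{x}^a-\bar{x}^a|^k\le C_k n^{-k/2}\cdot\frac1n\sum_i|x_i^a-\bar{x}^a|^k$ for a constant $C_k$ depending only on $k$. Summing over the $d$ coordinates and using $\|\bv\|^k\le d^{k/2-1}\sum_a|v^a|^k$ and $\sum_a|v^a|^k\le\|\bv\|^k$ (both valid for $k\ge 2$), one obtains
\begin{equation*}
\EE_\nu\|\tilde\bx-\bar\bx\|^k\ \le\ \frac{C_k\,d^{k/2-1}}{n^{k/2}}\cdot\frac1n\sum_{i=1}^n\|\bx_i-\bar\bx\|^k.
\end{equation*}
Taking $\EE_\mu$ of both sides, the right-hand side becomes $C_k d^{k/2-1}n^{-k/2}\,\EE_\mu\|\bx_1-\bar\bx\|^k=O(n^{-k/2})$ by Lemma~\ref{lm:moments2}, which proves the claim for even $k\le 8$.

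For odd $k\in\{1,3,5,7\}$, so that $k-1$ and $k+1$ both lie in $\{0,\dots,8\}$, I would deduce the estimate from the even cases by Cauchy–Schwarz, $\EE\|\tilde\bx-\bar\bx\|^k\le\bigl(\EE\|\tilde\bx-\bar\bx\|^{k-1}\,\EE\|\tilde\bx-\bar\bx\|^{k+1}\bigr)^{1/2}=O(n^{-(k-1)/4})\,O(n^{-(k+1)/4})=O(n^{-k/2})$. The only genuine subtlety—and the point I expect to require the most care in writing—is that Lemma~\ref{lm:moments1} is stated with an $O$-constant that tacitly depends on the moments of the summands, whereas here the summands come from the \emph{random} measure $\nu$ on $\cX$; the power-mean step above is precisely what resolves this, collapsing all the empirical moments $\EE_\nu X^{r_j}$ into the single quantity $\frac1n\sum_i|x_i^a-\bar{x}^a|^k$ whose $\mu$-expectation is $O(1)$ under Assumption~\ref{assump:mu}. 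Everything else is routine bookkeeping already done in Lemmas~\ref{lm:moments1} and~\ref{lm:moments2}.
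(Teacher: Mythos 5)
Your proposal is correct and follows essentially the same route as the paper: condition on $\bx_1,\dots,\bx_n$, run the multinomial-expansion argument of Lemma~\ref{lm:moments1} for the conditionally i.i.d.\ bootstrap draws (picking up the $n^{-k/2}$ from the index count), bound the resulting empirical moments, and then take the outer expectation via Lemma~\ref{lm:moments2}, with Cauchy--Schwarz for odd $k$. The only differences are cosmetic---you reduce to coordinates so that Lemma~\ref{lm:moments1} applies verbatim and you collapse the product of empirical moments by the power-mean inequality before taking $\EE_\mu$, whereas the paper expands in the vector norms directly and applies H\"older's inequality after taking $\EE_\mu$; both are sound.
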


\begin{proof}
Following the proof of Lemma~\ref{lm:moments1}, fixing $\bx_1,...,\bx_n$, for even number $l\leq8$ we have
{\small
\begin{align}
\EE_\nu \|\tilde\bx-\bar\bx\|^l & \leq \frac{1}{n^l}\left(\sum\limits_{\substack{(r_1,...,r_p)\in\cS \\ r_p\geq2}}C_{(r_1,...,r_p)}\sum\limits_{\substack{i_1,...,i_p=1 \\ i_1\neq...\neq i_p}}^n \EE_\nu \|\bx_{k_{i_1}}-\bar\bx\|^{r_1}\EE_\nu \|\bx_{k_{i_2}}-\bar\bx\|^{r_2}\cdots \EE_\nu\|\bx_{k_{i_p}}-\bar\bx\|^{r_p}\right) \nonumber\\
& = \frac{1}{n^l}\left(\sum\limits_{\substack{(r_1,...,r_p)\in\cS \\ r_p\geq2}}C_{(r_1,...,r_p)}\sum\limits_{\substack{i_1,...,i_p=1 \\ i_1\neq...\neq i_p}}^n \prod\limits_{j=1}^p \big(\frac{1}{n}\sum\limits_{i=1}^n \|\bx_i-\bar\bx\|^{r_j}\big) \right). \label{eqn:lm:moments3:pf:1}
\end{align}}
When we consider the sampling of $\bx_1,...,\bx_n$, we need to take an expectation of~\eqref{eqn:lm:moments3:pf:1} over $\{\bx_i\}$. For each term in the sums in~\eqref{eqn:lm:moments3:pf:1}, we have 
\begin{align*}
\EE_\mu \prod\limits_{j=1}^p \big(\frac{1}{n}\sum\limits_{i=1}^n \|\bx_i-\bar\bx\|^{r_j}\big) & = \frac{1}{n^p}\sum_{k_1,...,k_p=1}^n \EE_\mu \prod\limits_{j=1}^p \|\bx_{k_j}-\bar\bx\|^{r_j} \\
  & \leq \frac{1}{n^p}\sum_{k_1,...,k_p=1}^n \prod\limits_{j=1}^p \big(\EE_\mu \|\bx_{k_j}-\bar\bx\|^l\big)^{\frac{r_j}{l}} \\
  & = \frac{1}{n^p}\sum_{k_1,...,k_p=1}^n \prod\limits_{j=1}^p O(1) \\
  & = O(1),
\end{align*}
where the first to the second line is given by the H\"{o}lder's inequality, and the second to the third line is given by Lemma~\ref{lm:moments2}.
Substituting the estimate above to~\eqref{eqn:lm:3rd:pf:1}, we have that  there exists a constant $C$ independent with $n$, such that 
\begin{equation*}
\EE \|\tilde\bx-\bar\bx\|^l \leq \frac{1}{n^l}\left(\sum\limits_{\substack{(r_1,...,r_p)\in\cS \\ r_p\geq2}}C_{(r_1,...,r_p)}\sum\limits_{\substack{i_1,...,i_p=1 \\ i_1\neq...\neq i_p}}^n C \right) = O(\frac{1}{n^{\frac{l}{2}}}).
\end{equation*}
For odd $l$ we can show the result similar to the proof of Lemma~\ref{lm:moments1} using the Cauchy-Schwartz inequality.
\end{proof}

\section{Proof of Theorem~\ref{thm:main_scaling}}\label{sec:pf2}
\subsection{Additional notations}
In this section, we will expand our use of the $\hot$ notations. We use $\hot_{\bar\bx}$ to denote higher-order-terms with coefficients depending on $F(\bar\bx)$ or derivatives of $F$ evaluated at $\bar\bx$. For example,
\begin{equation*}
\big(\nabla F(\bar\bx)^T(\tilde\bx-\bar\bx)\big)^2 + F(\bar\bx)\|\tilde\bx-\bar\bx\|^2_{H(\bar\bx)}
\end{equation*}
can be represented by $\hot_{\bar\bx}(\tilde\bx-\bar\bx; 2)$. By Lemma~\ref{lm:taylor}, $F(\bar\bx)$, $\nabla F(\bar\bx)$, $\nabla^2F(\bar\bx)$ and $\nabla^3F(\bar\bx)$ can be represented by $\hot(\bar\bx-\bx^*;0)$. (Note that in the Taylor expansion the coefficients depends on $\bx^*$, which is treated as a constant.) Hence, any $\hot_{\bar\bx}(f_1,\cdots,f_k;r)$ can be represented by $\hot(f_1,\cdots,f_k,\bar\bx-\bx^*;r)$. Moreover, we use $\hot_{\EE_{\tilde\bx}}$ to denote higher-order-terms that expectations with respect to $\tilde\bx$ are taken for some parts. For example, 
\begin{equation*}
    \nabla F(\bx^*)^T(\bar\bx-\bx^*)\EE_{\tilde\bx} \|\tilde\bx-\bar\bx\|^2_{H(\bx^*)}
\end{equation*}
can be represented by $\hot_{\EE_{\tilde\bx}}(\bar\bx-\bx^*,\tilde\bx-\bar\bx;3)$. Using the H\"{o}lder's inequality and Jensen's inequality, it is easy to show that the $\EE_{\tilde\bx}$ does not change the order of the expectation of the higher-order-terms. For any $r>0$, we have
\begin{equation*}
    \EE \hot_{\EE_{\tilde\bx}}(\bar\bx-\bx^*,\tilde\bx-\bar\bx;r) = O(\frac{1}{n^{r/2}}).
\end{equation*}

\subsection{Main proof}
We take a similar path as the proof of Theorem~\ref{thm:main_shifting}. Specifically, in this section we consider another $\hc$ defined as $\hc=(\hs-1)F(\bar\bx)$, then we have 
\begin{equation*}
\EE (\hs F(\bar\bx)-F(\bx^*))^2 = \EE(F(\bar\bx)+\hc-F(\bx^*))^2,
\end{equation*}
which gives the same form as the shifting debiasing, but with a different debiasing quantity. To prove the theorem, we still show $\EE \hc^2+2\hc(F(\bar\bx)-F(\bx^*))<0$. 

We first decompose $\hc$ to separate the effect of the sampling of $\{\bx_i\}$ and the sampling of $\{\tilde{\bx}_k\}$. Recall that 
\begin{equation*}
    \hs = \frac{F(\bar\bx)\sum\limits_{k=1}^K F(\tilde{\bx}_k)}{\sum\limits_{k=1}^K F(\tilde{\bx}_k)^2} = \frac{F(\bar\bx)\frac{1}{K}\sum\limits_{k=1}^K F(\tilde{\bx}_k)}{\frac{1}{K}\sum\limits_{k=1}^K F(\tilde{\bx}_k)^2}.
\end{equation*}
Define $\bar{s}=\frac{F(\bar\bx)\EE_{\tilde\bx}F(\tilde\bx)}{\EE_{\tilde\bx}F(\tilde\bx)^2}$, $\bar{c}=(\bar{s}-1)F(\bar\bx)$, and
\begin{equation}\label{eqn:s_def}
s_1 = \frac{1}{K}\sum\limits_{k=1}^K F(\tilde{\bx}_k) - \EE_{\tilde\bx}F(\tilde\bx),\quad s_2 = \frac{1}{K}\sum\limits_{k=1}^K F(\tilde{\bx}_k)^2 - \EE_{\tilde\bx}F(\tilde\bx)^2.
\end{equation}
Then, we have 
\begin{align}
\hc & = \bar{c} + (\hs-\bar{s})F(\bar\bx) = \bar{c} + \frac{F(\bar\bx)^2\big(s_1\EE_{\tilde\bx} F(\tilde\bx)^2-s_2\EE_{\tilde\bx} F(\tilde\bx)\big)}{(\EE_{\tilde\bx} F(\tilde\bx)^2+s_2)\EE_{\tilde\bx} F(\tilde\bx)^2} \nonumber\\
& = \bar{c} + \frac{F(\bar\bx)^2\big(s_1\EE_{\tilde\bx} F(\tilde\bx)^2-s_2\EE_{\tilde\bx} F(\tilde\bx)\big)}{(\EE_{\tilde\bx} F(\tilde\bx)^2)^2} - \frac{F(\bar\bx)^2\big(s_1\EE_{\tilde\bx} F(\tilde\bx)^2-s_2\EE_{\tilde\bx} F(\tilde\bx)\big)s_2}{(\EE_{\tilde\bx} F(\tilde\bx)^2+s_2)(\EE_{\tilde\bx} F(\tilde\bx)^2)^2}. \nonumber
\end{align}
Denote 
\begin{equation*}
\delta_1=\frac{F(\bar\bx)^2\big(s_1\EE_{\tilde\bx} F(\tilde\bx)^2-s_2\EE_{\tilde\bx} F(\tilde\bx)\big)}{(\EE_{\tilde\bx} F(\tilde\bx)^2)^2},\quad \delta_2=- \frac{F(\bar\bx)^2\big(s_1\EE_{\tilde\bx} F(\tilde\bx)^2-s_2\EE_{\tilde\bx} F(\tilde\bx)\big)s_2}{(\EE_{\tilde\bx} F(\tilde\bx)^2+s_2)(\EE_{\tilde\bx} F(\tilde\bx)^2)^2}.
\end{equation*}
Then, we can write $\hc=\bar{c}+\delta_1+\delta_2$. Since $\EE_{\tilde\bx}s_1=\EE_{\tilde\bx}s_2=0$, we know $\EE_{\tilde\bx}\delta_1=0$, and hence
{\small
\begin{align}
\EE \hc^2+2\hc(F(\bar\bx)-F(\bx^*)) &= \EE\left( \bar{c}^2+\delta_1^2+\delta_2^2+2\bar{c}\delta_1+2\bar{c}\delta_2+2\delta_1\delta_2\right. \nonumber\\
&+\left. 2\bar{c}(F(\bar\bx)-F(\bx^*) + 2\delta_1(F(\bar\bx)-F(\bx^*)+2\delta_2(F(\bar\bx)-F(\bx^*) \right)\nonumber\\
& = \EE\left(\bar{c}^2+\delta_1^2+\delta_2^2+2\bar{c}\delta_2+2\delta_1\delta_2+2\bar{c}(F(\bar\bx)-F(\bx^*) + 2\delta_2(F(\bar\bx)-F(\bx^*)\right). \label{eqn:scaling_to_est}
\end{align}}

Next, we estimate the terms in~\eqref{eqn:scaling_to_est}.

\subsubsection{Estimate of $\EE\delta_2^2$}\label{sssec:delta2_2}
By Assumption~\ref{assump:pos}, $F$ has a positive lower bound $B$. Hence, we have $\EE_{\tilde\bx}F(\tilde\bx)^2\geq B^2$ and
\begin{equation*}
\EE_{\tilde\bx} F(\tilde\bx)^2+s_2 = \frac{1}{K}\sum\limits_{k=1}^K F(\tilde{\bx}_k)^2 \geq B^2.
\end{equation*}
Substituting the lower bounds into $\delta_2$, we have 
\begin{equation*}
|\delta_2| \leq \frac{F(\bar{\bx})^2}{B^6}\left|\big(s_1\EE_{\tilde\bx} F(\tilde\bx)^2-s_2\EE_{\tilde\bx} F(\tilde\bx)\big)s_2\right|\leq \frac{F(\bar{\bx})^2\EE_{\tilde\bx} F(\tilde\bx)^2}{B^6}|s_1s_2| + \frac{F(\bar{\bx})^2\EE_{\tilde\bx} F(\tilde\bx)}{B^6}|s_2^2|.
\end{equation*}
Therefore, by the H\"{o}lder's inequality and the Jensen's inequality,
\begin{align*}
\EE\delta_2^2 &\leq \frac{2}{B^{12}} \left(\EE [F(\bar{\bx})^4(\EE_{\tilde\bx} F(\tilde\bx)^2)^2s_1^2s_2^2] + \EE [F(\bar{\bx})^4(\EE_{\tilde\bx} F(\tilde\bx))^2s_2^4]\right) \\
& \leq \frac{2}{B^{12}}\left(\big(\EE F(\bar\bx)^{12}\big)^{\frac{1}{3}}\big(\EE F(\tilde\bx)^{12}\big)^{\frac{1}{3}}\big(\EE s_1^{12}\EE s_2^{12}\big)^{\frac{1}{6}} + \big(\EE F(\bar\bx)^{10}\big)^{\frac{2}{5}}\big(\EE F(\tilde\bx)^{10}\big)^{\frac{1}{5}}\big(\EE s_2^{10}\big)^{\frac{2}{5}} \right).
\end{align*}
For the $s_1$, $s_2$ terms, by Lemma~\ref{lm:s}, we have 
\begin{equation*}
\big(\EE s_1^{12}\EE s_2^{12}\big)^{\frac{1}{6}} = O(\frac{1}{K^2n^2})=O(\frac{1}{n^4}),\ \textrm{and}\ \big(\EE s_2^{10}\big)^{\frac{2}{5}}=O(\frac{1}{K^2n^2})=O(\frac{1}{n^4}).
\end{equation*}
For the $\EE F(\bar\bx)^{12}$, $\EE F(\bar\bx)^{10}$, $\EE F(\tilde\bx)^{12}$, and $\EE F(\tilde\bx)^{10}$ terms, by Lemma~\ref{lm:F_moments}, they are all $O(1)$. Therefore, we have the following estimate for $\EE\delta_2^2$:
\begin{equation}\label{eqn:est_delta2_2}
    \EE\delta_2^2=O(\frac{1}{n^4}).
\end{equation}

\subsubsection{Estimate of $\EE\delta_1\delta_2$}
Similar to the estimate of $\delta_2$, we have 
\begin{equation*}
|\delta_1| \leq \frac{F(\bar\bx)^2\EE_{\tilde\bx}F(\tilde\bx)^2}{B^4}|s_1| + \frac{F(\bar\bx)^2\EE_{\tilde\bx}F(\tilde\bx)}{B^4}|s_2|.
\end{equation*}
Therefore, together with the estimate of $\delta_2$, we have 
\begin{align}
|\EE\delta_1\delta_2| & \leq \frac{1}{B^{10}} \left(\EE [F(\bar{\bx})^4(\EE_{\tilde\bx} F(\tilde\bx)^2)^2s_1^2|s_2|] + 2\EE [F(\bar{\bx})^4(\EE_{\tilde\bx} F(\tilde\bx)^2)(\EE_{\tilde\bx} F(\tilde\bx))|s_1|s_2^2] \right. \nonumber\\
&\ \ \ \ \ \ \ \ \  \left.+ \EE [F(\bar{\bx})^4(\EE_{\tilde\bx} F(\tilde\bx))^2|s_2|^3]\right) \nonumber\\
& \leq \frac{1}{B^{10}}\left((\EE F(\bar\bx)^{11})^{\frac{4}{11}}\EE F(\tilde\bx)^{11})^{\frac{4}{11}}(\EE|s_1|^{11})^{\frac{2}{11}}(\EE|s_2|^{11})^{\frac{1}{11}} \right. \nonumber\\
&\ \ \ \ \ \ \ \ \  + 2(\EE F(\bar\bx)^{10})^{\frac{4}{10}}\EE F(\tilde\bx)^{10})^{\frac{3}{10}}(\EE|s_1|^{10})^{\frac{1}{10}}(\EE|s_2|^{10})^{\frac{2}{10}} \nonumber\\
&\ \ \ \ \ \ \ \ \  +\left.(\EE F(\bar\bx)^{9})^{\frac{4}{9}}\EE F(\tilde\bx)^{9})^{\frac{2}{9}}(\EE|s_2|^{9})^{\frac{3}{9}}\right) \nonumber\\
& = O(\frac{1}{n^3}) + O(\frac{1}{n^3}) + O(\frac{1}{n^3}) \nonumber\\
& = O(\frac{1}{n^3}).\label{eqn:est_delta1_delta2}
\end{align}
In the derivations above, we use Lemma~\ref{lm:s} and the fact that $\EE F(\bar\bx)^l=O(1)$, $\EE F(\tilde\bx)=O(1)$ for $l>0$ (Lemma~\ref{lm:F_moments}).

\subsubsection{Estimate of $\EE\delta_2(F(\bar\bx)-F(\bx^*))$}
By the H\"{o}lder's inequality and the result in Section~\ref{sssec:delta2_2}, we have 
\begin{align*}
|\EE\delta_2(F(\bar\bx)-F(\bx^*))| & \leq \sqrt{\EE\delta_2^2 \EE(F(\bar\bx)-F(\bx^*))^2} \\
& \leq O(\frac{1}{n^2})\sqrt{\EE(F(\bar\bx)-F(\bx^*))^2}.
\end{align*}
For $\EE(F(\bar\bx)-F(\bx^*))^2$, by the Taylor expansion and Lemma~\ref{lm:moments2_2} we have 
{\small
\begin{align*}
\EE(F(\bar\bx)-F(\bx^*))^2 & = \EE \left(\nabla F(\bx^*)^T(\bar\bx-\bx^*) + \frac{1}{2}\|\bar\bx-\bx^*\|_{H(\bx^*)} + \frac{1}{6}\nabla^3F(\bx^*)[\bar\bx-\bx^*] + O(\|\bar\bx-\bx^*\|^4)\right)^2 \\
& \leq \EE\big(\hot(\|\bar\bx-\bx^*\|;2)\big) \\
& = O(\frac{1}{n}).
\end{align*}}
Therefore, 
\begin{equation}\label{eqn:delta2_F}
\EE\delta_2(F(\bar\bx)-F(\bx^*)) = O(\frac{1}{n^{2.5}}).
\end{equation}

\subsubsection{Estimate of $\EE\delta_1^2$}
Note that 
\begin{equation*}
\delta_1^2 = \frac{F(\bar\bx)^4}{(\EE_{\tilde\bx} F(\tilde\bx)^2)^4}\big(s_1\EE_{\tilde\bx} F(\tilde\bx)^2-s_2\EE_{\tilde\bx} F(\tilde\bx)\big)^2.
\end{equation*}
We first simplify the problem by noticing that $F(\bar\bx)$ and $F(\tilde\bx)$ are both close to $F(\bx^*)$. Specifically, let 
\begin{equation*}
\alpha := \frac{1}{F(\bx^*)^4}\big(s_1\EE_{\tilde\bx} F(\tilde\bx)^2-s_2\EE_{\tilde\bx} F(\tilde\bx)\big)^2, \qquad \beta=\delta_1^2-\alpha.
\end{equation*}
Then, 
\begin{equation}
\beta = \frac{F(\bar\bx)^4F(\bx^*)^4-(\EE_{\tilde\bx} F(\tilde\bx)^2)^4}{(\EE_{\tilde\bx} F(\tilde\bx)^2)^4 F(\bx^*)^4}\big(s_1\EE_{\tilde\bx} F(\tilde\bx)^2-s_2\EE_{\tilde\bx} F(\tilde\bx)\big)^2.
\end{equation}
By the lower bound of $F$ and the H\"{o}lder's inequality, we have 
\begin{equation}\label{eqn:delta1_2_beta}
\EE |\beta| \leq \frac{1}{B^{12}} \sqrt{\EE\left(F(\bar\bx)^4F(\bx^*)^4-(\EE_{\tilde\bx} F(\tilde\bx)^2)^4\right)^2 \EE\left(s_1\EE_{\tilde\bx} F(\tilde\bx)^2-s_2\EE_{\tilde\bx} F(\tilde\bx)\right)^4}.
\end{equation}
For the term in~\eqref{eqn:delta1_2_beta} with $s_1$ and $s_2$, note that this term contains fourth-order monomials of $s_1$ and $s_2$. Similar to the analysis in Section~\ref{sssec:delta2_2}, we know this term has order $O(\frac{1}{n^4})$. For the other term, by Jensen's inequality we have 
\begin{align*}
\EE\left(F(\bar\bx)^4F(\bx^*)^4-(\EE_{\tilde\bx} F(\tilde\bx)^2)^4\right)^2 & \leq \EE\left(F(\bar\bx)^4F(\bx^*)^4-\EE_{\tilde\bx} F(\tilde\bx)^8\right)^2 \\
& = \EE \left(\EE_{\tilde\bx} \big(F(\bar\bx)^4F(\bx^*)^4-F(\tilde\bx)^8\big)\right)^2 \\
& \leq \EE \left(F(\bar\bx)^4F(\bx^*)^4-F(\tilde\bx)^8\right)^2
\end{align*}
For the term on the last line, a Taylor expansion at $\bx^*$ gives
\begin{align*}
& \EE\left(F(\bx^*)^4\big(F(\bx^*)+\hot(\bar\bx-\bx^*;1)\big)^4 - \big(F(\bx^*) + \hot(\tilde\bx-\bx^*;1)\big)^8\right)^2 \\
= & \EE \big(\hot(\bar\bx-\bx^*;1) + \hot(\tilde\bx-\bx^*;1)\big)^2 \\
\leq & \EE\big(\hot(\|\bar\bx-\bx^*\|, \|\tilde\bx-\bx^*\|; 2)\big)\\
= & O(\frac{1}{n}).
\end{align*}
Therefore, we have 
\begin{equation*}
    \EE\left(F(\bar\bx)^4F(\bx^*)^4-(\EE_{\tilde\bx} F(\tilde\bx)^2)^4\right)^2 = O(\frac{1}{n}),
\end{equation*}
and back to~\eqref{eqn:delta1_2_beta} we have $\EE |\beta| = O(\frac{1}{n^{2.5}})$, and hence 
\begin{equation*}
    \EE\delta_1^2 = \EE\alpha + O(\frac{1}{n^{2.5}}).
\end{equation*}

Next, we study $\EE\alpha$. By the definition of $s_1$ and $s_2$, using the theorem of total expectation, we have 
\begin{align}
\EE\big(s_1\EE_{\tilde\bx} F(\tilde\bx)^2-s_2\EE_{\tilde\bx} F(\tilde\bx)\big)^2 & = \EE \textrm{Var}_{\tilde\bx} \left(\frac{1}{K}\sum\limits_{k=1}^K \big(F(\tilde{\bx}_k)\EE_{\tilde\bx} F(\tilde\bx)^2-F(\tilde{\bx}_k)^2\EE_{\tilde\bx} F(\tilde\bx)\big)\right) \nonumber\\
& = \frac{1}{K}\EE \EE_{\tilde\bx} \left(F(\tilde\bx)\EE_{\tilde\bx} F(\tilde\bx)^2-F(\tilde\bx)^2\EE_{\tilde\bx} F(\tilde\bx)\right)^2\nonumber\\
& = \frac{1}{C_Kn}\EE \left(F(\tilde\bx)\EE_{\tilde\bx} F(\tilde\bx)^2-F(\tilde\bx)^2\EE_{\tilde\bx} F(\tilde\bx)\right)^2.\label{eqn:alpha_1}
\end{align}
Taylor expansion of $F(\tilde\bx)$ at $\bar\bx$ gives
\begin{align*}
F(\tilde\bx) & = F(\bar\bx) + \nabla F(\bar\bx)^T(\tilde\bx-\bar\bx) + \hot_{\bar\bx}(\tilde\bx-\bar\bx; 2), \\
F(\tilde\bx)^2 & = F(\bar\bx)^2 + 2F(\bar\bx)\nabla F(\bar\bx)^T(\tilde\bx-\bar\bx) + \hot_{\bar\bx}(\tilde\bx-\bar\bx; 2).
\end{align*}
Here, the higher-order-terms notation $\hot_{\bar\bx}$ means these terms depends on the derivatives of $F$ at $\bar\bx$. We note that a Taylor expansion of these derivatives at $\bx^*$ will replace the dependency on $\nabla F(\bar\bx)$, $H(\bar\bx)$ and $\nabla F(\bar\bx)$ by powers of $\bar\bx-\bx^*$. It will not lower the order of the terms, Hence, the higher-order-terms can be written as $\hot(\tilde\bx-\bar\bx, \bar\bx-\bx^*;2)$. Plugging the Taylor expansions into~\eqref{eqn:alpha_1}, since $\EE_{\tilde\bx}(\tilde\bx-\bar\bx)=0$, we obtain
\begin{align}
F(\tilde\bx)\EE_{\tilde{\bx}}F(\tilde\bx)^2 & = F(\bar\bx)^3 + F(\bar\bx)^2\nabla F(\bar\bx)^T(\tilde\bx-\bar\bx) + F(\bar\bx)^2\hot(\tilde\bx-\bar\bx, \bar\bx-\bx^*;2) \nonumber\\
&\ \ + \left(F(\bar\bx) + \nabla F(\bar\bx)^T(\tilde\bx-\bar\bx) + \hot_{\bar\bx}(\tilde\bx-\bar\bx, \bar\bx-\bx^*;2)\right)\EE_{\tilde\bx}\hot_{\bar\bx}(\tilde\bx-\bar\bx, \bar\bx-\bx^*;2) \nonumber\\
& = F(\bar\bx)^3 + F(\bar\bx)^2\nabla F(\bar\bx)^T(\tilde\bx-\bar\bx) + \hot_{\EE_{\tilde\bx}}(\tilde\bx-\bar\bx, \bar\bx-\bx^*;2),\label{eqn:alpha_1.1}
\end{align}
where $\hot_{\EE_{\tilde\bx}}$ denotes higher-order-terms in which expectations over $\bar\bx$ are taken for some terms. Similarly, we have 
\begin{equation}\label{eqn:alpha_1.2}
F(\tilde\bx)^2\EE_{\tilde\bx}F(\tilde\bx) = F(\bar\bx)^3 + 2F(\bar\bx)^2\nabla F(\bar\bx)^T(\tilde\bx-\bar\bx) + \hot_{\EE_{\tilde\bx}}(\tilde\bx-\bar\bx, \bar\bx-\bx^*;2).
\end{equation}
Combining~\eqref{eqn:alpha_1.1} and~\eqref{eqn:alpha_1.2}, \eqref{eqn:alpha_1} becomes
\begin{align}
&\frac{1}{C_Kn}\EE\left(F(\bar\bx)^2\nabla F(\bar\bx)^T(\tilde\bx-\bar\bx) + \hot_{\EE_{\tilde\bx}}(\tilde\bx-\bar\bx, \bar\bx-\bx^*;2)\right)^2 \nonumber\\
=& \frac{1}{C_Kn} \EE\left(F(\bar\bx)^2\nabla F(\bar\bx)^T(\tilde\bx-\bar\bx)\right)^2 + \frac{1}{C_Kn}\EE \hot_{\EE_{\tilde\bx}}(\tilde\bx-\bar\bx, \bar\bx-\bx^*;3) \nonumber\\
=& \frac{1}{C_Kn}\EE F(\bar\bx)^4(\nabla F(\bar\bx)^T(\tilde\bx-\bar\bx))^2 + O(\frac{1}{n^{2.5}}).
\end{align}
Expanding $F(\bar\bx)$ and $\nabla F(\bar\bx)$ at $\bx^*$, similar to arguments above Equation~\eqref{eqn:alpha_1.1}, we have 
\begin{align*}
\EE F(\bar\bx)^4\left(\nabla F(\bar\bx)(\tilde\bx-\bar\bx)\right)^2 &= \EE F(\bx^*)^4\left(\nabla F(\bx^*)(\tilde\bx-\bar\bx)\right)^2 + \EE \big(\hot(\tilde\bx-\bar\bx, \bar\bx-\bx^*;3)\big) \\
&= \EE F(\bx^*)^4\left(\nabla F(\bx^*)(\tilde\bx-\bar\bx)\right)^2 + O(\frac{1}{n^{1.5}}).
\end{align*}
Moreover, by the analysis~\eqref{eqn:used_in_next_thm} in Section~\ref{sssec:delta_square}, we have 
\begin{equation*}
\EE\left(\nabla F(\bx^*)(\tilde\bx-\bar\bx)\right)^2 = \frac{1}{n}\nabla F(\bx^*)^TM_2\nabla F(\bx^*) + O(\frac{1}{n^{1.5}}) = \frac{\sigma_1}{n} + O(\frac{1}{n^{1.5}}).
\end{equation*}
Back to~\eqref{eqn:alpha_1}, we have 
\begin{equation*}
\frac{1}{C_Kn}\EE \left(F(\tilde\bx)\EE_{\tilde\bx} F(\tilde\bx)^2-F(\tilde\bx)^2\EE_{\tilde\bx} F(\tilde\bx)\right)^2 = \frac{F(\bx^*)^4\sigma_1}{C_Kn^2} + O(\frac{1}{n^{2.5}}).
\end{equation*}
Hence, for $\alpha$, we have 
\begin{equation*}
    \EE \alpha = \frac{\sigma_1}{C_Kn^2} + O(\frac{1}{n^{2.5}}),
\end{equation*}
and for $\delta_1^2$ we have the same estimate 
\begin{equation}\label{eqn:est_delta1_2}
    \EE \delta_1^2 = \frac{\sigma_1}{C_Kn^2} + O(\frac{1}{n^{2.5}}).
\end{equation}

\subsubsection{Estimate of $\EE \bar{c}^2$}
To estimate $\EE \bar{c}^2$, we first study $\bar{c}$. Recall that \begin{equation*}
\bar{c}=(\bar{s}-1)F(\bar\bx) = \frac{F(\bar\bx)^2\EE F(\tilde\bx)}{\EE F(\tilde\bx)^2} - F(\bar\bx).
\end{equation*}
To simplify notations, we use $F$, $E_1$, $E_2$ to represent $F(\bar\bx)$, $\EE_{\tilde\bx}F(\tilde\bx)$, and $\EE_{\tilde\bx}F(\tilde\bx)^2$ when no confusion is caused. Our intuition is that $E_1$ is close to $F$ and $E_2$ is close to $F^2$. Hence, we will try to expand $F(\tilde\bx)$ and $F(\tilde\bx)^2$ at $\bar\bx$. First, applying the identity $\frac{1}{1+a}=1-a+\frac{a^2}{1+a}$ for $a=\frac{E_2}{F^2}-1$, we have
\begin{align}
\bar{c} &= \frac{E_1}{\frac{E_2}{F^2}} - F \nonumber\\
 & = E_1\left(1 - \big(\frac{E_2}{F^2}-1\big) + \frac{\big(\frac{E_2}{F^2}-1\big)^2}{\frac{E_2}{F^2}}\right) - F \nonumber \\
 & = (E_1-F) + \frac{E_1}{F^2}(F^2-E_2) + \frac{E_1(E_2-F^2)^2}{F^2E_2}. \label{eqn:cbar_1}
\end{align}
We will estimate the three terms in~\eqref{eqn:cbar_1}. We start from the first and the second terms.

For the first term $E_1-F$, expanding $F(\tilde\bx)$ at $\bar\bx$ gives
\begin{equation}
E_1-F = \EE_{\tilde\bx}\frac{1}{2}\|\tilde\bx-\bar\bx\|^2_{H(\bar\bx)} + \EE_{\tilde\bx}\frac{1}{6}\partial_{abc}^3 F(\bar\bx)(\tilde{x}^a-\bar{x}^a)(\tilde{x}^b-\bar{x}^b)(\tilde{x}^c-\bar{x}^c) + \EE_{\tilde\bx}O(\|\tilde\bx-\bar\bx\|^4). \label{eqn:cbar_part1}
\end{equation}

For the second term in~\eqref{eqn:cbar_1}, we first expand $E_1$ and $F^2-E_2$. For $E_1$, similar to~\eqref{eqn:cbar_part1} we have 
\begin{equation}\label{eqn:cbar_part2_E1}
E_1 = F(\bar\bx) + \EE_{\tilde\bx}\frac{1}{2}\|\tilde\bx-\bar\bx\|^2_{H(\bar\bx)} + \EE_{\tilde\bx}\frac{1}{6}\partial_{abc}^3 F(\bar\bx)(\tilde{x}^a-\bar{x}^a)(\tilde{x}^b-\bar{x}^b)(\tilde{x}^c-\bar{x}^c) + \EE_{\tilde\bx}O(\|\tilde\bx-\bar\bx\|^4).
\end{equation}
For $E_2$, we have
{\small
\begin{align}
E_2 &= \EE_{\tilde\bx} \left(F(\bar\bx)+\nabla F(\bar\bx)^T(\tilde\bx-\bar\bx) + \frac{1}{2}\|\tilde\bx-\bar\bx\|^2_{H(\bar\bx)} + \frac{1}{6}\partial_{abc}^3 F(\bar\bx)(\tilde{x}^a-\bar{x}^a)(\tilde{x}^b-\bar{x}^b)(\tilde{x}^c-\bar{x}^c)\right. \nonumber\\
&\ \ \quad\quad + O(\|\tilde\bx-\bar\bx\|^4)\biggr)^2 \nonumber\\
& =F(\bar\bx)^2 + \EE_{\tilde\bx}\|\tilde\bx-\bar\bx\|^2_{(F(\bar\bx)H(\bar\bx)+\nabla F(\bar\bx)\nabla F(\bar\bx)^T)} + \EE_{\tilde\bx}\frac{1}{3}F(\bar\bx)\partial_{abc}^3F(\bar\bx)(\tilde{x}^a-\bar{x}^a)(\tilde{x}^b-\bar{x}^b)(\tilde{x}^c-\bar{x}^c) \nonumber\\
&\ \ \ + \EE_{\tilde\bx}\nabla F(\bar\bx)^T(\tilde\bx-\bar\bx)\|\tilde\bx-\bar\bx\|^2_{H(\bar\bx)} + \EE_{\tilde\bx}\hot_{\bar\bx}(\tilde\bx-\bar\bx;4). \label{eqn:cbar_part2_E2}
\end{align}}
We still replace $\hot_{\bar\bx}(\tilde\bx-\bar\bx;4)$ by $\hot(\tilde\bx-\bar\bx,\bar\bx-\bx^*;4)$. Combining~\eqref{eqn:cbar_part2_E1} and~\eqref{eqn:cbar_part2_E2}, we have
\begin{align}
& E_1(E_2-F^2) \nonumber\\
=& \left(F(\bar\bx) + \EE_{\tilde\bx}\frac{1}{2}\|\tilde\bx-\bar\bx\|^2_{H(\bar\bx)} + \EE_{\tilde\bx}\frac{1}{6}\partial_{abc}^3 F(\bar\bx)(\tilde{x}^a-\bar{x}^a)(\tilde{x}^b-\bar{x}^b)(\tilde{x}^c-\bar{x}^c) + \EE_{\tilde\bx}O(\|\tilde\bx-\bar\bx\|^4)\right) \nonumber\\
&\times\left(\EE_{\tilde\bx}\|\tilde\bx-\bar\bx\|^2_{(F(\bar\bx)H(\bar\bx)+\nabla F(\bar\bx)\nabla F(\bar\bx)^T)} + \EE_{\tilde\bx}\frac{1}{3}F(\bar\bx)\partial_{abc}^3F(\bar\bx)(\tilde{x}^a-\bar{x}^a)(\tilde{x}^b-\bar{x}^b)(\tilde{x}^c-\bar{x}^c)\right. \nonumber\\
&\qquad + \EE_{\tilde\bx}\nabla F(\bar\bx)^T(\tilde\bx-\bar\bx)\|\tilde\bx-\bar\bx\|^2_{H(\bar\bx)} + \EE_{\tilde\bx}\hot(\tilde\bx-\bar\bx,\bar\bx-\bx^*;4)\biggr) \nonumber\\
= & \EE_{\tilde\bx}F(\bar\bx)\|\tilde\bx-\bar\bx\|^2_{(F(\bar\bx)H(\bar\bx)+\nabla F(\bar\bx)\nabla F(\bar\bx)^T)} + \EE_{\tilde\bx}\frac{1}{3}F(\bar\bx)^2\partial_{abc}^3F(\bar\bx)(\tilde{x}^a-\bar{x}^a)(\tilde{x}^b-\bar{x}^b)(\tilde{x}^c-\bar{x}^c) \nonumber\\
&+ \EE_{\tilde\bx}F(\bar\bx)\nabla F(\bar\bx)^T(\tilde\bx-\bar\bx)\|\tilde\bx-\bar\bx\|^2_{H(\bar\bx)} + \hot_{\EE_{\tilde\bx}}(\tilde\bx-\bar\bx,\bar\bx-\bx^*;4). \nonumber
\end{align}
Therefore,
\begin{align}
\frac{E_1}{F^2}(F^2-E_2) & = -\EE_{\tilde\bx}\|\tilde\bx-\bar\bx\|^2_{\left(H(\bar\bx)+\frac{\nabla F(\bar\bx)\nabla F(\bar\bx)^T}{F(\bar\bx)}\right)} - \EE_{\tilde\bx}\frac{1}{3}\partial_{abc}^3F(\bar\bx)(\tilde{x}^a-\bar{x}^a)(\tilde{x}^b-\bar{x}^b)(\tilde{x}^c-\bar{x}^c) \nonumber\\
&\ \ \ - \EE_{\tilde\bx}\frac{\nabla F(\bar\bx)^T}{F(\bar\bx)}(\tilde\bx-\bar\bx)\|\tilde\bx-\bar\bx\|^2_{H(\bar\bx)} + \hot_{\EE_{\tilde\bx}}(\tilde\bx-\bar\bx,\bar\bx-\bx^*;4).
\label{eqn:cbar_part2}
\end{align}

Combining~\eqref{eqn:cbar_part1} and~\eqref{eqn:cbar_part2}, we have \begin{align*}
(E_1-F) + \frac{E_1}{F^2}(F^2-E_2) &= -\frac{1}{2}\EE_{\tilde\bx}\|\tilde\bx-\bar\bx\|^2_{\left(H(\bar\bx)+\frac{2\nabla F(\bar\bx)\nabla F(\bar\bx)^T}{F(\bar\bx)}\right)} \nonumber\\
&\ \ \ - \EE_{\tilde\bx}\frac{1}{6}\partial_{abc}^3F(\bar\bx)(\tilde{x}^a-\bar{x}^a)(\tilde{x}^b-\bar{x}^b)(\tilde{x}^c-\bar{x}^c) \nonumber\\
&\ \ \ - \EE_{\tilde\bx}\frac{\nabla F(\bar\bx)^T}{F(\bar\bx)}(\tilde\bx-\bar\bx)\|\tilde\bx-\bar\bx\|^2_{H(\bar\bx)} + \hot_{\EE_{\tilde\bx}}(\tilde\bx-\bar\bx,\bar\bx-\bx^*;4).
\end{align*}
Let $A(\bx)=H(\bx)+\frac{2\nabla F(\bx)\nabla F(\bx)^T}{F(\bx)}$. By Lemma~\ref{lm:quadratic_form} and~\ref{lm:3rd_form}, we have 
{\small
\begin{align}
(E_1-F) + \frac{E_1}{F^2}(F^2-E_2) &=-\frac{1}{2n^2}\sum\limits_{i=1}^n \|\bx_i-\bar\bx\|^2_{A(\bar\bx)} - \frac{1}{6n^3}\sum\limits_{i=1}^n \partial_{abc}^3F(\bar\bx)(x_i^a-\bar{x}^a)(x_i^b-\bar{x}^b)(x_i^c-\bar{x}^c) \nonumber\\
&\ \ \ - \frac{1}{n^3}\sum\limits_{i=1}^n \frac{\nabla F(\bar\bx)^T}{F(\bar\bx)}(\bx_i-\bar\bx)\|\bx_i-\bar\bx\|^2_{H(\bar\bx)} + \hot_{\EE_{\tilde\bx}}(\tilde\bx-\bar\bx,\bar\bx-\bx^*;4). \label{eqn:cbar_part12}
\end{align}}
Let
\begin{align*}
c_1 = - \frac{1}{6n^3}\sum\limits_{i=1}^n \partial_{abc}^3F(\bar\bx)(x_i^a-\bar{x}^a)(x_i^b-\bar{x}^b)(x_i^c-\bar{x}^c) - \frac{1}{n^3}\sum\limits_{i=1}^n \frac{\nabla F(\bar\bx)^T}{F(\bar\bx)}(\bx_i-\bar\bx)\|\bx_i-\bar\bx\|^2_{H(\bar\bx)},
\end{align*}
$c_2$ be the higher-order-terms in~\eqref{eqn:cbar_part12}, and 
$c_3 = \frac{E_1(E_2-F^2)^2}{F^2E_2}$. 
Then, we have 
\begin{equation*}
\bar{c} = -\frac{1}{2n^2}\sum\limits_{i=1}^n \|\bx_i-\bar\bx\|^2_{A(\bar\bx)} + c_1 + c_2 + c_3.
\end{equation*}
Also let $c_4 = \frac{1}{2n^2}\sum\limits_{i=1}^n \|\bx_i-\bar\bx\|^2_{A(\bx^*)-A(\bar\bx)}$,
then
\begin{equation}
\bar{c} = -\frac{1}{2n^2}\sum\limits_{i=1}^n \|\bx_i-\bar\bx\|^2_{A(\bx^*)} + c_1 + c_2 + c_3 + c_4. \label{eqn:cbar_decomp}
\end{equation}

Next, we come to estimate $\EE \bar{c}^2$ using the decomposition~\eqref{eqn:cbar_decomp}. Similar to~\eqref{eqn:c2_bound_1}, we have 
\begin{align}
\EE \bar{c}^2 &\leq \frac{1}{4n^4}\EE\left(\sum\limits_{i=1}^n \|\bx_i-\bar\bx\|^2_{A(\bx^*)}\right)^2 + \EE(|c_1|+|c_2|+|c_3|+|c_4|)^2 \nonumber\\
&\ \ \ \ + \frac{1}{2\sqrt{n}}\EE\left(\frac{1}{2n^2}\sum\limits_{i=1}^n \|\bx_i-\bar\bx\|^2_{A(\bx^*)}\right)^2 + \frac{\sqrt{n}}{2}\EE(|c_1|+|c_2|+|c_3|+|c_4|)^2.\label{eqn:cbar_decomp2}
\end{align}
Since $A(\bx^*)$ is a fixed matrix, it is easy to show that 
\begin{equation*}
\frac{1}{2\sqrt{n}}\EE\left(\frac{1}{2n^2}\sum\limits_{i=1}^n \|\bx_i-\bar\bx\|^2_{A(\bx^*)}\right)^2 = \frac{1}{2\sqrt{n}}O(\frac{1}{n^2}) = O(\frac{1}{n^{2.5}}).
\end{equation*}
For $c_1$, except the $\frac{1}{n^3}$ factors, it consists the sum of $2n$ terms. All terms and their products are $O(1)$ after taking expectation. (This can be shown rigorously by taking Taylor expansions of $F(\bar\bx)$, $\nabla F(\bar\bx)$, $H(\bar\bx)$ and $\nabla^3 F(\bar\bx)$ at $\bx^*$ and applying Lemma~\ref{lm:moments2_2} and~\ref{lm:moments3_2}.) Therefore, $\EE c_1^2=\frac{1}{n^6}O(n^2)=O(\frac{1}{n^4})$.
For the higher-order-terms $c_2$, by Lemma~\ref{lm:moments2_2} and~\ref{lm:moments3_2}, we easily have $\EE c_2^2=O(\frac{1}{n^4})$.

For $c_3$, by the lower bound of $F$, we have
\begin{equation*}
\EE c_3^2 \leq \frac{1}{B^8}\EE E_1^2(E_2-F^2)^4\leq \frac{1}{B^8}\sqrt{\EE F(\tilde\bx)^4 \EE (E_2-F^2)^8}.
\end{equation*}
By Lemma~\ref{lm:F_moments}, $\EE F(\tilde\bx)^4=O(1)$. By~\eqref{eqn:cbar_part2_E2}, the leading term of $E_2-F^2$ is a second-order term. Hence, the leading term of $(E_2-F^2)^8$ has order $16$, which gives $\EE (E_2-F^2)^8 = O(\frac{1}{n^8})$. Totally, we have $\EE c_3^2=O(\frac{1}{n^4})$. For $c_4$, by a Taylor expansion of $A(\bar\bx)$ at $\bx^*$, we have 
{\small
\begin{align}
A(\bar\bx) - A(\bx^*) &= \partial_{abc}^3F(\bx^*)(\bar{x}^c-(x^*)^c) + \frac{2\big((\bar\bx-\bx^*)^TH(\bx^*)\nabla F(\bx^*)^T + \nabla F(\bx^*)H(\bx^*)(\bar\bx-\bx^*)\big)}{F(\bx^*)} \nonumber\\
 &\ \ \ -\frac{2\nabla F(\bx^*)\nabla F(\bx^*)^T}{F(\bx^*)^2}\nabla F(\bx^*)(\bar\bx-\bx^*) + O(\|\bar\bx-\bx^*\|^2).\label{eqn:taylor_A}
\end{align}}
Note that the leading term of $A(\bar\bx)-A(\bx^*)$ is the $\bar\bx-\bx^*$ term, which gives
\begin{equation*}
    \EE \|A(\bar\bx)-A(\bx^*)\|^2 = O(\frac{1}{n}).
\end{equation*}
Hence, 
\begin{equation*}
\EE c_4^2 \leq \frac{1}{4n^4}\EE \left(\sum\limits_{i=1}^n \|\bx_i-\bx^*\|^2\|A(\bar\bx)-A(\bx^*)\|\right)^2 = \frac{1}{4n^4}\cdot n^2O(\frac{1}{n}) = O(\frac{1}{n^3}).
\end{equation*}

Finally, back to~\eqref{eqn:cbar_decomp2}, we have 
\begin{align}
\EE\bar{c}^2 &\leq \frac{1}{4n^4}\EE\left(\sum\limits_{i=1}^n \|\bx_i-\bar\bx\|^2_{A(\bx^*)}\right)^2 + O(\frac{1}{n^{2.5}})\quad = \frac{1}{4n^2} \left(\EE\|\bx-\bx^*\|^2_{A(\bx^*)}\right)^2 + O(\frac{1}{n^{2.5}}) \nonumber\\
& = \frac{\tr(A(\bx^*)M_2)^2}{4n^2} + O(\frac{1}{n^{2.5}})\quad = \frac{1}{4n^2}\left(\sigma_2+\frac{2\sigma_1}{F(\bx^*)}\right)^2 + O(\frac{1}{n^{2.5}}). \label{eqn:est_cbar2}
\end{align}

\subsubsection{Estimate of $\EE\bar{c}\delta_2$}
The H\"{o}lder's inequality and the estimates~ \eqref{eqn:est_delta2_2} and \eqref{eqn:est_cbar2} give 
\begin{equation}\label{eqn:est_cbar_delta2}
\EE \bar{c}\delta_2 \leq \sqrt{\EE \bar{c}^2 \EE\delta_2^2} = O(\frac{1}{n^3}). 
\end{equation}

\subsubsection{Estimate of $\EE \bar{c}(F(\bar\bx)-F(\bx^*))$}
We take the same technique as in the proof of Theorem~\ref{thm:main_shifting}. Specifically, a Taylor expansion of $F(\bar\bx)$ at $\bx^*$ gives
\begin{align}
\EE \bar{c}(F(\bar\bx)-F(\bx^*)) & = \EE\nabla F(\bx^*)^T(\bar\bx-\bx^*)\bar{c} + \EE \frac{1}{2}\|\bar\bx-\bx^*\|^2_{H(\bx^*)}\bar{c} + \EE O(\|\bar\bx-\bx^*\|^3)\bar{c} \nonumber\\
& = \EE\nabla F(\bx^*)^T(\bar\bx-\bx^*)\bar{c} + \EE \frac{1}{2}\|\bar\bx-\bx^*\|^2_{H(\bx^*)}\bar{c} + O(\frac{1}{n^{2.5}}). \label{eqn:cbar_Fdiff_1}
\end{align}
Recall that when estimating $\EE \bar{c}^2$ we have the decomposition~\eqref{eqn:cbar_decomp} $\bar{c}$:
\begin{equation*}
\bar{c} = -\frac{1}{2n^2}\sum\limits_{i=1}^n \|\bx_i-\bar\bx\|^2_{A(\bx^*)} + c_1 + c_2 + c_3 + c_4,
\end{equation*}
and $\EE c_1^2=O(\frac{1}{n^4})$, $\EE c_2^2=O(\frac{1}{n^4})$, $\EE c_3^2=O(\frac{1}{n^4})$, $\EE c_4^2=O(\frac{1}{n^3})$, we have
{\small
\begin{align}
\EE \nabla F(\bx^*)^T(\bar\bx-\bx^*)\bar{c} & = -\frac{1}{2n^2}\EE \nabla F(\bx^*)^T(\bar\bx-\bx^*)\sum\limits_{i=1}^n \|\bx_i-\bar\bx\|^2_{A(\bx^*)} + \EE \nabla F(\bx^*)^T(\bar\bx-\bx^*)c_4 \nonumber\\
&\ \ \ +\EE \nabla F(\bx^*)^T(\bar\bx-\bx^*)(c_1+c_2+c_3) \nonumber\\
& \leq -\frac{1}{2n^2}\EE \nabla F(\bx^*)^T(\bar\bx-\bx^*)\sum\limits_{i=1}^n \|\bx_i-\bar\bx\|^2_{A(\bx^*)} + \EE \nabla F(\bx^*)^T(\bar\bx-\bx^*)c_4 \nonumber\\
&\ \ \ + \|\nabla F(\bx^*)\|\sqrt{\EE\|\bar\bx-\bx^*\|^2\EE(c_1+c_2+c_3)^2} \nonumber\\
& = -\frac{1}{2n^2}\EE \nabla F(\bx^*)^T(\bar\bx-\bx^*)\sum\limits_{i=1}^n \|\bx_i-\bar\bx\|^2_{A(\bx^*)} + \EE \nabla F(\bx^*)^T(\bar\bx-\bx^*)c_4 + O(\frac{1}{n^{2.5}}), \label{eqn:cbar_Fdiff_J11}
\end{align}}
and
\begin{align}
\EE \frac{1}{2}\|\bar\bx-\bx^*\|^2_{H(\bx^*)}\bar{c} &= -\frac{1}{4n^2}\EE\|\bar\bx-\bx^*\|^2_{H(\bx^*)}\sum\limits_{i=1}^n \|\bx_i-\bar\bx\|^2_{A(\bx^*)} + \frac{1}{2}\EE\|\bar\bx-\bx^*\|^2_{H(\bx^*)}\sum\limits_{i=1}^4 c_i \nonumber\\
& \leq -\frac{1}{4n^2}\EE\|\bar\bx-\bx^*\|^2_{H(\bx^*)}\sum\limits_{i=1}^n \|\bx_i-\bar\bx\|^2_{A(\bx^*)} \nonumber\\
&\ \ \ + \frac{1}{2}\|H(\bx^*)\|\sqrt{\EE\|\bar\bx-\bx^*\|^4\EE(c_1+c_2+c_3+c_4)^2} \nonumber\\
& = -\frac{1}{4n^2}\EE\|\bar\bx-\bx^*\|^2_{H(\bx^*)}\sum\limits_{i=1}^n \|\bx_i-\bar\bx\|^2_{A(\bx^*)} + O(\frac{1}{n^{2.5}}).\label{eqn:cbar_Fdiff_J12}
\end{align}
Substituting~\eqref{eqn:cbar_Fdiff_J11} and~\eqref{eqn:cbar_Fdiff_J12} into~\eqref{eqn:cbar_Fdiff_1}, we have 
\begin{align}
\EE \bar{c}(F(\bar\bx)-F(\bx^*)) & = -\frac{1}{2n^2}\EE \nabla F(\bx^*)^T(\bar\bx-\bx^*)\sum\limits_{i=1}^n \|\bx_i-\bar\bx\|^2_{A(\bx^*)} + \EE \nabla F(\bx^*)^T(\bar\bx-\bx^*)c_4 \nonumber\\
&\ \ \ -\frac{1}{4n^2}\EE\|\bar\bx-\bx^*\|^2_{H(\bx^*)}\sum\limits_{i=1}^n \|\bx_i-\bar\bx\|^2_{A(\bx^*)} + O(\frac{1}{n^{2.5}}). \label{eqn:cbar_Fdiff_2}
\end{align}

For the first and the third terms in~\eqref{eqn:cbar_Fdiff_2}, note that $H(\bx^*)$ and $A(\bx^*)$ are fixed matrices. Following the proof in Section~\ref{sssec:cbar_Fdiff_shift}, we have
\begin{align}
-\frac{1}{2n^2}\EE \nabla F(\bx^*)^T(\bar\bx-\bx^*)\sum\limits_{i=1}^n \|\bx_i-\bar\bx\|^2_{A(\bx^*)} & = -\frac{1}{2n^2}\partial_a F(\bx^*)A(\bx^*)_{bc}(M_3)^{abc} + O(\frac{1}{n^{2.5}}) \nonumber\\
& = -\frac{1}{2n^2}\left(\sigma_3 + \frac{2\sigma'_3}{F(\bx^*)}\right) + O(\frac{1}{n^{2.5}}), \label{eqn:cbar_Fdiff_term1}
\end{align}
and
\begin{align}
-\frac{1}{4n^2}\EE\|\bar\bx-\bx^*\|^2_{H(\bx^*)}\sum\limits_{i=1}^n \|\bx_i-\bar\bx\|^2_{A(\bx^*)} &= -\frac{\tr(M_2H(\bx^*))\tr(M_2A(\bx^*))}{4n^2} + O(\frac{1}{n^{2.5}}) \nonumber\\
= & -\frac{1}{4n^2}\left(\sigma_2^2 + \frac{2\sigma_1\sigma_2}{F(\bx^*)}\right) + O(\frac{1}{n^{2.5}}).\label{eqn:cbar_Fdiff_term3}
\end{align}
For the second term in~\eqref{eqn:cbar_Fdiff_2}, we define a tensor $B(\bx^*)\in\RR^{d\times d\times d}$ as 
\begin{equation*}
B_{abc}(\bx^*) := \partial_{abc}^3 F(\bx^*) + \frac{4\partial_a F(\bx^*)\partial^2_{bc} F(\bx^*)}{F(\bx^*)} - \frac{2\partial_a F(\bx^*)\partial_b F(\bx^*)\partial_c F(\bx^*)}{F(\bx^*)^2}. 
\end{equation*}
Then, by the Taylor expansion for $A$ \eqref{eqn:taylor_A}, we have 
\begin{equation*}
A(\bar\bx) - A(\bx^*) = B_{abc}(\bx^*)(\bar{x}^c-(x^*)^c) + O(\|\bar\bx-\bx^*\|^2). 
\end{equation*}
Therefore, $c_4$ can be written as 
\begin{equation*}
c_4 = -\frac{1}{2n^2}\sum\limits_{i=1}^n B_{abc}(\bx^*)(x_i^a-\bar{x}^a) (x_i^b-\bar{x}^b)(\bar{x}^c-(x^*)^c) + \frac{1}{2n^2}\sum\limits_{i=1}^n\|\bx_i-\bar\bx\|^2O(\|\bar\bx-\bx^*\|^2),
\end{equation*}
and we have 
\begin{equation}
\small
\EE\nabla F(\bx^*)^T(\bar\bx-\bx^*)c_4 = -\frac{1}{2n^2}\nabla F(\bx^*)(\bar\bx-\bx^*)\sum\limits_{i=1}^n B_{abc}(\bx^*)(x_i^a-\bar{x}^a) (x_i^b-\bar{x}^b)(\bar{x}^c-(x^*)^c) + O(\frac{1}{n^{2.5}}).
\end{equation}
Note that $B(\bx^*)$ is a fixed tensor. Repeating the analysis that gives~\eqref{eqn:J_est_J1_c1} with $B(\bx^*)$ replacing $\nabla^3F(\bx^*)$, we have 
\begin{align}
\EE\nabla F(\bx^*)^T(\bar\bx-\bx^*)c_4 &\leq -\frac{1}{2n^2}\partial_a F(\bx^*)(M_2)^{ab}B(\bx^*)_{bcd}(M_2)^{cd} + O(\frac{1}{n^{2.5}}) \nonumber\\
& = -\frac{1}{2n^2}\left(\sigma_4 + \frac{4\sigma_1\sigma_2}{F(\bx^*)}-\frac{2\sigma_1^2}{F(\bx^*)^2}\right) + O(\frac{1}{n^{2.5}}).
\label{eqn:cbar_Fdiff_term2}
\end{align}

Finally, substituting~\eqref{eqn:cbar_Fdiff_term1}, \eqref{eqn:cbar_Fdiff_term3}, and~\eqref{eqn:cbar_Fdiff_term2} into~\eqref{eqn:cbar_Fdiff_2}, we have
\begin{align}
\EE \bar{c}(F(\bar\bx)-F(\bx^*)) &\leq \frac{1}{4n^2}\left(-\sigma_2^2 - 2\sigma_3 - 2\sigma_4 - \frac{4\sigma'_3}{F(\bx^*)} - \frac{10\sigma_1\sigma_2}{F(\bx^*)} + \frac{4\sigma_1^2}{F(\bx^*)^2}\right) + O(\frac{1}{n^{2.5}}).
\label{eqn:est_cbar_Fdiff}
\end{align}

\subsubsection{Putting together}
Finally, collecting the estimates~\eqref{eqn:est_delta2_2}, \eqref{eqn:est_delta1_delta2}, \eqref{eqn:delta2_F}, \eqref{eqn:est_delta1_2}, \eqref{eqn:est_cbar2}, \eqref{eqn:est_cbar_delta2}, \eqref{eqn:est_cbar_Fdiff}, we have 
\begin{equation}
\EE \hc^2+2\hc(F(\bar\bx)-F(\bx^*)) = \frac{1}{n^2}\left(-\frac{\sigma_2^2}{4} + \frac{\sigma_1}{C_K} - \sigma_3 - \sigma_4 - \frac{2\sigma'_3}{F} - \frac{4\sigma_1\sigma_2}{F(\bx^*)} + \frac{3\sigma_1^2}{F(\bx^*)^2}\right) + O(\frac{1}{n^{2.5}}). 
\end{equation}
By the condition~\eqref{eqn:thm_cond_scaling}, the coefficients for $\frac{1}{n^2}$ is negative. Therefore, when $n$ is sufficiently large, we have 
\begin{equation}
    \EE(\hc^2+2(F(\bar\bx)-F(\bx^*))\hc) < 0.
\end{equation}
This completes the proof.

\subsection{Lemmas}
In this subsection we provide lemmas used in the proof of Theorem~\ref{thm:main_scaling}. Note that we will also make use of the lemmas in Section~\ref{sec:pf}. 

The first two lemmas are counterparts of Lemma~\ref{lm:moments2} and~\ref{lm:moments3} under the new assumption~\ref{assump:mu_2}. The proof is similar to the proof of those Lemmas.
\begin{lemma}\label{lm:moments2_2}
Let $\mu$ be a probability distribution on $\RR^d$ that satisfies Assumption~\ref{assump:mu_2}, and $\bx, \bx_1, \bx_2, ..., \bx_n$ are i.i.d sampled from $\mu$. Let $\bx^*=\EE_\mu \bx$, and $\bar\bx=\frac{1}{n}\sum_{i=1}^n \bx_i$. Then, for any positive integer $k$, we have
\begin{equation*}
\EE\|\bx-\bx^*\|^k=O(1),\quad \EE \|\bx_i-\bar\bx\|^k=O(1),\quad \EE\|\bar\bx-\bx^*\|^k=O(\frac{1}{n^{k/2}}).
\end{equation*}
\end{lemma}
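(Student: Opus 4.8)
The plan is to mirror the proof of Lemma~\ref{lm:moments2} verbatim, with the single change that the order $k$ is now allowed to be an arbitrary positive integer. This is legitimate precisely because Assumption~\ref{assump:mu_2} supplies finite moments of \emph{all} orders, so the hypothesis of Lemma~\ref{lm:moments1} (finiteness of the relevant moments of the summand) is automatically met no matter how large $k$ is. Thus the three claims reduce, respectively, to (i) a direct moment bound under Assumption~\ref{assump:mu_2}, (ii) a coordinate-wise application of Lemma~\ref{lm:moments1}, and (iii) a triangle-inequality combination of (i) and (ii).

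\textbf{Key steps.} First I would prove $\EE\|\bx-\bx^*\|^k=O(1)$. Writing the $\ell_2$ norm as $\|\bx-\bx^*\|^2=\sum_{j=1}^d (x^j-(x^*)^j)^2$ and choosing an integer $m\ge k/2$, the power-mean (Jensen) inequality gives $\EE\|\bx-\bx^*\|^k\le\big(\EE\|\bx-\bx^*\|^{2m}\big)^{k/(2m)}$, and $\|\bx-\bx^*\|^{2m}$ expands as a polynomial of degree $2m$ in the centered coordinates, whose expectation is finite by Assumption~\ref{assump:mu_2}; hence the bound is $O(1)$. Second, for $\EE\|\bar\bx-\bx^*\|^k=O(n^{-k/2})$ I would use $\|\bar\bx-\bx^*\|^k\le d^{k/2}\max_j|\bar x^j-(x^*)^j|^k\le d^{k/2}\sum_{j=1}^d|\bar x^j-(x^*)^j|^k$; for each fixed $j$, the quantity $\bar x^j-(x^*)^j$ is the sample mean of the i.i.d.\ mean-zero real random variables $x_i^j-(x^*)^j$, which possess finite moments of all orders under Assumption~\ref{assump:mu_2}, so Lemma~\ref{lm:moments1} yields $\EE|\bar x^j-(x^*)^j|^k=O(n^{-k/2})$; summing over the $d$ coordinates preserves this rate. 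Third, for $\EE\|\bx_i-\bar\bx\|^k=O(1)$ I would apply the triangle inequality $\|\bx_i-\bar\bx\|\le\|\bx_i-\bx^*\|+\|\bar\bx-\bx^*\|$ together with $(a+b)^k\le 2^{k-1}(a^k+b^k)$ for $a,b\ge 0$, giving $\EE\|\bx_i-\bar\bx\|^k\le 2^{k-1}\big(\EE\|\bx_i-\bx^*\|^k+\EE\|\bar\bx-\bx^*\|^k\big)=O(1)+O(n^{-k/2})=O(1)$ by the first two bounds.

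\textbf{Main obstacle.} There is no real obstacle: the argument is entirely routine once one observes that Lemma~\ref{lm:moments1} applies at arbitrary order here. The only two points requiring a word of care are checking that the moment hypothesis of Lemma~\ref{lm:moments1} is satisfied for every $k$ (immediate from Assumption~\ref{assump:mu_2}) and handling odd values of $k$ (already built into Lemma~\ref{lm:moments1} via its Cauchy--Schwarz step, and handled above by passing to an even exponent $2m\ge k$). I would therefore keep the write-up brief, noting that it is identical in structure to the proof of Lemma~\ref{lm:moments2} apart from dropping the restriction $k\le 8$.
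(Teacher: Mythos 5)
Your proposal is correct and follows essentially the same route as the paper, which proves the $k\le 8$ version (Lemma~\ref{lm:moments2}) by exactly this three-part argument---direct moment bound, Lemma~\ref{lm:moments1} for the sample mean, and the triangle inequality with $(a+b)^k\le 2^{k-1}(a^k+b^k)$---and then asserts the all-orders version under Assumption~\ref{assump:mu_2} by the identical reasoning. Your explicit coordinate-wise reduction to the scalar Lemma~\ref{lm:moments1} is a reasonable way to make precise a step the paper leaves implicit.
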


\begin{lemma}\label{lm:moments3_2}
Let $\mu$ be a probability distribution on $\RR^d$ that satisfies Assumption~\ref{assump:mu_2}, and $\nu$ is the uniform ditribution on $[n]$. Consider $\bx, \bx_1, \bx_2, ..., \bx_n$ i.i.d sampled from $\mu$, and $k_1,...,k_n$ i.i.d sampled from $\nu$. Let $\bar\bx=\frac{1}{n}\sum_{i=1}^n \bx_i$, and $\tilde\bx=\frac{1}{n}\sum_{i=1}^n \bx_{k_i}$. Then, for any positive integer $k$, we have
\begin{equation*}
    \EE \|\tilde\bx-\bar{\bx}\|^k = O(\frac{1}{n^{k/2}}),
\end{equation*}
where the expectation is taken on both $\mu$ and $\nu$.
\end{lemma}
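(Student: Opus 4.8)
The plan is to reuse the proof of Lemma~\ref{lm:moments3} almost word for word; the single new ingredient is that Assumption~\ref{assump:mu_2} furnishes finite moments of \emph{every} order, so the artificial ceiling $k\le 8$ present there can be dropped and the argument runs for an arbitrary positive integer $k$.

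First I would dispose of even $l$. Fixing $\bx_1,\dots,\bx_n$ and using only the randomness of $\nu$, write $\tilde\bx-\bar\bx=\frac1n\sum_{i=1}^n(\bx_{k_i}-\bar\bx)$, where the summands are i.i.d.\ under $\nu$ with $\EE_\nu(\bx_{k_i}-\bar\bx)=0$. Expanding $\|\tilde\bx-\bar\bx\|^l$ and organizing terms by the partition $(r_1\ge\dots\ge r_p)$ that records how the $l$ slots are distributed among the distinct indices --- exactly as in the proof of Lemma~\ref{lm:moments1}, with cross terms bounded by Cauchy--Schwarz --- every term in which some index occurs exactly once has vanishing $\EE_\nu$. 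Hence only partitions with $r_p\ge 2$ contribute, which forces $p\le l/2$, so the number of admissible index tuples is $O(n^{l/2})$ against the $n^l$ in the denominator:
\begin{equation*}
\EE_\nu\|\tilde\bx-\bar\bx\|^l \le \frac{1}{n^l}\sum_{\substack{(r_1,\dots,r_p)\in\cS_l\\ r_p\ge 2}} C_{(r_1,\dots,r_p)}\sum_{\substack{i_1,\dots,i_p=1\\ i_1\neq\dots\neq i_p}}^n \prod_{j=1}^p\EE_\nu\|\bx_{k_{i_j}}-\bar\bx\|^{r_j},
\end{equation*}
where $|\cS_l|$ and the coefficients $C_{(r_1,\dots,r_p)}$ depend only on $l$, not on $n$.

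Next I would take $\EE_\mu$ of this bound. Each surviving tuple contributes $\EE_\mu\prod_{j=1}^p\EE_\nu\|\bx_{k_{i_j}}-\bar\bx\|^{r_j}$, which I would estimate exactly as in the proof of Lemma~\ref{lm:moments3}: rewrite $\EE_\nu\|\bx_{k_{i_j}}-\bar\bx\|^{r_j}=\frac1n\sum_i\|\bx_i-\bar\bx\|^{r_j}$, apply H\"older's inequality with exponents $l/r_j$ (legitimate since $\sum_j r_j=l$), and invoke Lemma~\ref{lm:moments2_2} --- the all-orders counterpart of Lemma~\ref{lm:moments2} --- to conclude each factor is $O(1)$ uniformly in $n$. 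Combining the $n^{-l}$ prefactor, the $O(n^{l/2})$ count of tuples, and these $O(1)$ products gives $\EE\|\tilde\bx-\bar\bx\|^l=O(n^{-l/2})$ for every even $l$. For odd $l$ I would close the argument with Cauchy--Schwarz, $\EE\|\tilde\bx-\bar\bx\|^l\le\sqrt{\EE\|\tilde\bx-\bar\bx\|^{l+1}\,\EE\|\tilde\bx-\bar\bx\|^{l-1}}$, reducing to the two neighbouring even exponents.

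There is no genuine obstacle here --- this is a routine upgrade of Lemma~\ref{lm:moments3}. The only points deserving care are bookkeeping ones: verifying that the combinatorial multiplicities ($|\cS_l|$, the multinomial-type coefficients $C_{(r_1,\dots,r_p)}$, and the number of partitions of $l$ with all parts $\ge 2$) are functions of $l$ alone and never of $n$; and checking that the H\"older split is taken with the right exponents so that each factor is a genuine $l$-th moment controlled by Lemma~\ref{lm:moments2_2}. Both steps are literally as in the $k\le 8$ case, with Assumption~\ref{assump:mu_2} merely removing the restriction on the order.
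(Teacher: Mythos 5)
Your proposal is correct and is exactly the route the paper intends: the paper gives no separate argument for this lemma, stating only that the proof mirrors those of Lemmas~\ref{lm:moments2} and~\ref{lm:moments3}, with Assumption~\ref{assump:mu_2} supplying moments of all orders so the $k\le 8$ cap disappears and Lemma~\ref{lm:moments2_2} replaces Lemma~\ref{lm:moments2}. Your partition/H\"older bookkeeping and the Cauchy--Schwarz reduction for odd $l$ reproduce that argument faithfully.
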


The next two lemmas characterize the moments of $s_1$, $s_2$, and $F(\bar\bx)$, $F(\tilde\bx)$. The $s_1$ and $s_2$ are defined in~\eqref{eqn:s_def}.
\begin{lemma}\label{lm:s}
Under the same assumptions in Theorem~\ref{thm:main_scaling}, for any positive integer $l$, we have 
\begin{equation*}
    \EE |s_1|^l \leq O(\frac{1}{(Kn)^{\frac{l}{2}}}),\ \textrm{and}\ \ \EE |s_2|^l \leq O(\frac{1}{(Kn)^{\frac{l}{2}}})
\end{equation*}
\end{lemma}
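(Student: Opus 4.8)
The plan is to condition on the observations $\cX=\{\bx_1,\dots,\bx_n\}$ and exploit the fact that, given $\cX$, each of $s_1$ and $s_2$ is $\tfrac1K$ times a sum of $K$ i.i.d.\ centered random variables: $s_1=\tfrac1K\sum_{k=1}^K Y_k$ with $Y_k:=F(\tilde\bx_k)-\EE_{\tilde\bx}F(\tilde\bx)$, and $s_2=\tfrac1K\sum_{k=1}^K Z_k$ with $Z_k:=F(\tilde\bx_k)^2-\EE_{\tilde\bx}F(\tilde\bx)^2$. First I would apply a Rosenthal-type (equivalently, Marcinkiewicz--Zygmund-type) moment inequality: for fixed integer $l\ge2$ there is a constant $C_l$ with $\EE_{\tilde\bx}\big|\sum_k Y_k\big|^l\le C_l\big(K^{l/2}(\EE_{\tilde\bx}Y_1^2)^{l/2}+K\,\EE_{\tilde\bx}|Y_1|^l\big)$. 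Dividing by $K^l$, using $K^{1-l}\le K^{-l/2}$ for $l\ge2$ and the power-mean inequality $(\EE_{\tilde\bx}Y_1^2)^{l/2}\le\EE_{\tilde\bx}|Y_1|^l$, this collapses to $\EE_{\tilde\bx}|s_1|^l\le C_l' K^{-l/2}\,\EE_{\tilde\bx}|Y_1|^l$, and identically $\EE_{\tilde\bx}|s_2|^l\le C_l' K^{-l/2}\,\EE_{\tilde\bx}|Z_1|^l$.

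Next I would replace the conditional central moments of $F(\tilde\bx)$ and $F(\tilde\bx)^2$ by their moments around the (conditionally deterministic) reference value $F(\bar\bx)$: since $\EE|W-\EE W|^l\le 2^l\,\EE|W-a|^l$ for any constant $a$, we get $\EE_{\tilde\bx}|Y_1|^l\le 2^l\,\EE_{\tilde\bx}|F(\tilde\bx)-F(\bar\bx)|^l$ and $\EE_{\tilde\bx}|Z_1|^l\le 2^l\,\EE_{\tilde\bx}|F(\tilde\bx)^2-F(\bar\bx)^2|^l$. Taking full expectations and using the tower property, it then suffices to show $\EE|F(\tilde\bx)-F(\bar\bx)|^l=O(n^{-l/2})$ and $\EE|F(\tilde\bx)^2-F(\bar\bx)^2|^l=O(n^{-l/2})$. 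The first follows from the Taylor expansion~\eqref{eqn:taylor} of $F$ at $\bar\bx$, which bounds $|F(\tilde\bx)-F(\bar\bx)|$ by $\|\nabla F(\bar\bx)\|\,\|\tilde\bx-\bar\bx\|+C\|H(\bar\bx)\|\,\|\tilde\bx-\bar\bx\|^2+\cdots$; raising to the $l$-th power, taking expectations, and applying H\"older together with Lemma~\ref{lm:moments3_2} ($\EE\|\tilde\bx-\bar\bx\|^k=O(n^{-k/2})$) and Lemma~\ref{lm:moments2_2}/Lemma~\ref{lm:F_moments} (the coefficients $\nabla F(\bar\bx),H(\bar\bx),\dots$ grow polynomially in $\|\bar\bx-\bx^*\|$ and hence have bounded moments of every order) gives $O(n^{-l/2})$, the quadratic and higher Taylor terms contributing only $O(n^{-l})$ or smaller. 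For the second quantity I would write $F(\tilde\bx)^2-F(\bar\bx)^2=(F(\tilde\bx)-F(\bar\bx))(F(\tilde\bx)+F(\bar\bx))$ and apply H\"older, combining $\EE|F(\tilde\bx)-F(\bar\bx)|^{2l}=O(n^{-l})$ with $\EE|F(\tilde\bx)+F(\bar\bx)|^{2l}=O(1)$ (Lemma~\ref{lm:F_moments}). Putting these into the Step-1 bound yields $\EE|s_1|^l=O((Kn)^{-l/2})$ and $\EE|s_2|^l=O((Kn)^{-l/2})$ for $l\ge2$; the case $l=1$ follows at once from the $l=2$ bound and Jensen's inequality $\EE|s_j|\le(\EE s_j^2)^{1/2}$.

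I do not expect a conceptual obstacle here; the delicate part is the bookkeeping when integrating out $\cX$, namely ensuring that the polynomially-growing Taylor coefficients of $F$ evaluated at the random point $\bar\bx$ remain $L^p$-bounded for every $p$. This is precisely where Assumption~\ref{assump:mu_2} (finite moments of all orders) is needed, and it is what makes packaging the moment control into Lemmas~\ref{lm:moments2_2},~\ref{lm:moments3_2}, and~\ref{lm:F_moments} worthwhile. Two minor care points: the Rosenthal constant depends on $l$ but $l$ is fixed, so this is harmless; and the reductions $K^{1-l}\le K^{-l/2}$ and $(\EE Y_1^2)^{l/2}\le\EE|Y_1|^l$ both require $l\ge2$, which is why $l=1$ is dispatched separately.
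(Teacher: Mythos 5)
Your proposal is correct and follows essentially the same route as the paper: condition on the observations, extract the $K^{-l/2}$ factor from the $l$-th moment of the centered i.i.d.\ average over bootstrap draws, reduce the conditional central moment to $\EE|F(\tilde\bx)-F(\bar\bx)|^l$, and control that by Taylor expansion together with Lemmas~\ref{lm:moments2_2} and~\ref{lm:moments3_2} to get the $n^{-l/2}$ factor. The only cosmetic difference is that you invoke Rosenthal's inequality where the paper normalizes $Y_k$ and reuses its own Lemma~\ref{lm:moments1}, and you spell out the $s_2$ factorization and the $l=1$ case, which the paper leaves as ``similar''; both are fine.
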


\begin{proof}
First, we consider $s_1$. Without loss of generality, assume $l$ is an even number. By the law of total expectation, we have 
\begin{equation}\label{eqn:lm:s:pf0}
\EE |s_1|^l = \EE\EE_{\tilde\bx}\left|\frac{1}{K}\sum\limits_{k=1}^K F(\tilde{\bx}_k)-\EE_{\tilde{\bx}}F(\tilde\bx)\right|^l.
\end{equation}
For $k=1,2,...,K$, let $Y_k=F(\tilde{\bx}_k)-\EE_{\tilde\bx}F(\tilde\bx)$, and $Y$ be an i.i.d. copy of $Y_k$. Then, $\EE_{\tilde\bx}Y = 0$. Since $\bx_1,\cdots,\bx_n$ are finite, we have $\EE_{\tilde\bx}|Y|^l < \infty$. Let $\hat{Y}=\frac{Y}{\big(\EE_{\tilde\bx}|Y|^l\big)^{1/l}}$ and $\hat{Y}_k=\frac{Y_k}{\big(\EE_{\tilde\bx}|Y|^l\big)^{1/l}}$. Then, by the H\"{o}lder's inequality, for any $r\leq l$, we have $\EE_{\tilde{\bx}} |\hat{Y}|^r\leq 1$. Therefore, applying Lemma~\ref{lm:moments1}, we have 
\begin{equation*}
    \EE_{\tilde\bx}\left|\frac{1}{K}\sum\limits_{k=1}^K \hat{Y}_k\right|^l \leq O(\frac{1}{K^{l/2}})\EE_{\tilde\bx}|\hat{Y}|^l \leq O(\frac{1}{K^{l/2}}),
\end{equation*}
which implies 
\begin{equation*}
\EE_{\tilde\bx}\left|\frac{1}{K}\sum\limits_{k=1}^K Y_k\right|^l \leq O(\frac{1}{K^{l/2}}) \EE_{\tilde\bx}|Y|^l. 
\end{equation*}
Back to~\eqref{eqn:lm:s:pf0}, we have 
\begin{equation}\label{eqn:lm:s:pf1}
\EE|s_1|^l \leq O(\frac{1}{K^{l/2}})\EE\EE_{\tilde\bx}\left|F(\tilde\bx)-\EE_{\tilde{\bx}}F(\tilde\bx)\right|^l.
\end{equation}

By H\"{o}lder's inequality and Jensen's inequality, we have 
\begin{align*}
\left|F(\tilde\bx)-\EE_{\tilde{\bx}}F(\tilde\bx)\right|^l & \leq 2^{l-1} \left(|F(\tilde\bx)-F(\bar\bx)|^l+|\EE_{\tilde\bx}F(\tilde\bx)-F(\bar\bx)|^l\right) \\
& \leq 2^{l-1} \left(|F(\tilde\bx)-F(\bar\bx)|^l+\EE_{\tilde\bx}|F(\tilde\bx)-F(\bar\bx)|^l\right).
\end{align*}
Hence, 
\begin{equation*}
    \EE\EE_{\tilde\bx}\left|F(\tilde\bx)-\EE_{\tilde{\bx}}F(\tilde\bx)\right|^l \leq 2^{l} \EE|F(\tilde\bx)-F(\bar\bx)|^l
\end{equation*}
For $F(\tilde\bx)-F(\bar\bx)$, by Taylor expansion, we have
\begin{align*}
&F(\tilde\bx)-F(\bar\bx) \nonumber\\
= & \nabla F(\bar\bx)^T(\tilde\bx-\bar\bx) + \frac{1}{2}\|\tilde\bx-\bar\bx\|_{H(\bar\bx)} + \frac{1}{6}\partial_{abc}^3F(\bar\bx)(\tilde{x}^a-\bar{x}^a)(\tilde{x}^b-\bar{x}^b)(\tilde{x}^c-\bar{x}^c) + O(\|\tilde\bx-\bar\bx\|^4) \\
= & \nabla F(\bx^*)^T(\tilde\bx-\bar\bx) + (\bar\bx-\bx^*)^TH(\bx^*)(\tilde\bx-\bar\bx) + \frac{1}{2}\partial^3_{abc}F(\bx^*)(\bar{x}^a-(x^*)^a)(\bar{x}^b-(x^*)^b)(\tilde{x}^c-\bar{x}^c) \\
&+ O(\|\bar\bx-\bx^*\|^3\|\tilde\bx-\bar\bx\|) + \frac{1}{2}\|\tilde\bx-\bar\bx\|_{H(\bx^*)} + \frac{1}{2}\partial_{abc}^3F(\bx^*)(\bar{x}^a-(x^*)^a)(\tilde{x}^b-\bar{x}^b)(\tilde{x}^c-\bar{x}^c) \\
&+ O(\|\bar\bx-\bx^*\|^2\|\tilde\bx-\bar\bx\|^2) + \frac{1}{6}\partial_{abc}^3F(\bx^*)(\tilde{x}^a-\bar{x}^a)(\tilde{x}^b-\bar{x}^b)(\tilde{x}^c-\bar{x}^c) + O(\|\bar\bx-\bx^*\|\|\tilde\bx-\bar\bx\|^3) \\
&+ O(\|\tilde\bx-\bar\bx\|^4) \\
= & \hot(\|\bar\bx-\bx^*\|, \|\tilde\bx-\bar\bx\|; 1),
\end{align*}
in which the second equality is obtained by Taylor expansions of $\nabla F(\bar{bx})$, $H(\bar\bx)$, and $\nabla^3 F(\bar\bx)$ at $\bx^*$. 
Therefore, by Lemma~\ref{lm:moments2_2},
\begin{align}
\EE\EE_{\tilde\bx}\left|F(\tilde\bx)-\EE_{\tilde{\bx}}F(\tilde\bx)\right|^l & \leq 2^l\EE\big(\hot(\|\bar\bx-\bx^*\|, \|\tilde\bx-\bar\bx\|; 1)\big)^l \leq O(\frac{1}{n^{l/2}}).\label{eqn:lm:s:pf2}
\end{align}
Substituting~\eqref{eqn:lm:s:pf2} back to~\eqref{eqn:lm:s:pf1} completes the proof for $s_1$. The proof for $s_2$ is similar. 
\end{proof}

\begin{lemma}\label{lm:F_moments}
Under the assumptions of Theorem~\ref{thm:main_scaling}, for any positive integer $l$, we have 
\begin{equation*}
    \EE F(\bar{\bx})^l = O(1),\ \textrm{and}\ \ \EE F(\tilde{\bx})^l = O(1).
\end{equation*}
\end{lemma}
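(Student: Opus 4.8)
The plan is to extract from Assumption~\ref{assump:F} that $F$ has at most fourth-degree polynomial growth, and then to close the argument with the moment bounds of Lemmas~\ref{lm:moments2_2} and~\ref{lm:moments3_2}. First I would fix $\bx^*$ and apply Taylor's theorem with the integral remainder. Writing $L_4 := \sup_{\bx}\|\nabla^4 F(\bx)\| < \infty$ (this is the reading of Assumption~\ref{assump:F} that is in force throughout, since the $O(\|\cdot\|^4)$ remainders used everywhere rely on it), we get for every $\bx$
\[
F(\bx) = F(\bx^*) + \nabla F(\bx^*)^T(\bx-\bx^*) + \tfrac12\|\bx-\bx^*\|^2_{H(\bx^*)} + \tfrac16\partial^3_{abc}F(\bx^*)(x^a-(x^*)^a)(x^b-(x^*)^b)(x^c-(x^*)^c) + R(\bx),
\]
with $|R(\bx)| \le \tfrac{L_4}{24}\|\bx-\bx^*\|^4$. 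Bounding the linear, quadratic and cubic terms by the corresponding operator/tensor norms times powers of $\|\bx-\bx^*\|$ and using $t^k \le 1 + t^4$ for $t\ge 0$ and $k\in\{1,2,3\}$, I obtain a constant $C_F$ depending only on $F$ (through $F(\bx^*)$, $\nabla F(\bx^*)$, $H(\bx^*)$, $\nabla^3F(\bx^*)$ and $L_4$) such that $0 < F(\bx) \le C_F\bigl(1+\|\bx-\bx^*\|^4\bigr)$ for all $\bx$.

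Next I would raise this to the $l$-th power: since $F\ge B>0$, $F(\bx)^l = |F(\bx)|^l$, and $(a+b)^l \le 2^{l-1}(a^l+b^l)$ gives
\[
0 \le F(\bx)^l \le C_F^l\, 2^{l-1}\bigl(1 + \|\bx-\bx^*\|^{4l}\bigr).
\]
Taking $\bx=\bar\bx$ and expectations, $\EE F(\bar\bx)^l \le C_F^l\, 2^{l-1}\bigl(1 + \EE\|\bar\bx-\bx^*\|^{4l}\bigr)$, and Lemma~\ref{lm:moments2_2} gives $\EE\|\bar\bx-\bx^*\|^{4l} = O(n^{-2l}) = O(1)$, hence $\EE F(\bar\bx)^l = O(1)$. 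For $F(\tilde\bx)$ I would use $\|\tilde\bx-\bx^*\| \le \|\tilde\bx-\bar\bx\| + \|\bar\bx-\bx^*\|$, so that $\|\tilde\bx-\bx^*\|^{4l} \le 2^{4l-1}\bigl(\|\tilde\bx-\bar\bx\|^{4l} + \|\bar\bx-\bx^*\|^{4l}\bigr)$; then Lemma~\ref{lm:moments3_2} bounds $\EE\|\tilde\bx-\bar\bx\|^{4l} = O(n^{-2l})$ and Lemma~\ref{lm:moments2_2} bounds the other term, giving $\EE F(\tilde\bx)^l = O(1)$.

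The only genuinely delicate point is the first step — pinning down the polynomial envelope for $F$ from Assumption~\ref{assump:F}; once $F(\bx) \le C_F(1+\|\bx-\bx^*\|^4)$ is available, the rest is a direct substitution into the already-established moment estimates. (One should note that some growth control on $F$ is unavoidable here: without it the statement fails, e.g. for convex $F$ growing exponentially and $\mu$ lacking finite exponential moments — so the bounded-fourth-derivative hypothesis is exactly what is doing the work.)
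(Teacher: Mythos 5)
Your argument is correct and follows essentially the same route as the paper: a Taylor expansion of $F$ at $\bx^*$ whose remainder is controlled by the finite fourth derivative from Assumption~\ref{assump:F}, followed by the moment bounds of Lemmas~\ref{lm:moments2_2} and~\ref{lm:moments3_2} applied to $\|\bar\bx-\bx^*\|$ and $\|\tilde\bx-\bar\bx\|$. The only (immaterial) difference is that you bound $F$ by a quartic envelope $C_F(1+\|\bx-\bx^*\|^4)$ and conclude $O(1)$ directly, whereas the paper keeps the sharper decomposition $\EE F(\bar\bx)^l = F(\bx^*)^l + O(n^{-1/2})$; your explicit remark that the uniform bound on $\nabla^4 F$ is what makes the envelope global is a welcome clarification of the assumption.
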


\begin{proof}
For $F(\bar\bx)$, by the Taylor expansion at $\bx^*$, we have 
\begin{equation*}
F(\bar\bx) = F(\bx^*) + \hot(\|\bar\bx-\bx^*\|;1).
\end{equation*}
Therefore, 
\begin{equation*}
F(\bar\bx)^l = (F(\bx^*) + \hot(\|\bar\bx-\bx^*\|;1))^l = F(\bx^*)^l + \hot(\|\bar\bx-\bx^*\|;1).
\end{equation*}
Taking expectation, we have 
\begin{equation*}
\EE F(\bar\bx)^l = F(\bx^*)^l + \EE\big(\hot(\|\bar\bx-\bx^*\|;1)\big) = F(\bx^*)^l + O(\frac{1}{\sqrt{n}}) = O(1).
\end{equation*}

For $F(\tilde\bx)$, still expanding $F$ at $\bx^*$, we have
\begin{equation*}
F(\tilde\bx)^l = F(\bx^*)^l + \hot(\|\tilde\bx-\bx^*\|;1).
\end{equation*}
By Lemma~\ref{lm:moments2_2} and~\ref{lm:moments3}, we have 
\begin{equation*}
\EE \|\tilde\bx-\bx^*\|^l \leq 2^{l-1}\big(\EE\|\tilde\bx-\bar\bx\|^l+\EE\|\bar\bx-\bx^*\|^l\big) = O(\frac{1}{n^{l/2}}). 
\end{equation*}
Hence, we still have 
\begin{equation*}
    \EE\big(\hot(\|\tilde\bx-\bx^*\|;1)\big) = O(\frac{1}{\sqrt{n}}),
\end{equation*}
which completes the proof.
\end{proof}

\bibliography{ref}

\begin{thebibliography}{10}

\bibitem{acharya2016estimating}
Jayadev Acharya, Alon Orlitsky, Ananda~Theertha Suresh, and Himanshu Tyagi.
\newblock Estimating r{\'e}nyi entropy of discrete distributions.
\newblock {\em IEEE Transactions on Information Theory}, 63(1):38--56, 2016.

\bibitem{antos2001convergence}
Andr{\'a}s Antos and Ioannis Kontoyiannis.
\newblock Convergence properties of functional estimates for discrete
  distributions.
\newblock {\em Random Structures \& Algorithms}, 19(3-4):163--193, 2001.

\bibitem{boyd2004convex}
Stephen Boyd, Stephen~P Boyd, and Lieven Vandenberghe.
\newblock {\em Convex optimization}.
\newblock Cambridge university press, 2004.

\bibitem{bu2018estimation}
Yuheng Bu, Shaofeng Zou, Yingbin Liang, and Venugopal~V Veeravalli.
\newblock Estimation of kl divergence: Optimal minimax rate.
\newblock {\em IEEE Transactions on Information Theory}, 64(4):2648--2674,
  2018.

\bibitem{efron1992bootstrap}
Bradley Efron.
\newblock Bootstrap methods: another look at the jackknife.
\newblock In {\em Breakthroughs in statistics}, pages 569--593. Springer, 1992.

\bibitem{etter2021operator}
Philip Etter and Lexing Ying.
\newblock Operator shifting for general noisy matrix systems.
\newblock {\em arXiv e-prints}, pages arXiv--2104, 2021.

\bibitem{etter2020operator}
Philip~A Etter and Lexing Ying.
\newblock Operator shifting for noisy elliptic systems.
\newblock {\em arXiv preprint arXiv:2010.09656}, 2020.

\bibitem{gao2015efficient}
Shuyang Gao, Greg Ver~Steeg, and Aram Galstyan.
\newblock Efficient estimation of mutual information for strongly dependent
  variables.
\newblock In {\em Artificial intelligence and statistics}, pages 277--286.
  PMLR, 2015.

\bibitem{grassberger2003entropy}
Peter Grassberger.
\newblock Entropy estimates from insufficient samplings.
\newblock {\em arXiv preprint physics/0307138}, 2003.

\bibitem{han2020minimax}
Yanjun Han, Jiantao Jiao, and Tsachy Weissman.
\newblock Minimax estimation of divergences between discrete distributions.
\newblock {\em IEEE Journal on Selected Areas in Information Theory},
  1(3):814--823, 2020.

\bibitem{han2020optimal}
Yanjun Han, Jiantao Jiao, Tsachy Weissman, and Yihong Wu.
\newblock Optimal rates of entropy estimation over lipschitz balls.
\newblock {\em The Annals of Statistics}, 48(6):3228--3250, 2020.

\bibitem{james1992estimation}
William James and Charles Stein.
\newblock Estimation with quadratic loss.
\newblock In {\em Breakthroughs in statistics}, pages 443--460. Springer, 1992.

\bibitem{jiao2015minimax}
Jiantao Jiao, Kartik Venkat, Yanjun Han, and Tsachy Weissman.
\newblock Minimax estimation of functionals of discrete distributions.
\newblock {\em IEEE Transactions on Information Theory}, 61(5):2835--2885,
  2015.

\bibitem{kraskov2004estimating}
Alexander Kraskov, Harald St{\"o}gbauer, and Peter Grassberger.
\newblock Estimating mutual information.
\newblock {\em Physical review E}, 69(6):066138, 2004.

\bibitem{marek2008estimation}
Tom{\'a}{\v{s}} Marek, P~Tichavsky, G~Dohnal, et~al.
\newblock On the estimation of mutual information.
\newblock {\em J. A, Dohnal G, editors}, 2008.

\bibitem{miller1955note}
George Miller.
\newblock Note on the bias of information estimates.
\newblock {\em Information theory in psychology: Problems and methods}, 1955.

\bibitem{paninski2003estimation}
Liam Paninski.
\newblock Estimation of entropy and mutual information.
\newblock {\em Neural computation}, 15(6):1191--1253, 2003.

\bibitem{stein1956variate}
Charles Stein.
\newblock Variate normal distribution.
\newblock In {\em Proceedings of the Third Berkeley Symposium on Mathematical
  Statistics and Probability: Contributions to the Theory of Statistics},
  volume~1, page 197. University of California Press, 1956.

\bibitem{valiant2013estimating}
Paul Valiant and Gregory Valiant.
\newblock Estimating the unseen: improved estimators for entropy and other
  properties.
\newblock {\em Advances in Neural Information Processing Systems}, 26, 2013.

\bibitem{verdu2019empirical}
Sergio Verd{\'u}.
\newblock Empirical estimation of information measures: A literature guide.
\newblock {\em Entropy}, 21(8):720, 2019.

\bibitem{vu2007coverage}
Vincent~Q Vu, Bin Yu, and Robert~E Kass.
\newblock Coverage-adjusted entropy estimation.
\newblock {\em Statistics in medicine}, 26(21):4039--4060, 2007.

\bibitem{weed2019sharp}
Jonathan Weed and Francis Bach.
\newblock Sharp asymptotic and finite-sample rates of convergence of empirical
  measures in wasserstein distance.
\newblock {\em Bernoulli}, 25(4A):2620--2648, 2019.

\bibitem{wolpert1995estimating}
David~H Wolpert and David~R Wolf.
\newblock Estimating functions of probability distributions from a finite set
  of samples.
\newblock {\em Physical Review E}, 52(6):6841, 1995.

\bibitem{wu2016minimax}
Yihong Wu and Pengkun Yang.
\newblock Minimax rates of entropy estimation on large alphabets via best
  polynomial approximation.
\newblock {\em IEEE Transactions on Information Theory}, 62(6):3702--3720,
  2016.

\bibitem{zahl1977jackknifing}
Samuel Zahl.
\newblock Jackknifing an index of diversity.
\newblock {\em Ecology}, 58(4):907--913, 1977.

\end{thebibliography}
\bibliographystyle{plain}

\end{document}